\theoremstyle{definition}
\newtheorem*{theorem*}{Theorem}
\newtheorem{theorem}{Theorem}
\newtheorem{corollary}[theorem]{Corollary}
\newtheorem{lemma}[theorem]{Lemma}
\newtheorem*{prop*}{Proposition}
\newtheorem*{claim*}{Claim}
\newtheorem*{conjecture*}{Conjecture}
\patchcmd{\subequations}{}%
{}{}{}
\newcommand{\leqnomode}{\tagsleft@true}
\newcommand{\reqnomode}{\tagsleft@false}
\newcommand{\OP}[1]{\operatorname{#1}}
\newcommand{\MC}[1]{{\mathcal #1}}
\newcommand{\nonforbidden}{unsaturated\xspace}
\newcommand{\NonForbidden}{Unsaturated\xspace}
\newcommand{\NFPrszero}{\OP{NExtmPrs}} 
\long\def\longdelete#1{}
\title{\bf Handling LP-Rounding for Hierarchical Clustering and Fitting Distances by Ultrametrics\footnote{
This work was supported in part by the National Science and Technology Council (NSTC), Taiwan, under grants 112-2628-E-A49-017-MY3, 113-2628-E-A49-017-MY3, and 113-2634-F-A49-001-MBK.
This work was partly supported by Institute of Information \& communications Technology Planning \& Evaluation (IITP) grant funded by the Korea government (MSIT) (No. RS-2021-II212068, Artificial Intelligence Innovation Hub).
This work was partly supported by an IITP grant funded by the Korean Government (MSIT) (No. RS-2020-II201361, Artificial Intelligence Graduate School Program (Yonsei University)).
This work was supported by the National Research Foundation of Korea(NRF) grant funded by the Korea government(MSIT) (RS-2025-00563707).
}
}
\author[$\dagger$]{Hyung-Chan An}
\author[$\ddagger$]{Mong-Jen Kao}
\author[$\dagger$]{Changyeol Lee}
\author[$\ddagger$]{Mu-Ting Lee}
\affil[$\dagger$]{Department of Computer Science, Yonsei University, South Korea}
\affil[$\ddagger$]{Department of Computer Science, National Yang-Ming Chiao-Tung University, Taiwan}
\date{}
\begin{document}

\maketitle
\vspace{-2\baselineskip}
\thispagestyle{empty}

\begin{abstract}
We consider the classic correlation clustering problem in the hierarchical setting.
Given a complete graph $G=(V,E)$ and $\ell$ layers of input formation, where the input of each layer consists of a non-negative weight and a labeling of the edges with either $+$ or $-$, this problem seeks to compute for each layer a partition of $V$ such that the partition for any non-top layer subdivides the partition in the upper-layer and the weighted number of disagreements over the layers is minimized, where
the disagreement of a layer is the number of $+$ edges across parts plus the number of $-$ edges within parts.

\smallskip

Hierarchical correlation clustering is a natural formulation of the classic problem of fitting distances by ultrametrics, which is further known as numerical taxonomy~\cite{9140552a-517f-3ea0-a658-de90bd35ae63,SS62,SS63} in the literature.
While single-layer correlation clustering received wide attention since it was introduced in~\cite{DBLP:journals/ml/BansalBC04} and major progress evolved in the past three years~\cite{DBLP:conf/focs/Cohen-AddadLN22,DBLP:conf/focs/Cohen-AddadL0N23,10.1145/3618260.3649749,DBLP:conf/stoc/Cohen-AddadLPTY24}, few is known for this problem in the hierarchical setting~\cite{DBLP:journals/siamcomp/AilonC11,DBLP:journals/jacm/CohenAddadDKPT24}.
The lack of understanding and adequate tools is reflected in the large approximation ratio known for this problem, which originates from 2021.

\smallskip

In this work we make both conceptual and technical contributions towards the hierarchical clustering problem.
We present a simple paradigm that greatly facilitates LP-rounding in hierarchical clustering, illustrated with a delicate algorithm providing a significantly improved approximation guarantee of $25.7846$ for the hierarchical correlation clustering problem.

\smallskip

Our techniques reveal surprising new properties and advances the current understanding for the formulation presented and subsequently used in~\cite{DBLP:journals/siamcomp/AilonC11,DBLP:journals/jacm/CohenAddadDKPT24,DBLP:journals/siamcomp/CohenAddadFLM25,DBLP:conf/soda/CharikarG24} for hierarchical clustering over the past two decades.
This provides a unifying interpretation on the core-technical problem in hierarchical clustering as the problem of finding cuts with prescribed properties regarding the average distance of certain cut pairs.

\smallskip

We further illustrate this perspective by showing that a direct application of the paradigm and techniques presented in this work gives a simple alternative to the state-of-the-art result presented in~\cite{DBLP:conf/soda/CharikarG24} for the ultrametric violation distance problem.

\bigskip

\end{abstract}

{\small \textbf{Keywords:} %
hierarchical correlation clustering,
ultrametric embedding,
correlation clustering,}

{\small \phantom{\textbf{Keywords:} }%
linear programming rounding,
approximation algorithms}

\newpage

\thispagestyle{empty}

\tableofcontents

\newpage

\setcounter{page}{1}

\addtocontents{toc}{\protect\setcounter{tocdepth}{1}}

\section{Introduction}

Clustering is among the central problems in unsupervised machine learning and data mining.
For a given data set and information regarding pairwise similarity of the elements, the general objective is to come up with a partition of the elements into groups such that similar elements are clustered into the same group and dissimilar elements belong to different groups.

\smallskip

{\sc Correlation Clustering}, among various formulations introduced towards the aforementioned objective, has been one of the most successful model since its introduction by Bansal, Blum, and Chawla in~\cite{DBLP:journals/ml/BansalBC04}.
Given a complete graph $G=(V,E)$ and a labeling of the edges with either $+$ or $-$, the goal is to partition the vertices so as to minimize the number of disagreements between the partition computed and the input labels, namely, the number of $+$ edges clustered into different parts plus the number of $-$ edges clustered into the same part.
Due to the simplicity and modularity of this formulation, correlation clustering has found vast applications in practice, e.g., finding clustering ensembles~\cite{DBLP:journals/kais/BonchiGU13}, duplicate detection~\cite{DBLP:conf/icde/ArasuRS09}, community mining~\cite{DBLP:conf/nips/ChenSX12}, disambiguation tasks~\cite{DBLP:journals/tkde/KalashnikovCMN08}, automated labeling~\cite{DBLP:conf/wsdm/AgrawalHKMT09,DBLP:conf/www/ChakrabartiKP08}, and many more.

\smallskip

Various algorithms with an $O(1)$-approximation guarantee exist in the literature for the correlation clustering problem, including classic results in the early 2000s~\cite{DBLP:journals/ml/BansalBC04, DBLP:journals/jcss/CharikarGW05, DBLP:journals/jacm/AilonCN08}, the elegant $2.06$-approximation based on LP-rounding~\cite{DBLP:conf/stoc/ChawlaMSY15}, and recent breakthroughs {that} evolved in the past three years using the Sherali-Adams hierarchy~\cite{DBLP:conf/focs/Cohen-AddadLN22,DBLP:conf/focs/Cohen-AddadL0N23} and 
a strong formulation~\cite{10.1145/3618260.3649749,CCLLLNTVYZ25} known as cluster LP. 
Currently, the best approximation ratio is $1.437+\epsilon$, and $(24/23-\epsilon)$-approximation is NP-hard~\cite{10.1145/3618260.3649749} for any $\epsilon > 0$.

\smallskip

Motivated by the large number of applications in practice, efficient approximation algorithms based on combinatorial approaches have been introduced in the literature, including linear time algorithm~\cite{DBLP:journals/jacm/AilonCN08}, dynamic algorithms~\cite{DBLP:conf/focs/BehnezhadDHSS19}, results for distributed models~\cite{DBLP:conf/soda/CaoHS24,DBLP:conf/focs/BehnezhadCMT22, DBLP:conf/icml/Cohen-AddadLMNP21}, streaming models~\cite{DBLP:conf/soda/BehnezhadCMT23, DBLP:conf/nips/MakarychevC23, DBLP:conf/soda/CambusKLPU24, DBLP:conf/innovations/Assadi022, DBLP:conf/icml/Cohen-AddadLMNP21}, and very recent sublinear time algorithms~\cite{DBLP:conf/innovations/Assadi022, DBLP:conf/stoc/Cohen-AddadLPTY24,CCLLLNTVYZ25}.

\paragraph{Correlation Clustering in the Hierarchical Setting.}

In the hierarchical setting, we are given a complete graph $G=(V,E)$ and $\ell$ layers of input information regarding pairwise similarity of the elements, where the input information for each layer consists of a non-negative weight and a labeling of the edges with either $+$ or $-$.
The goal is to produce for each layer a partition of the elements in $V$ such that (i) the partition for any non-top layer subdivides the partition in the upper layer and (ii) the weighted disagreements over all layers is minimized.

\smallskip

Hierarchical correlation clustering is a natural formulation for the classic problem of fitting given distance information by ultrametrics, which is also known as numerical taxonomy in the literature~\cite{9140552a-517f-3ea0-a658-de90bd35ae63,SS62,SS63,DBLP:conf/approx/HarbKM05,DBLP:journals/siamcomp/AilonC11,DBLP:journals/jacm/CohenAddadDKPT24}.
While single-layer correlation clustering was extensively studied with various types of techniques {in} the past two decades, the multi-layer setting remains much less understood to date. 
The main challenge of this problem has been in the need to produce a sequence of consistent partitioning of the elements subject to the unrelated, possibly conflicting, similarity information given for the layers.

\smallskip

Ailon and Charikar~\cite{DBLP:journals/siamcomp/AilonC11} presented both combinatorial-based and LP-rounding algorithms to obtain a $\min\{ \ell+2, \hspace{1pt} O(\log n \log \log n)\}$-approximation, utilizing the pivot-based algorithm~\cite{DBLP:journals/jacm/AilonCN08} and a region growing argument.
In a breakthrough result for this problem, Cohen-Addad, Das, Kipouridis, Parotsidis, and Thorup~\cite{DBLP:journals/jacm/CohenAddadDKPT24} presented an unconventional approach to {obtain} the first constant factor ($>1000$) approximation using the LP presented in~\cite{DBLP:journals/siamcomp/AilonC11} and state-of-the-art algorithms for single-layer correlation clustering. 
This has remained the best approximation ratio known for this problem since 2021.

\paragraph{Fitting Distance by Ultrametrics (Numerical Taxonomy).}

In the numerical taxonomy problem, we are given measured pairwise distances $\MC{D} \colon {\binom{V}{2}} \mapsto {\mathbb R}_{>0}$ for a set of elements and the goal is to produce a tree metric or an ultrametric $T$ that spans $V$ {and} minimizes the $L_p$-norm
$$\Vert T - \MC{D} \Vert_p \; := \; \left( \; \sum_{\{i,j\}\in {\binom{V}{2}}} \; | \; d_T(i,j) - D(i,j) \; |^p \; \right)^{1/p} \hspace{-12pt} ,$$
where $p$ is a prescribed constant with $1 \le p \le \infty$ and $d_T$ is the distance function for $T$. 

\smallskip

Since Cavalli-Sforza and Edwards introduced the numerical taxonomy problem, it has collected an extensive literature~\cite{9140552a-517f-3ea0-a658-de90bd35ae63,Farris72,WSSB77,Day87}.
While this problem was initially introduced in the $L_2$-norm, Farris~\cite{Farris72} suggested using the $L_1$-norm in 1972.
Further, it is known that for any $1\le p \le \infty$, an algorithm that computes an ultrametric can readily be applied for computing a tree metric  losing a factor of at most $3$ in the approximation guarantee~\cite{DBLP:journals/siamcomp/AgarwalaBFPT99,DBLP:journals/jacm/CohenAddadDKPT24}.

\smallskip

For the $L_\infty$-norm, it is known that an optimal ultrametric can be computed in time proportional to the number of input distance pairs~\cite{DBLP:journals/algorithmica/FarachKW95} and can be approximated in subquadratic time~\cite{DBLP:conf/icml/Cohen-AddadVL21,DBLP:conf/icml/Cohen-AddadSL20}.
For the case with general tree metrics, this problem is APX-hard and $O(1)$-approximation is known~\cite{DBLP:journals/siamcomp/AgarwalaBFPT99}.

\smallskip

For constant $p$ with $1\le p < \infty$, the developments have been slower and remains much less understood to date~\cite{DBLP:journals/jco/MaWZ99,DBLP:conf/approx/Dhamdhere04,DBLP:conf/approx/HarbKM05,DBLP:journals/siamcomp/AilonC11,DBLP:journals/jacm/CohenAddadDKPT24}.
Among them, $L_1$-norm in particular has been extensively studied~\cite{DBLP:conf/approx/HarbKM05,DBLP:journals/siamcomp/AilonC11,DBLP:journals/jacm/CohenAddadDKPT24} and a constant-factor approximation was given by~\cite{DBLP:journals/jacm/CohenAddadDKPT24}.
For $1 < p < \infty$, $O(\log n \log \log n)$ remains the best approximation ratio~\cite{DBLP:journals/siamcomp/AilonC11}.

\smallskip

When the goal is to edit the minimum number of pairwise distances so as to fit into an ultrametric, the problem is known as the ultrametric violation distance problem. 
This problem can be interpreted as numerical taxonomy under the $L_0$-norm and has been actively studied in recent years~\cite{DBLP:conf/allerton/GilbertJ17,DBLP:conf/swat/FanGRSB20,DBLP:journals/algorithmica/FanRB22,DBLP:journals/siamcomp/CohenAddadFLM25,DBLP:conf/soda/CharikarG24} for both metric-fitting and ultrametric-fitting.
For the ultrametric version, the best result is a randomized $5$-approximation~\cite{DBLP:conf/soda/CharikarG24}.

\subsection{Our Result}
We present a simple paradigm which greatly facilitates LP-rounding in hierarchical clustering.
Our main result is a delicate algorithm {for the hierarchical correlation clustering problem with a significantly improved approximation ratio compared to the previously known guarantee~\cite{DBLP:journals/jacm/CohenAddadDKPT24}.}

\begin{theorem} \label{thm-25-7846-approx-hierarchical-correlation-clustering}
There is a $25.7846$-approximation algorithm for the hierarchical correlation clustering problem.
\end{theorem}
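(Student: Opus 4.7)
The plan is to develop an LP-rounding algorithm guided by the paradigm that the abstract advertises. I would begin with the natural hierarchical LP relaxation: for each layer $k \in [\ell]$ and each pair $\{u,v\} \in \binom{V}{2}$, introduce a variable $x^k_{uv} \in [0,1]$ indicating the fractional separation of $u$ and $v$ at layer $k$, subject to per-layer triangle inequalities $x^k_{uv} \le x^k_{uw} + x^k_{wv}$ and hierarchical monotonicity (if layer $k$ is coarser than layer $k'$ then $x^k_{uv} \le x^{k'}_{uv}$, since being separated at a coarser layer implies being separated at every finer layer). The LP objective minimizes $\sum_k w_k \bigl(\sum_{e \in E^k_+} x^k_e + \sum_{e \in E^k_-} (1 - x^k_e)\bigr)$ and lower-bounds the optimal hierarchical clustering cost.

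The next step is to use the paradigm to decompose the rounding layer by layer. I would process the layers top-down (coarsest first): produce an initial partition of $V$ for the top layer, and at each subsequent layer refine every inherited parent cluster $C$ by a cut routine applied to the LP solution restricted to $C$. The key invariant the paradigm ought to supply is that, thanks to hierarchical monotonicity, the restriction of $x^k$ to $C$ is a feasible fractional correlation clustering on $C$, so the global LP cost decomposes additively over (layer, parent cluster) pairs. If each invocation of the cut routine integrally pays at most $\alpha$ times the restricted fractional cost, the aggregate approximation ratio is $\alpha$.

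The heart of the argument is the cut routine itself, which I envision as a randomized LP-rounding procedure in the spirit of the pivot-based and ball-growing schemes familiar from single-layer correlation clustering: sample a pivot vertex (with some LP-derived distribution) and a random threshold drawn from a carefully chosen density $f$, form one cluster from all vertices whose LP distance to the pivot falls below the threshold, and recurse on the remainder. Because the paper recasts the subproblem as ``finding cuts with prescribed properties regarding the average distance of certain cut pairs,'' the density $f$ is presumably tailored so that for every pair $\{u,v\}$, both the probability of being cut and the probability of remaining together admit closed-form bounds in terms of $x^k_{uv}$. The ratio $25.7846$ then arises from optimizing $f$ (most likely a piecewise expression on a few LP-value regimes) to simultaneously control the worst-case $+$-edge cut probability versus $x^k_{uv}$ and the worst-case $-$-edge non-cut probability versus $1 - x^k_{uv}$.

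The main obstacle is the analysis of the cut routine, not the layer-wise decomposition. Two issues deserve the most care: first, proving per-pair bounds that hold uniformly in $x^k_{uv}$ and across both $\{+, -\}$ labels under a single threshold distribution, which is already delicate in single-layer correlation clustering and is subtler here because the routine is invoked on a subset whose LP structure is inherited from a hierarchical solution; and second, setting up the resulting min-max optimization so that $25.7846$ is genuinely attained, which will almost certainly require pinning down the worst-case edge configuration and verifying it via a careful numerical argument. Once the cut routine achieves the claimed per-instance ratio, top-down composition and the telescoping of LP cost immediately yield Theorem~\ref{thm-25-7846-approx-hierarchical-correlation-clustering}.
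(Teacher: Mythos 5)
Your proposal has a genuine gap at its load-bearing step: the claim that the LP cost ``decomposes additively over (layer, parent cluster) pairs'' and that a per-invocation ratio of $\alpha$ therefore telescopes into a global ratio of $\alpha$. This is false as stated, and it is precisely the central difficulty of the problem. When your top-down routine cuts a pair $\{u,v\}$ apart inside a parent cluster at a coarse layer, that pair is forced to remain separated at \emph{every} finer layer below; if $\{u,v\}$ is a $+$ edge at many of those finer layers, the integral solution pays $1$ per such layer, while the LP values $x^{k'}_{uv}$ at those layers can be arbitrarily small (monotonicity only says they are at least the coarse-layer value, which can itself be tiny for a pair that your randomized threshold happens to cut). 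So the cost of a single cut is not chargeable to the restricted fractional cost of the invocation that made it. This is exactly why Ailon and Charikar, who follow the top-down template you describe, needed a region-growing argument to control the downstream damage and obtained only $O(\log n \log\log n)$; a per-layer pivot/threshold analysis in the style of single-layer correlation clustering does not compose hierarchically.

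The paper takes an essentially different route that you would need to reconstruct. First, its algorithm is bottom-up: each layer gets a deterministic pre-clustering (pivot plus a ball of radius $1/3$, with a singleton fallback governed by an explicit averaging condition), and clusters from the layer below are merged only when an intersection condition with a pre-cluster holds; the analysis then controls the ``non-core'' part of every cluster via geometric cardinality bounds. Second, and more importantly, the new ingredient that makes the constant-factor charging work is Lemma~\ref{lemma-opt-sols-nonforbidden-pair-value}: by an LP-duality/complementary-slackness argument, the weighted number of non-edge pairs with LP distance strictly below $1$ is itself at most the LP objective, so all such pairs can be treated as pre-paid and the only delicate objects are the ``forbidden'' pairs at distance exactly $1$. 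Nothing in your proposal plays the role of this lemma, and without it (or the region-growing surrogate) the min-max optimization of a threshold density cannot yield any constant; note also that the paper's constant arises as $7c(\alpha)+1$ from combinatorial counting bounds, not from optimizing a cut density.
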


Our algorithm shares the same standard LP relaxation used in the literature~\cite{DBLP:journals/siamcomp/AilonC11,DBLP:journals/jacm/CohenAddadDKPT24,DBLP:journals/siamcomp/CohenAddadFLM25,DBLP:conf/soda/CharikarG24} for the hierarchical clustering problems. However, we present a new property of this LP relaxation that allows us to pretend as if the objective has no negative items, intuitively speaking. 
Applying this property causes us to lose the multiplicative factor of up to two.

\smallskip

Our rounding algorithm inherits several key features from the two previous works~\cite{DBLP:journals/siamcomp/AilonC11,DBLP:journals/jacm/CohenAddadDKPT24} with distinguishable technical characteristics,
which we describe in detail in the next section.
Our paradigm further reveal the core-technical problem in hierarchical clustering as the problem of finding cuts with prescribed properties regarding the average distance of a certain subset of cut pairs.
To illustrate this perspective, we show that a direct application of the paradigm and techniques presented in this work leads to an alternative algorithm for the ultrametric violation distance problem that is quite simple to describe and analyze, whose performance guarantee matches the best known~\cite{DBLP:conf/soda/CharikarG24}.

\begin{corollary} \label{cor-5-approx-L0}
There is a deterministic 5-approximation algorithm for the ultrametric violation distance problem.
\end{corollary}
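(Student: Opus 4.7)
The plan is to recast the ultrametric violation distance problem as an instance amenable to the hierarchical clustering paradigm developed in this paper, and then invoke its LP-rounding machinery. Let $t_1 < t_2 < \cdots < t_\ell$ be the distinct values appearing in $\MC{D}$. Any ultrametric whose distances lie in this set is in bijection with a nested sequence of partitions $P_{t_\ell}, P_{t_{\ell-1}}, \ldots, P_{t_1}$ of $V$, where $P_{t_k}$ groups $i, j$ iff the height of their least common ancestor is at most $t_k$. A pair $\{i,j\}$ with $D(i,j) = t_k$ is a violation iff either $i, j$ lie in different blocks of $P_{t_k}$ (distance produced is too large) or lie in the same block of $P_{t_{k-1}}$ (distance produced is too small). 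The problem thus fits the hierarchical clustering template with one layer per level: a $+$ label on layer $t_k$ and a $-$ label on layer $t_{k-1}$ for each pair with $D(i,j) = t_k$, all of unit weight.

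First, I would employ the standard LP used in~\cite{DBLP:journals/siamcomp/AilonC11,DBLP:journals/jacm/CohenAddadDKPT24,DBLP:conf/soda/CharikarG24} with variables $x^{(k)}_{ij}$ measuring the fractional amount by which $i, j$ are separated at level $k$, subject to level-wise triangle inequalities and cross-level monotonicity. Each truly violated pair charges at least $1/2$ to one of the two associated $+/-$ terms in the LP objective, so accounting for the two sides separately loses a factor of at most two. This factor-two loss is the price paid for invoking the ``pretend no negative items'' reduction, the main technical contribution of this paper, which lets the subsequent rounding step focus exclusively on the $+$-type penalties at each level.

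Next, I would round top-down, processing levels $t_\ell, t_{\ell-1}, \ldots, t_1$ in sequence and refining within each existing block. Inside a block, the core subproblem emphasized in the paper is to produce a cut whose average fractional distance on a prescribed set of cut pairs is bounded by its LP contribution; I would solve this via a ball-cutting step around a carefully chosen pivot, with a radius selected so that the ratio of integral violations created at this level to the LP contribution is at most $5/2$. Composing this per-level $5/2$ guarantee with the factor-two loss from the reduction yields the $5$-approximation claimed by Corollary~\ref{cor-5-approx-L0}.

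The main obstacle is carrying out the ball-cutting step deterministically without degrading the factor. Since the $L_0$ objective is insensitive to the magnitudes of the LP variables and depends only on the combinatorial pattern of which pairs get cut, I expect that the finite set of threshold-relevant radii around each candidate pivot can be enumerated in polynomial time, and that choosing the pivot--radius pair that minimizes an appropriate potential deterministically attains what a random threshold achieves in expectation. A straightforward level-by-level induction then assembles the local $5/2$ guarantees into the global $5$-approximation, with the argument remaining clean because the ``pretend no negative items'' reduction eliminates the case analysis for $-$ edges that would otherwise be required.
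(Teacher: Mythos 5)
Your high-level plan (recast as layered hierarchical clustering, solve the standard LP, round top-down by pivot-based ball cutting at radius about $1/2$, derandomize by choosing the better of a small family of cuts) matches the shape of the paper's Algorithm~\ref{algo-3-ultrametric-violation-distance}. However, the quantitative accounting has a genuine gap, in two places.

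First, your claim that ``each truly violated pair charges at least $1/2$ to one of the two associated $+/-$ terms in the LP objective'' is false for a fractional LP optimum: the LP can pay arbitrarily little for a pair that the rounding nevertheless violates. The paper's actual mechanism for ``pretending there are no negative items'' is Lemma~\ref{lemma-opt-sols-nonforbidden-pair-value-L0}, proved via LP duality and complementary slackness: the optimal LP value is at least $|\OP{NFbd}|$, the number of pairs $\{u,v\}$ with $\tilde{x}^{(t_{\{u,v\}})}_{\{u,v\}}<1$. This enters the analysis \emph{additively} (as the ``$+1$'' in $4+1=5$), not as a multiplicative factor of two applied per level. Your decomposition $5 = 2\times\frac{5}{2}$ therefore does not correspond to anything you have established, and the per-level ``ratio at most $5/2$'' is asserted without any argument.

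Second, the heart of the paper's proof is the average-distance guarantee (Lemma~\ref{lemma-non-forbidden-pairs-separated-avg-cost-1-4} and Lemma~\ref{cor-non-forbidden-pairs-separated-orig-objective-L0}): for each cut $(Q_1,Q_2)$ produced by {\sc One-Half-Refine-Cut}, the non-extreme cut pairs have average ``distance gain'' at least $1/4$, so their number is at most $4$ times their LP contribution plus the count of non-forbidden ones among them. The determinism is not obtained by enumerating radii and minimizing a potential, but by testing one explicit linear condition~(\ref{ieq-algo-3-ratio-4-cutting-condition}) that decides between exactly two cuts (the singleton pivot or the ball of radius $1/2-\epsilon$); the analysis shows a convex combination of the two cuts' potentials is nonnegative, so at least one of them works. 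Without identifying this lemma (or a substitute), the step ``a radius selected so that the ratio of integral violations to LP contribution is at most $5/2$'' is exactly the part that needs proof, and it is the part your proposal leaves open.
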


\smallskip

\subsection{Techniques and Discussion}

We begin with a description on the LP formulation and an overview of the approaches introduced in~\cite{DBLP:journals/siamcomp/AilonC11} and~\cite{DBLP:journals/jacm/CohenAddadDKPT24} which handled the rounding problem in very different ways.

\smallskip

The LP-formulation models the clustering decisions via pairwise dissimilarity of the elements which have values within $[0,1]$ and must satisfy the triangle inequality.
Hence, it is instructive to interpret the LP-solutions as distance functions for the elements over the layers.
Furthermore, the distance between any pair of elements satisfies the non-decreasing property top-down over the layers.
Each label given for the element pairs over the layers corresponds to one item in the objective function with a sign being equal to the label itself, i.e., a plus label for an $\{u,v\}$ pair at the $t$-th layer corresponds to an item $x^{(t)}_{\{u,v\}}$ while a minus label gives an item $(1-x^{(t)}_{\{u,v\}})$.
Handling this discrepancy between signs has been the main challenge of this problem.

\smallskip

Following the convention in the literature, we will refer pairs labeled with $+$ to as \emph{edge pairs} and the rest as \emph{non-edge pairs}.

\paragraph{The Techniques in~\cite{DBLP:journals/siamcomp/AilonC11} and~\cite{DBLP:journals/jacm/CohenAddadDKPT24}.}
In~\cite{DBLP:journals/siamcomp/AilonC11}, the hierarchical clustering is obtained in a top-down manner.
This means that the decisions for the algorithm to make in each iteration is how the partition coming from the previous layer above should be subdivided, and the main challenge is to upper-bound the number of disagreements the current clustering decision will cause in all the successive layers below.

\smallskip

To deal with this issue, the authors in~\cite{DBLP:journals/siamcomp/AilonC11} distributed the overall LP value to each element and showed that, whenever a set $P$ in the partition contains a non-edge pair $\{u,v\}$ with a distance at least $2/3$, there always exists an $r \in [0,1/3]$ such that a ball $B$ with radius $r$ to be centered at either $u$ or $v$ will give a cut $C$, such that the weighted disagreements caused by $C$ in all the successive layers below can be upper-bounded by $O(\log \log n) \cdot \log( \OP{Vol}(P) / \OP{Vol}(B) ) \cdot \OP{Vol}(B)$, where $\OP{Vol}(A)$ for any $A \subseteq V$ accounts for the LP value of the edge pairs contained within $A$ over all the successive layers plus the LP value of the elements within $A$.
The proof towards the existence of such a cut utilizes the famous region growing argument presented in~\cite{DBLP:journals/siamcomp/GargVY96} for the multicut problem. 
Summing up the cost over all such cuts gives a guarantee of $O(\log n \log \log n)$.

\smallskip

The approach presented in~\cite{DBLP:journals/jacm/CohenAddadDKPT24} starts from a reduction to the {\sc Hierarchical Cluster Agreement} problem, in which the input for each layer is a pre-clustering of the elements into groups and the goal is to minimize the weighted symmetric difference with the input pre-clustering over the layers.
The authors showed that an algorithm with an $\alpha$-guarantee for the single-layer correlation clustering can readily be applied to obtain a pre-clustering for each layer with a multiplicative loss of $O(\alpha)$ in the overall guarantee.

\smallskip

The obtained instance for the hierarchical cluster agreement problem can be seen as an instance for the hierarchical correlation clustering problem where the intra-pre-cluster pairs act as edge pairs and the inter-pre-cluster pairs act as non-edge pairs.
To handle the LP-solution for this new instance, a procedure called \emph{LP-cleaning} is presented to further {subdivide} the input pre-clusters according to the LP-solution.
This procedure uses a {clever} filtering setting to classify the elements such that, for each pre-cluster, either all the elements are made singleton pre-clusters or only a very small proportion of elements is made so. 
The setting guarantees that the number of ``edge-pairs'' separated in the new pre-clusters can be upper-bounded by the LP-value the fractional solution already has. 
Furthermore, the diameter of the new pre-clusters is guaranteed to smaller than $1/5$.

\smallskip

To obtain the hierarchical clustering, the authors {present} a brilliant approach that handles set-merging in a bottom-up manner, where the set-merging decisions are guided by the non-singleton pre-clusters computed in the above step
and the structure of existing clusters coming from the previous layer.
Roughly speaking, during the process, the algorithm records for each cluster a core subset which comes from a pre-cluster that has a small diameter and contains the majority of elements within the cluster.
To handle the set-merging decisions for a partition coming from the previous layer, the algorithm unconditionally merges for each non-singleton pre-cluster all the clusters whose core subsets have a nonempty intersection with the pre-cluster.
Then the union of the intersections becomes the core subset of the merged set. Using the properties obtained from the LP-cleaning procedure and the set-merging operation, the authors proved a set of cardinality bounds regarding the size of a cluster and its core subset via an involved induction argument.

\paragraph{Our Techniques.}

In this work we present a new paradigm that handles the LP-rounding problem for hierarchical clustering directly.
Our algorithm inherits several key features from the two previous works~\cite{DBLP:journals/siamcomp/AilonC11,DBLP:journals/jacm/CohenAddadDKPT24} with distinguishable technical characteristics.

\smallskip

Our algorithm uses the same LP relaxation used in previous works~\cite{DBLP:journals/siamcomp/AilonC11,DBLP:journals/jacm/CohenAddadDKPT24,DBLP:journals/siamcomp/CohenAddadFLM25,DBLP:conf/soda/CharikarG24}.
Our new, crucial observation is: in any optimal LP solution, the (weighted) number of non-edge pairs with distance strictly smaller than one over the layers is always upper-bounded by the objective value of the LP solution itself.
Hence, whenever the LP-solution pays a nonzero cost to separate a non-edge pair, the cost later incurred by that pair, if any, can readily be attributed to the cost of this LP-solution.

\smallskip

This suggests that we need to handle non-edge pairs with distance one separately since the LP pays nothing for these pairs. We will call them \emph{forbidden pairs.} Our analysis can be intuitively (but not formally) understood as defining a new instance of the problem where the forbidden pairs become non-edge pairs and non-forbidden pairs become edge pairs and then measuring solution costs there. 
This general property avoids the tricky problems in handling the discrepancy between the items with two different signs in the original objective function, greatly facilitating the task of LP-rounding and the analysis in the context of hierarchical clustering.

\smallskip

Our rounding algorithm consists of two components:
(i) A pre-clustering algorithm which takes as input a distance function and produces a partition of the elements which guarantees bounds on both the diameters of the pre-clusters and the average distances of the non-forbidden cut pairs.
(ii) A delicate hierarchical clustering algorithm that handles the set-merging decisions in a bottom-up manner based on the information given by the pre-clusters and the structures of the existing partition coming from the previous layer.

\smallskip

For the first component, our pre-clustering algorithm is a pivot-based algorithm~\cite{DBLP:journals/siamcomp/AilonC11,DBLP:journals/jacm/AilonCN08} that takes an entirely different approach from the pre-clustering algorithm presented in~\cite{DBLP:journals/jacm/CohenAddadDKPT24}
to some extent.
On the other hand, our algorithm starts with a big cluster containing all the elements and iteratively subdividing the clusters until every cluster has a diameter strictly smaller than $1/3$.
When this property is not yet met, an element with an eccentricity at least $1/3$ is picked, and the algorithm either makes the element a singleton cluster or it makes a cut with a ball of radius $1/3-\epsilon$ centered at that element.
This guarantees an average distance at least $1/6$ for the non-forbidden cut pairs.
Hence, the number of non-forbidden cut pairs is bounded by a small factor to the objective value of the LP-solution.
Moreover, we establish this bound using only cut pairs that are not too-far-apart.

\smallskip

Our hierarchical clustering algorithm inherits the guided set-merging framework in~\cite{DBLP:journals/jacm/CohenAddadDKPT24}.
Our algorithm imposes a set-merging condition that captures the elements necessary to provide a good structure for hierarchical clustering yet sufficient to yield a small constant loss in the approximation guarantee. 
We show that, this set-merging condition, combined with the diameter bound for the pre-clusters, leads to a geometrically-decreasing territory of the \emph{non-core part} for any cluster in the hierarchy.
This is the key to a set of substantially stronger cardinality bounds which scales with the core-parameter used in the set-merging condition.

\smallskip

To illustrate another use of our paradigm, we show that a direct application of our pre-clustering algorithm in a top-down manner with a radius parameter of $1/2$ yields a $5$-approximation for the ultrametric violation distance problem.
This provides a simple alternative algorithm to~\cite{DBLP:conf/soda/CharikarG24}, which is obtained via a pivot-based randomized rounding approach top-down recursively.

\smallskip

Our paradigm reveals the nature of the hierarchical clustering problem as a problem of finding cuts with prescribed properties regarding the average distance for a certain subset of its cut pairs. 
The above two results further suggest  that improved approximation results would be possible if stronger properties on the cuts to be computed can be built.
We believe our techniques would easily extend to other variations of hierarchical clustering problems with different objectives.

\addtocontents{toc}{\protect\setcounter{tocdepth}{2}}

\section{Problem Formulation}

\label{sec-preliminaries}

In the hierarchical correlation clustering problem, we are given a complete graph $G=(V,E)$ and $\ell$ layers of input information $(\delta_1, E^{(1)}), \ldots, (\delta_\ell, E^{(\ell)})$, where $\delta_t \in \mathbb R_{\ge 0}$ is a non-negative weight and $E^{(t)} \subseteq E$ is the set of edges labeled with $+$ at the $t$-th layer.
We refer $E^{(t)}$ and $NE^{(t)} := E \setminus E^{(t)}$ to as the \emph{edge pairs} and the \emph{non-edge pairs} at the $t$-th layer, respectively.
We refer the $1$-st layer to as the \emph{bottom layer} and the $\ell$-th layer to as the \emph{top layer}.

\smallskip

A feasible solution to this problem is a tuple $(\MC{P}^{(1)}, \ldots, \MC{P}^{(\ell)})$, where $\MC{P}^{(t)}$ is a partition of $V$ into groups such that $\MC{P}^{(t)}$ is a subdivision of $\MC{P}^{(t+1)}$ for any $t$ with $1\le t < \ell$. 
That is, for any $P \in \MC{P}^{(t)}$, there always exists $P' \in \MC{P}^{(t+1)}$ such that $P \subseteq P'$.
We say that a collection of partitions $\{\MC{P}^{(t)}\}_{1\le t \le \ell}$ is \emph{consistent} if it satisfies the above property.

\begin{figure*}[bp]
\centering
\fbox{
\begin{minipage}{.86\textwidth}
\vspace{-4pt}
\begin{align}
\text{min} \; & \;\; \sum_{1\le t \le \ell} \delta_t \cdot \left( \; \sum_{\{u,v\} \in E^{(t)}} x^{(t)}_{\{u,v\}} \; + \; \sum_{\{u,v\} \in NE^{(t)}} \left( \; 1-x^{(t)}_{\{u,v\}} \; \right) \; \right) & & \label{LP-HIER-CORR-CLUS} \tag*{LP-(*)} \\[10pt]
\text{s.t.} \; & \;\;\; x^{(t)}_{\{u,v\}} \; \le \; x^{(t)}_{\{u,p\}} \; + \; x^{(t)}_{\{p,v\}},  & & \hspace{-4cm} \forall \; 1\le t\le \ell, \; u,v,p \in V, \notag \\[6pt]
& \;\;\; 0 \; \le \; x^{(t+1)}_{\{u,v\}} \; \le \; x^{(t)}_{\{u,v\}} \; \le \; 1,  & & \hspace{-4cm} \forall \; 1\le t < \ell, \; u,v \in V.  \notag
\end{align}
\vspace{-10pt}
\end{minipage}\quad
}
\caption{An LP formulation for the Hierarchical Correlation Clustering.}
\label{fig-hier-corr-clus-natural-lp}
\end{figure*}

\smallskip

The number of disagreements caused by $\MC{P}^{(t)}$ is defined to be the number of edge pairs in $E^{(t)}$ that result in separated in $\MC{P}^{(t)}$ plus the number of non-edge pairs in $NE^{(t)}$ that are clustered into the same group in $\MC{P}^{(t)}$.
Formally, we use 
$$\#(\MC{P}^{(t)}) \; := \; \sum_{P \in \MC{P}^{(t)}} \left| \left\{ \hspace{1pt} \{p,q\} \in NE^{(t)} \; \colon \; p,q \in P \; \right\} \right| \; + \hspace{-1pt} \sum_{ \substack{ P, P' \in \MC{P}^{(t)} \\[2pt] P \neq P' } } \left| \left\{ \hspace{1pt} \{p,q\} \in E^{(t)} \; \colon \; p \in P,q \in P' \; \right\} \right|$$
to denote the number of disagreements caused by $\MC{P}^{(t)}$.
The goal of this problem is to compute a feasible solution $\{ \MC{P}^{(t)} \}_{1\le t\le \ell}$ that minimizes the weighted disagreements $\sum_{1\le t \le \ell} \hspace{1pt} \delta_t \cdot \#(\MC{P}^{(t)})$.

\medskip

We use the LP formulation in Figure~\ref{fig-hier-corr-clus-natural-lp} from~\cite{DBLP:journals/siamcomp/AilonC11,DBLP:journals/jacm/CohenAddadDKPT24} for the hierarchical correlation clustering problem.
In this formulation, for each $1\le t\le \ell$ and $\{u,v\} \in {\binom{V}{2}}$, we use an indicator variable $x^{(t)}_{\{u,v\}} \in \{0,1\}$ to denote the clustering decision for $u$ and $v$ at the $t$-th layer, i.e., $x^{(t)}_{\{u,v\}} = 0$ if and only if $u,v \in Q$ for some $Q \in \MC{P}^{(t)}$ and $x^{(t)}_{\{u,v\}} = 1$ otherwise.

\smallskip

Since the triangle inequality is satisfied, we will interpret $x^{(t)}$ as a distance function defined on the elements at the $t$-th layer.
Moreover, for each $\{u,v\} \in {\binom{V}{2}}$, the hierarchical constraint requires that $x^{(t)}_{\{u,v\}}$ is non-increasing bottom-up over the layers.
In the rest of this paper we will implicitly assume that $x^{(t)}_{\{u,u\}} = 0$ for any $u \in V$.

\paragraph{Notation.}

We use the following {notation.}
For any $S \subseteq V$, we use $\overline{S}$ to denote $V \setminus S$.
Let $x^{(1)}, \ldots, x^{(\ell)}$ be a feasible solution for~\ref{LP-HIER-CORR-CLUS}.
For any $1\le t\le \ell$, $P,Q \subseteq V$, and $r \in \mathbb R_{\ge 0}$, we use 
$$\OP{Ball}^{(t)}_{<r}(P,Q) \; := \; \left\{ \; v \in Q \; \colon \; \min_{u \in P} \; x^{(t)}_{\{v,u\}} < r \; \right\}$$
to denote the set of elements in $Q$ that are at a distance of strictly less than $r$ from some element in $P$ in the $t$-th layer.
We use $$\OP{diam}^{(t)}(Q) \; := \; \max_{u,v \in Q} \; x^{(t)}_{\{u,v\}}$$ to denote the diameter of the set $Q$ with respect to $x^{(t)}$.

\smallskip

When an arbitrary distance function $x$ is referenced, we use $\OP{Ball}^{(x)}_{<r}(P,Q)$ and $\OP{diam}^{(x)}(Q)$ to denote the same concept with respect to the distance function $x$.

\section{LP-Rounding Algorithm}

Solve the~\ref{LP-HIER-CORR-CLUS} in Figure~\ref{fig-hier-corr-clus-natural-lp} for an optimal solution $\tilde{x}$.
For any $1\le t\le \ell$, define 
$$\OP{Fbd}^{(t)} \; := \; \left\{ \; \{p,q\} \in NE^{(t)} \; \colon \; \tilde{x}^{(t)}_{\{p,q\}} = 1 \; \right\}$$ 
to be the set of non-edge pairs with distance one at the $t$-th layer.
We refer these pairs to as \emph{forbidden pairs} since the LP solution pays no cost to separate them.

\smallskip

Our rounding algorithm consists of two parts.
The first part is a pre-clustering algorithm that takes as input a distance function $x$ and produces a partition $\MC{Q}$ with the following two properties.
\begin{enumerate}
	\item
		For each $Q \in \MC{Q}$, the diameter of $Q$ with respect to $x$ is strictly smaller than $1/3$. 
		
	\item
		The \emph{not-too-far-apart pairs} separated by $\MC{Q}$ have a large average distance.
		In particular, those with a distance at most $5/6$ already have an average distance at least $1/6$.
\end{enumerate}
We describe this algorithm later in this section.

\medskip

The second part is a hierarchical clustering algorithm that outputs a consistent partitioning $\{\MC{P}^{(t)}\}_{1\le t\le \ell}$, where each set $P$ in $\MC{P}^{(t)}$ is further associated with a gluer set denoted $\Delta^{(t)}(P)$.

\smallskip

\begin{algorithm*}[htp]
\caption{Hierarchical-clustering$(\{\tilde{x}^{(t)}\}_{1\le t\le \ell})$} \label{algo-hier-clustering}
\begin{algorithmic}[1]
\State $\MC{P}^{(0)} \gets \{ \{v\} { \colon {v \in V}} \}$ and $\Delta^{(0)}(P) \gets P$ for all $P \in \MC{P}^{(0)}$.
\For {$t = 1$ to $\ell$} 
	\State $\MC{P}^{(t)} \gets \emptyset$.
	\State $\MC{Q}^{(t)} \gets $ Pre-clustering$(\tilde{x}^{(t)})$. \quad \texttt{// Pre-clustering from Algorithm~\ref{algo-precluster}.}
	\For {all $Q \in \MC{Q}^{(t)}$} 
		\State Let $\OP{Candi}^{(t)}(Q)$ be the set containing all the sets $P \in \MC{P}^{(t-1)}$ such that 
				\begin{quote}
				\hspace{1.2cm} $\Delta^{(t-1)}(P) \cap P \cap Q \neq \emptyset$ \enskip and \enskip $\left| \; \OP{Ball}^{(t)}_{<2/3} \left( \; P \cap Q, \; P \cap \overline{Q} \; \right) \; \right| < \alpha \cdot \left| P \cap Q \right|$.
				\end{quote}		
		\If { $\OP{Candi}^{(t)}(Q) \neq \emptyset$} \qquad \texttt{// Merge all the sets in $\OP{Candi}^{(t)}(Q)$.}
			\State Let $P_Q := \bigcup_{P \in \OP{Candi}^{(t)}(Q)} P$.
			\State Add $P_Q$ to $\MC{P}^{(t)}$ and set $\Delta^{(t)}(P_Q) \gets Q$.
		\EndIf
	\EndFor
	\For {all $P \in \MC{P}^{(t-1)} \setminus \bigcup_{Q \in \MC{Q}^{(t)}} \OP{Candi}^{(t)}(Q)$} \quad \texttt{// Carry the unmerged sets over to $\MC{P}^{(t)}$}
		\State Add $P$ to $\MC{P}^{(t)}$ and set $\Delta^{(t)}(P) \gets \Delta^{(t-1)}(P)$. 
	\EndFor
\EndFor
\State \Return $\{\MC{P}^{(t)}\}_{1\le t\le \ell}$.
\end{algorithmic}
\end{algorithm*}

\begin{figure*}[b]
\centering
\includegraphics[scale=1]{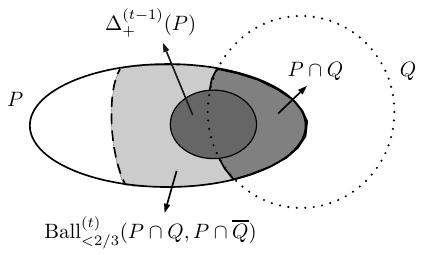}
\qquad
\qquad
\includegraphics[scale=1]{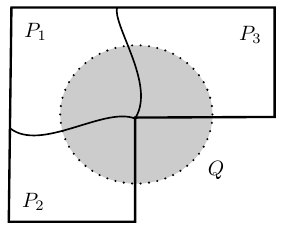}
\caption{
{The setting for intersection requirement~(\ref{ieq-concentrating-requirement}) between $P \in \MC{P}^{(t-1)}$ and $Q \in \MC{Q}^{(t)}$ and the set-merging operation for the sets in $\OP{Candi}^{(t)}(Q)$ with $Q$ being the gluer set.}
}
\label{fig-cluster-merging-condition-1}
\end{figure*}

\smallskip

Let $\MC{P}^{(0)} := \{ \{v \} \}_{v \in V}$ be the initial singleton clustering and define $\Delta^{(0)}(P) := P$ for all $P \in \MC{P}^{(0)}$.
For the $t$-th layer, where $t = 1,2, \ldots,\ell$ in order, the algorithm first applies the pre-clustering algorithm on $\tilde{x}^{(t)}$ to obtain $\MC{Q}^{(t)}$ and iterates over all $Q \in \MC{Q}^{(t)}$.
For each $Q$, the algorithm collects all the sets $P \in \MC{P}^{(t-1)}$ that satisfies the following intersection requirements with $Q$
\begin{equation}
\Delta^{(t-1)}_+(P) \cap Q \neq \emptyset \quad \text{and} \quad \left| \; \OP{Ball}^{(t)}_{<\frac{2}{3}} \left( \; P \cap Q, \; P \cap \overline{Q} \; \right) \; \right| \; < \; \alpha \cdot \left| P \cap Q \right|,
\label{ieq-concentrating-requirement}
\end{equation}
where $\Delta^{(t-1)}_+(P) := \Delta^{(t-1)}(P) \cap P$ will be referred to as the \emph{core} of $P$ and $\alpha := 0.3936$. Note that $\alpha<1/2$.
Refer to Figure~\ref{fig-cluster-merging-condition-1} for an illustration on this condition.

\smallskip

Let $\OP{Candi}^{(t)}(Q)$ denote the sets collected for $Q$.
The algorithm merges all the sets in $\OP{Candi}^{(t)}(Q)$, if it is nonempty, and sets $Q$ to be the gluer set of the merged set.
When all the $Q \in \MC{Q}^{(t)}$ are considered, the algorithm carries all the unmerged sets in $\MC{P}^{(t-1)}$ over to $\MC{P}^{(t)}$ with their gluer sets unchanged.
Refer to Algorithm~\ref{algo-hier-clustering} for a pseudo-code of this algorithm.

\smallskip

Consider the partition $\MC{P}^{(t)}$ computed for any $1\le t \le \ell$.
We refer the sets $\{ P_Q \}_{Q \colon \OP{Candi}^{(t)}(Q) \neq \emptyset}$ to as \emph{newly-created} at the $t$-th layer in the rest of this paper as they are formed as a result of merging the sets in $\OP{Candi}^{(t)}(Q)$ for some $Q \in \MC{Q}^{(t)}$.
On the contrary, the unmerged sets carried over from $\MC{P}^{(t-1)}$ are referred to as \emph{previously-formed}.

\smallskip

Since the distance between any pair is non-increasing bottom-up over the layers, it follows that the diameter of any $Q \in \MC{Q}^{(t')}$ at the $t$-th layer is also strictly smaller than $1/3$ for any $t \ge t'$.
Hence, it follows by construction that
$\OP{diam}^{(t)}(\Delta^{(t)}_+(P)) < \frac{1}{3}$ for any $P \in \MC{P}^{(t)}$ and $1\le t\le \ell$.

\smallskip

The following lemma shows that the candidates to be merged for each $Q \in \MC{Q}^{(t)}$ is unambiguous, and hence Algorithm~\ref{algo-hier-clustering} is well-defined.

\begin{lemma} \label{lemma-hierarchy-friendly}
$\OP{Candi}^{(t)}(Q) \cap \OP{Candi}^{(t)}(Q') = \emptyset$ for any $Q,Q' \in \MC{Q}^{(t)}$ with $Q \neq Q'$.
\end{lemma}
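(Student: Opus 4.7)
The plan is to prove Lemma~\ref{lemma-hierarchy-friendly} by contradiction. Suppose some $P \in \MC{P}^{(t-1)}$ lies in both $\OP{Candi}^{(t)}(Q)$ and $\OP{Candi}^{(t)}(Q')$ for distinct $Q, Q' \in \MC{Q}^{(t)}$. Write $A := P \cap Q$ and $A' := P \cap Q'$. Since $\MC{Q}^{(t)}$ is a partition, $Q$ and $Q'$ are disjoint, so $A' \subseteq P \cap \overline{Q}$ and $A \subseteq P \cap \overline{Q'}$. The goal is to apply the second part of~(\ref{ieq-concentrating-requirement}) symmetrically for $Q$ and $Q'$, yielding $|A'| < \alpha |A|$ and $|A| < \alpha |A'|$, whose composition contradicts $\alpha = 0.3936 < 1$ (so long as $A, A'$ are nonempty, which follows from the first part of~(\ref{ieq-concentrating-requirement})).

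The core step is to show that every element of $A'$ lies in $\OP{Ball}^{(t)}_{<2/3}(A, P \cap \overline{Q})$, and symmetrically. For this, I would first collect three diameter bounds at the $t$-th layer:
\begin{itemize}
\item $\OP{diam}^{(t)}(Q) < 1/3$ and $\OP{diam}^{(t)}(Q') < 1/3$, inherited from the diameter guarantee of the pre-clustering on $\tilde{x}^{(t)}$;
\item $\OP{diam}^{(t)}(\Delta^{(t-1)}_+(P)) < 1/3$, which follows from the observation already recorded in the excerpt that $\Delta^{(t-1)}_+(P)$ sits inside some pre-cluster formed at a layer $t'' \le t-1$, together with the non-increasing-bottom-up property of $\tilde{x}$.
\end{itemize}
By the first half of~(\ref{ieq-concentrating-requirement}), pick anchors $u_0 \in \Delta^{(t-1)}_+(P) \cap Q \subseteq A$ and $v_0 \in \Delta^{(t-1)}_+(P) \cap Q' \subseteq A'$. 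Then $\tilde{x}^{(t)}_{\{u_0,v_0\}} < 1/3$. For any $v \in A'$, we have $v, v_0 \in Q'$, so $\tilde{x}^{(t)}_{\{v_0,v\}} < 1/3$; the triangle inequality gives $\tilde{x}^{(t)}_{\{u_0, v\}} < 2/3$, placing $v$ in $\OP{Ball}^{(t)}_{<2/3}(A, P \cap \overline{Q})$. Hence $A' \subseteq \OP{Ball}^{(t)}_{<2/3}(A, P \cap \overline{Q})$, and the second part of~(\ref{ieq-concentrating-requirement}) for $Q$ yields $|A'| < \alpha |A|$. An identical argument with the roles of $Q$ and $Q'$ swapped gives $|A| < \alpha |A'|$. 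Combining, $|A| < \alpha^2 |A|$, which contradicts $|A| \ge 1$ and $\alpha < 1$.

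I do not foresee a substantive obstacle; the main care needed is to cleanly invoke the right diameter bound for $\Delta^{(t-1)}_+(P)$ at layer $t$ (as opposed to the layer where $P$ was created), which is where the monotonicity constraint $\tilde{x}^{(t)} \le \tilde{x}^{(t')}$ of the LP is used. One small subtlety worth spelling out is that the condition in the algorithm reads $\Delta^{(t-1)}(P) \cap P \cap Q \neq \emptyset$ rather than $\Delta^{(t-1)}_+(P) \cap Q \neq \emptyset$; these are the same since $\Delta^{(t-1)}_+(P) := \Delta^{(t-1)}(P) \cap P$ by definition, so the anchors $u_0, v_0$ do exist.
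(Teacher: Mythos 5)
Your proposal is correct and follows essentially the same route as the paper's proof: both pick anchor points in $\Delta^{(t-1)}_+(P)\cap Q$ and $\Delta^{(t-1)}_+(P)\cap Q'$, use the core's diameter bound together with monotonicity over layers and the triangle inequality to show $P\cap Q'\subseteq \OP{Ball}^{(t)}_{<2/3}(P\cap Q, P\cap\overline{Q})$ and symmetrically, and then derive the contradiction $|P\cap Q|<\alpha^2|P\cap Q|$. No gaps.
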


\begin{proof}
Suppose that $P \in \OP{Candi}^{(t)}(Q) \cap \OP{Candi}^{(t)}(Q')$ for some $P \in \MC{P}^{(t-1)}$ and $Q,Q'\in \MC{Q}^{(t)}$.
Let $p \in \Delta^{(t-1)}_+(P) \cap Q$ and $q \in \Delta^{(t-1)}_+(P) \cap Q'$ be two elements.

We have 
$$\tilde{x}^{(t)}_{\{p,q\}} \; \le \; \tilde{x}^{(t-1)}_{\{p,q\}} \; < \; \frac{1}{3}$$ 
by the non-increasing property of the distance function bottom-up over the layers.
Then Condition~(\ref{ieq-concentrating-requirement}) and the diameter bounds of $Q,Q'$ at the $t$-th layer imply that
$$P \cap Q \subseteq \OP{Ball}^{(t)}_{<2/3}(P \cap Q', P \cap \overline{Q'}) \quad \text{and} \quad
P \cap Q' \subseteq \OP{Ball}^{(t)}_{<2/3}(P \cap Q, P \cap \overline{Q}),$$
and hence $|P\cap Q| \; < \; \alpha \cdot |P \cap Q'| \; < \; \alpha^2 \cdot |P \cap Q|$, a contradiction.
\end{proof}

\medskip

\begin{algorithm*}[htp]
\caption{Pre-clustering$(x)$} \label{algo-precluster}
\begin{algorithmic}[1]
	\State Let $\MC{Q} \gets \{V\}$.
	\While {there exists $Q \in \MC{Q}$ with $\OP{diam}^{(x)}(Q) \ge 1/3$}
		\State Pick $(v,Q)$ such that $v \in Q \in \MC{Q}$ and $\max_{u \in Q} x_{\{u,v\}} \ge 1/3$.
		\State $\MC{Q}' \gets$ {\sc One-Third-Refine-Cut}$(Q, v, x)$.
				\label{algo2-cut-pair}
		\State Replace $Q$ with the sets in $\MC{Q}'$ in $\MC{Q}$.
	\EndWhile
\State \Return $\MC{Q}$.
\end{algorithmic}

\smallskip

\hrule

\smallskip

\begin{algorithmic}[1]
\Procedure{One-Third-Refine-Cut}{$Q, v, x$}
	\If {Condition~(\ref{ieq-algo-2-cut-precluster-condition}) is satisfied for $(Q,v)$} 
		\State \Return $\{ \; \{v\}, \; Q\setminus \{v\} \; \}$. \quad \texttt{// make $v$ a singleton}
	\Else
		\State \Return $\{ \; \OP{Ball}^{(x)}_{<1/3}(v,Q), \;\; Q \setminus \OP{Ball}^{(x)}_{< 1/3}(v,Q) \; \}$. \quad \texttt{// cut at $1/3-\epsilon$}
	\EndIf
\EndProcedure
\end{algorithmic}
\end{algorithm*}

Below we describe the pre-clustering algorithm (Algorithm~\ref{algo-precluster}).
The algorithm takes as input a distance function $x$, starts with one big set $\MC{Q} := \{ V \}$, and refines it repeatedly until $\OP{diam}^{(x)}(Q) < 1/3$ for all $Q \in \MC{Q}$.
In each refining iteration, it picks a $Q \in \MC{Q}$ and a vertex $v \in Q$ such that $\max_{u \in Q} x_{\{u,v\}} \ge 1/3$. If
\begin{equation}
\sum_{q \in \OP{Ball}^{(x)}_{<1/3}(v,Q)} x_{\{v,q\}} \;\; \ge \;\; \frac{1}{3} \cdot |\OP{Ball}^{(x)}_{<1/3}(v,Q)| \; - \; \frac{1}{6} \cdot |\OP{Ball}^{(x)}_{<1/2}(v,Q)| - \frac{1}{6},
\label{ieq-algo-2-cut-precluster-condition}
\end{equation}
then the algorithm makes $v$ a singleton pre-cluster by replacing $Q$ with $\{v\}$ and $Q \setminus \{v\}$.
Otherwise, $Q$ is replaced with $\OP{Ball}^{(x)}_{<1/3}(v,Q)$ and $Q \setminus \OP{Ball}^{(x)}_{< 1/3}(v,Q) \vphantom{\text{\Large T}_{\text{\LARGE T}}}$.
We make a note that in~(\ref{ieq-algo-2-cut-precluster-condition}) we use the implicit assumption that $x_{\{u,u\}}=0$ for any $u \in V$ in the distance function $x$.

\medskip

This concludes our rounding algorithm for the hierarchical correlation clustering problem.

\section{Overview of the Analysis}

Let $\vphantom{\text{\Large T}_{\text{\large T}}} \{\MC{P}^{(t)}\}_{1\le t\le \ell}$ be the output of Algorithm~\ref{algo-hier-clustering} and $\#(\MC{P}^{(t)})$ be the number of disagreements caused by $\MC{P}^{(t)}$.

\smallskip

Define $\OP{NFbdNE}^{(t)} := NE^{(t)} \setminus \OP{Fbd}^{(t)}$ to be the set of non-forbidden non-edge pairs at the $t$-th layer.
We have that

\begin{equation}
\#(\MC{P}^{(t)}) \; \le \; 
\sum_{P \in \MC{P}^{(t)}} \left( \; \#_{\OP{F}}(P) \; + \; \#_{\OP{NFbdNE}}(P) \; \right)
\; + \;
\sum_{\substack{ P, P' \in \MC{P}^{(t)}, \\[2pt] P \neq P' } } \#_{\OP{NF}}(P,P'),
\vspace{-6pt}
\label{ieq-overall-simp-version-1}
\end{equation}
where 
\begin{itemize}
	\item
		$\#_{\OP{F}}(P) := | \{ \; \{i,j\} \in \OP{Fbd}^{(t)} \colon i,j \in P \; \} |$
		is the number of forbidden pairs clustered within $P$,

	\item
		$\#_{\OP{NFbdNE}}(P) := | \{ \; \{i,j\} \in \OP{NFbdNE}^{(t)} \colon i,j \in P \; \} |$ 
		is the number of non-forbidden non-edge pairs clustered within $P$, and
		
	\item
		$\#_{\OP{NF}}(P,P') := | \{ \; \{i,j\} \notin \OP{Fbd}^{(t)}  \colon  i \in P, j \in P' \; \} |$
		is the number of non-forbidden pairs between $P$ and $P'$.
\end{itemize}

\begin{figure*}[h]
\centering
\includegraphics[scale=0.76]{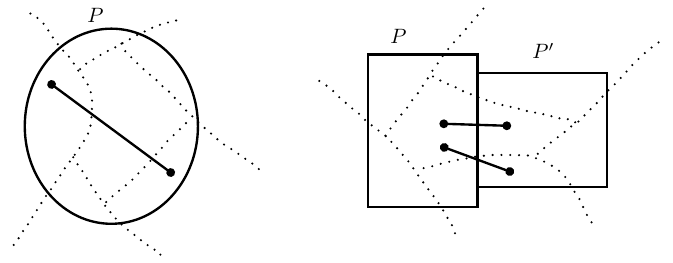}
\caption{Two types of disagreements we will {focus on} -- (a) Forbidden pairs clustered into the same part $P$. (b) Non-forbidden pairs across different parts $P$ and $P'$. 
}
\label{fig-disagreements-all}
\end{figure*}

\noindent
Recall that $\tilde{x}$ is an optimal solution to~\ref{LP-HIER-CORR-CLUS}.
To bound the weighted disagreements, we use a rather surprising property, proved in Section~\ref{section-relating-objectives}.

\begin{lemma}[Section~\ref{section-relating-objectives}] \label{lemma-opt-sols-nonforbidden-pair-value}
$$\sum_{1\le t\le \ell} \; \delta_t \cdot \left( \; \sum_{ \{u,v\} \in {E}^{(t)} } \tilde{x}^{(t)}_{u,v} \; + \; \hspace{-4pt} \sum_{ \{u,v\} \in NE^{(t)} } \left( \; 1 - \tilde{x}^{(t)}_{u,v} \; \right) \; \right) \; \ge \; \sum_{1\le t\le \ell} \; \delta_t \cdot \left| \; \OP{NFbdNE}^{(t)} \; \right|. $$
\end{lemma}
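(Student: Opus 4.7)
The plan is to prove the lemma by a simple exchange argument based on uniformly scaling $\tilde{x}$ upward and invoking its optimality. The first step is a reformulation: since $\tilde{x}^{(t)}_{u,v}=1$ for every $\{u,v\}\in\OP{Fbd}^{(t)}$, the non-edge contribution $\sum_{\{u,v\}\in NE^{(t)}}(1-\tilde{x}^{(t)}_{u,v})$ equals $|\OP{NFbdNE}^{(t)}| - \sum_{\{u,v\}\in\OP{NFbdNE}^{(t)}}\tilde{x}^{(t)}_{u,v}$. Substituting and cancelling the $|\OP{NFbdNE}^{(t)}|$ terms from both sides, the lemma is equivalent to
\[
\sum_{1 \le t \le \ell}\delta_t \sum_{\{u,v\} \in E^{(t)}} \tilde{x}^{(t)}_{u,v} \;\ge\; \sum_{1 \le t \le \ell} \delta_t \sum_{\{u,v\} \in \OP{NFbdNE}^{(t)}} \tilde{x}^{(t)}_{u,v},
\]
so it suffices to show that the weighted LP mass on edge pairs dominates the weighted LP mass on non-forbidden non-edge pairs.

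Second, I would construct a perturbation $\hat{x}^{(t)}_{u,v} := \min\{(1+\epsilon)\tilde{x}^{(t)}_{u,v},\;1\}$ for a sufficiently small $\epsilon>0$, and verify that $\hat{x}$ is feasible for \ref{LP-HIER-CORR-CLUS}. The box constraints are immediate, and the hierarchy constraint $\hat{x}^{(t+1)}\le \hat{x}^{(t)}$ follows because the map $x \mapsto \min\{(1+\epsilon)x,1\}$ is monotone. The triangle inequality reduces to a brief case analysis: if both $\hat{x}^{(t)}_{u,p}$ and $\hat{x}^{(t)}_{p,v}$ are uncapped, then scaling $\tilde{x}$'s triangle inequality by $(1+\epsilon)$ yields the bound; otherwise one summand equals $1$, so the right-hand side is already at least $1 \ge \hat{x}^{(t)}_{u,v}$.

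Third, I would choose $\epsilon>0$ small enough that $(1+\epsilon)\tilde{x}^{(t)}_{u,v}<1$ for every pair with $\tilde{x}^{(t)}_{u,v}<1$---possible since there are only finitely many variables---so that no such pair gets capped. Then $\hat{x}^{(t)}_{u,v}-\tilde{x}^{(t)}_{u,v}$ equals $\epsilon\tilde{x}^{(t)}_{u,v}$ on pairs with $\tilde{x}^{(t)}_{u,v}<1$ and zero on pairs with $\tilde{x}^{(t)}_{u,v}=1$. A direct calculation of the change in the LP objective, combined with the optimality of $\tilde{x}$, gives
\[
\sum_{1\le t\le \ell}\delta_t \sum_{\substack{\{u,v\} \in E^{(t)}\\ \tilde{x}^{(t)}_{u,v}<1}} \tilde{x}^{(t)}_{u,v} \;\ge\; \sum_{1\le t\le \ell}\delta_t \sum_{\{u,v\} \in \OP{NFbdNE}^{(t)}} \tilde{x}^{(t)}_{u,v},
\]
and enlarging the left-hand sum to $\sum_{\{u,v\}\in E^{(t)}}\tilde{x}^{(t)}_{u,v}$ (which only adds non-negative terms, namely edge pairs with $\tilde{x}^{(t)}_{u,v}=1$) yields the reformulated inequality and thus the lemma. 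The main obstacle I anticipate is the triangle-inequality check under capping; the case split above keeps this manageable precisely because a capped summand equals the global upper bound $1$, which already dominates $\hat{x}^{(t)}_{u,v}\le 1$.
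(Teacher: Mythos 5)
Your proof is correct, but it takes a genuinely different route from the paper. The paper proceeds via LP duality: it substitutes the constant $1$ for every saturated variable, argues that deleting the constraints rendered redundant preserves optimality, writes down the dual of the resulting \ref{LP-HIER-CORR-CLUS-Modified}, and invokes complementary slackness to conclude that the dual variables $\eta^{*(t)}_{\{u,v\}}$ for the upper-bound constraints of all surviving variables vanish, whence the dual (hence primal) optimum is at least $\sum_t \delta_t\,|\OP{NFbdNE}^{(t)}|$. You instead give a primal variational argument: after the (correct) algebraic reduction to $\sum_t \delta_t \sum_{E^{(t)}} \tilde{x}^{(t)}_{u,v} \ge \sum_t \delta_t \sum_{\OP{NFbdNE}^{(t)}} \tilde{x}^{(t)}_{u,v}$, you scale all unsaturated variables up by $(1+\epsilon)$, check feasibility (your case split for the triangle inequality is sound, since a capped summand already equals the global upper bound $1$), and read off the inequality from $\OP{Val}(\hat{x}) \ge \OP{Val}(\tilde{x})$ after dividing by $\epsilon$. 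The two arguments exploit the same underlying fact --- that the constraints $x^{(t)}_{\{u,v\}} \le 1$ are slack at every surviving variable, which is exactly what makes your upward perturbation admissible and what forces $\eta^* = 0$ in the paper --- but your version is more elementary and self-contained: it avoids setting up the modified LP, justifying the constraint removals, and writing out the dual. The duality route has the advantage of making explicit which constraints certify the bound, but for this particular statement your exchange argument is arguably the cleaner of the two. One minor point worth stating explicitly in a final write-up: the degenerate case where no variable satisfies $\tilde{x}^{(t)}_{\{u,v\}} < 1$ (so no admissible $\epsilon$ exists) is trivial, since then $\OP{NFbdNE}^{(t)} = \emptyset$ for all $t$ and the right-hand side is zero.
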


It follows from Lemma~\ref{lemma-opt-sols-nonforbidden-pair-value} that the weighted disagreements caused by the pairs in $\OP{NFbdNE}^{(t)}$, if any, can readily be attributed to the cost of the optimal LP solution.
Furthermore, they can be treated as if they were edge pairs when necessary.

\smallskip

Next we bound $\sum_P \#_{\OP{F}}(P) $ and $\sum_{P \neq P'} \#_{\OP{NF}}(P,P')$ in terms of $|\OP{NFPrs}(\MC{Q}^{(t)})|$, where we use 
$$\OP{NFPrs}(\MC{Q}^{(t)}) \; := \; \left\{ \; \{i,j\} \notin \OP{Fbd}^{(t)} \; \colon \; \{i,j\} \text{ separated in } \MC{Q}^{(t)} \; \right\}$$ to denote the set of non-forbidden pairs that are separated in $\MC{Q}^{(t)}$.
As for $\#_{\OP{F}}(P)$, we prove the following lemma in Section~\ref{sec-counting-disagreements} and~\ref{sec-appendix-forbidden-pairs-within-P}.

\begin{lemma}[Section~\ref{sec-counting-disagreements}, Section~\ref{sec-appendix-forbidden-pairs-within-P}] \label{lemma-number-of-disagreements-forbidden-pairs-within-P}
For $\alpha := 0.3936$ and any $P \in \MC{P}^{(t)}$, we have
$$\#_{\OP{F}}(P) \; \le \; \frac{(2-\alpha)(1+\alpha)^2}{2(1-\alpha)^2} \cdot \beta \cdot \left| \OP{NFPrs}(\MC{Q}^{(t)}, P) \right|,$$
where $\beta := 0.8346$ and $\OP{NFPrs}(\MC{Q}^{(t)}, P) := \left\{ \; \{i,j\} \in \OP{NFPrs}(\MC{Q}^{(t)}) \; \colon \; i,j \in P \; \right\} $
denotes the set of pairs in $\OP{NFPrs}(\MC{Q}^{(t)})$ that reside within $P$.
\end{lemma}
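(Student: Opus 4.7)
The plan is to bound $\#_{\OP{F}}(P)$ from above and $|\OP{NFPrs}(\MC{Q}^{(t)}, P)|$ from below, both in terms of $|\Delta^{(t)}_+(P)|$, and then to combine the two.

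I would first prove, by induction on the construction layer of $P$, a cardinality bound of the form $|P| \le K_\alpha \cdot |\Delta^{(t)}_+(P)|$ for a constant $K_\alpha$ depending only on $\alpha$. Previously-formed clusters inherit the bound directly. For a newly-created $P = \bigcup_{P' \in \OP{Candi}^{(t)}(Q)} P'$ at layer $t$ with gluer $Q$, the key observation is that for each $P' \in \OP{Candi}^{(t)}(Q)$: (i) the core $\Delta^{(t-1)}_+(P')$ has layer-$t$ diameter strictly less than $1/3$ (by monotonicity and the pre-clustering diameter bound) and meets $Q$, so every element of $\Delta^{(t-1)}_+(P') \setminus Q$ lies in $\OP{Ball}^{(t)}_{<2/3}(P' \cap Q, \, P' \setminus Q)$; (ii) the set-merging requirement (\ref{ieq-concentrating-requirement}) bounds this ball by $\alpha \cdot |P' \cap Q|$; and (iii) the inductive hypothesis applied to $P'$ controls $|P'|$ in terms of $|\Delta^{(t-1)}_+(P')|$. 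Together with $\alpha < 1/2$, these should close the induction.

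Next, since $\Delta^{(t)}_+(P)$ sits inside a single pre-cluster of $\MC{Q}^{(t)}$, its layer-$t$ diameter is strictly below $1/3$, so no forbidden pair lies entirely within the core. Consequently every forbidden pair of $P$ has at least one endpoint in $P \setminus \Delta^{(t)}_+(P)$, yielding
\[\#_{\OP{F}}(P) \;\le\; |\Delta^{(t)}_+(P)| \cdot |P \setminus \Delta^{(t)}_+(P)| \;+\; \binom{|P \setminus \Delta^{(t)}_+(P)|}{2}.\]
Substituting the cardinality bound gives an upper bound of the form $C_1(\alpha) \cdot |\Delta^{(t)}_+(P)|^2$; for the target $|P \setminus \Delta^{(t)}_+(P)| \le \tfrac{\alpha}{1-\alpha} |\Delta^{(t)}_+(P)|$ this simplifies to $\tfrac{\alpha(2-\alpha)}{2(1-\alpha)^2} \cdot |\Delta^{(t)}_+(P)|^2$, matching the $\tfrac{(2-\alpha)}{2(1-\alpha)^2}$ shape of the claimed coefficient.

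Finally, and this is the principal obstacle, I would prove a matching lower bound of the form $|\OP{NFPrs}(\MC{Q}^{(t)}, P)| \ge C_2(\alpha, \beta) \cdot |\Delta^{(t)}_+(P)|^2$, with the $(1+\alpha)^2$ factor in the target coefficient naturally arising from the squared ratio $|P|^2/|\Delta^{(t)}_+(P)|^2$. The strategy is to exhibit many certifiably non-forbidden separated pairs within $P$: pairs between $P' \cap Q$ and nearby elements of $P' \setminus Q$ (those inside the $<\!2/3$-ball, hence with LP-distance bounded away from one), and, by a more delicate accounting, pairs between core portions of distinct $P'_j, P'_k \in \OP{Candi}^{(t)}(Q)$ that are split apart by $\MC{Q}^{(t)}$. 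The factor $\beta = 0.8346$ encapsulates the efficiency of this accounting; naive union-bound arguments overcount and produce a worse constant. Combining this lower bound with the upper bound of the previous paragraph yields the claimed coefficient $\tfrac{(2-\alpha)(1+\alpha)^2}{2(1-\alpha)^2} \cdot \beta$.

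The hard part is the third step: tightly lower-bounding the non-forbidden cut pairs requires a careful case analysis that propagates through the hierarchical structure and matches the upper bound with the right constant, which appears to be the content of Section~\ref{sec-appendix-forbidden-pairs-within-P}. Steps one and two, by contrast, follow from a clean induction plus elementary counting once the correct formulation of the cardinality bound is identified.
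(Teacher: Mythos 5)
Your overall architecture---upper-bound $\#_{\OP{F}}(P)$ by a constant times $|\Delta^{(t)}_+(P)|^2$ using the fact that every forbidden pair has an endpoint in $\OP{Ext}^{(t)}(P):=P\setminus\Delta^{(t)}_+(P)$, lower-bound $|\OP{NFPrs}(\MC{Q}^{(t)},P)|$ by a constant times $|\Delta^{(t)}_+(P)|^2$, and divide---is exactly the paper's, and your step two is correct. But step one has a genuine gap: the ``clean induction'' you describe does not close. Combining your facts (i)--(iii), the inductive hypothesis $|\OP{Ext}^{(t-1)}(P')|\le K\,|\Delta^{(t-1)}_+(P')|$ together with $|\Delta^{(t-1)}_+(P')\setminus Q|<\alpha\,|P'\cap Q|$ yields only $|P'\setminus Q|\le\bigl(\alpha+K(1+\alpha)\bigr)\,|P'\cap Q|$, and the map $K\mapsto\alpha+(1+\alpha)K$ has no positive fixed point; the constant you extract grows with the depth of the merge history instead of stabilizing at $\tfrac{\alpha}{1-\alpha}$. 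The difficulty is double-charging: elements of $\OP{Ext}^{(t-1)}(P')$ near $P'\cap Q$ are counted both inside the $<\!2/3$-ball and inside the inductive bound. The paper avoids this with a global amortization over the merge tree $\MC{T}_P$: a pre-order traversal peels off pairwise disjoint layers $L_i$ with $|L_i|\le\alpha|A_i|$ for pairwise disjoint sets $A_i\subseteq\OP{Ext}^{(t)}(P)$, where the disjointness of the $A_i$ is itself a nontrivial claim certified by a distance-$\ge 2/3$ separation argument; this gives $|\OP{Ext}^{(t)}(P)|\le\tfrac{1}{1-\alpha}|L^{(t)}_1(P)|\le\tfrac{\alpha}{1-\alpha}|\Delta^{(t)}_+(P)|$. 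That amortization is the missing idea.

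Step three is the heart of the lemma and you leave it as a goal, missing its key mechanism: the case split on whether $P$ is newly-created at layer $t$. If it is, the pairs between $\Delta^{(t)}_+(P)$ and $L^{(t)}_1(P)$ are non-forbidden and cut by $\MC{Q}^{(t)}$, and the bound follows with coefficient $\tfrac{2-\alpha}{2(1-\alpha)^2}$ with no $(1+\alpha)^2$ at all. If $P$ is previously-formed then, contrary to your claim, $\Delta^{(t)}_+(P)$ sits inside a pre-cluster of an \emph{earlier} layer's $\MC{Q}^{(t')}$, not of $\MC{Q}^{(t)}$ (the diameter bound survives only by monotonicity of distances), so the core is split among several pre-clusters $Q_1,\dots,Q_k$ of $\MC{Q}^{(t)}$; since $P$ was not merged at layer $t$, requirement~(\ref{ieq-concentrating-requirement}) \emph{fails} for each $Q_j$, giving $|\OP{Ball}^{(t)}_{<2/3}(P\cap Q_j,\,P\cap\overline{Q_j})|\ge\alpha\,|P\cap Q_j|$, and an imbalanced/balanced split of $|\Delta^{(t)}_+(P)|$ among the $Q_j$ yields $|\Delta^{(t)}_+(P)|^2\le\tfrac{(1+\alpha)^2}{\alpha}\,|\OP{NFPrs}(\MC{Q}^{(t)},P)|$. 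The $(1+\alpha)^2$ therefore arises from this split of the core, not from the ratio $|P|^2/|\Delta^{(t)}_+(P)|^2$ as you guessed, and the further factor $\beta$ comes from an additional sub-case analysis in Section~\ref{sec-appendix-forbidden-pairs-within-P} comparing $|L^{(t)}_1(P)|$ to $\beta\alpha|\Delta^{(t)}_+(P)|$ and tracking how $L^{(t)}_1(P)$ distributes over the $Q_j$. Without the failed-merging-condition argument your lower bound cannot be established for previously-formed clusters, so the proof as proposed cannot be completed.
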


\smallskip

For $\#_{\OP{NF}}(P,P')$, we prove the following lemma.

\begin{lemma}[Section~\ref{sec-counting-disagreements}] \label{lemma-number-of-disagreements-restate-non-forbidden-pairs-P-Pp}
For any $P,P' \in \MC{P}^{(t)}$ with $P \neq P'$, we have
$$\#_{\OP{NF}}(P,P') \; \le \; \max \left\{ \; \frac{1}{1-\alpha}, \; \frac{1+\alpha}{\alpha} \; \right\} \cdot \left| \OP{NFPrs}(\MC{Q}^{(t)}, P, P') \right|,$$
where 
$\OP{NFPrs}(\MC{Q}^{(t)}, P, P') := \left\{ \; \{i,j\} \in \OP{NFPrs}(\MC{Q}^{(t)}) \; \colon \; i \in P, \; j \in P' \; \right\}$ denotes the set of pairs in $\OP{NFPrs}(\MC{Q}^{(t)})$ that are between $P$ and $P'$.
\end{lemma}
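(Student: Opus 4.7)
The plan is to decompose $\#_{\OP{NF}}(P,P')$ according to the pre-clustering $\MC{Q}^{(t)}$. Writing $A_Q := P\cap Q$ and $B_Q := P'\cap Q$, every pair counted in $\#_{\OP{NF}}(P,P')$ either lies within a common $Q$ (contributing $\sum_Q |A_Q|\cdot|B_Q|$ pairs, all automatically non-forbidden since $\OP{diam}^{(t)}(Q)<1/3$) or is separated by $\MC{Q}^{(t)}$ (contributing exactly $|\OP{NFPrs}(\MC{Q}^{(t)},P,P')|$). It therefore suffices to charge the within-$Q$ pairs $A_Q\times B_Q$ to separated non-forbidden $(P,P')$-pairs having one endpoint in $Q$, with per-atom multiplicative loss at most $\max\{1/\alpha,\,\alpha/(1-\alpha)\}$; the disjointness of pre-clusters in $\MC{Q}^{(t)}$ prevents charging collisions across different $Q$'s, and accounting also for the cross-$Q$ term itself then gives the claimed $\max\{(1+\alpha)/\alpha,\,1/(1-\alpha)\}$.

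The key structural observation is that, since $P\neq P'$ yet both meet $Q$, at most one of them can equal the set $P_Q$ produced via gluer $Q$ at layer $t$. Consequently, at least one side---WLOG $P'$---has the property that every $\MC{P}^{(t-1)}$-atom $R'\subseteq P'$ intersecting $Q$ is excluded from $\OP{Candi}^{(t)}(Q)$; by the set-merging condition~(\ref{ieq-concentrating-requirement}) each such $R'$ either fails the core condition ($\Delta^{(t-1)}_+(R')\cap Q = \emptyset$) or the ball condition ($|\OP{Ball}^{(t)}_{<2/3}(R'\cap Q, R'\cap\overline{Q})|\ge\alpha|R'\cap Q|$). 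I would then treat the two failure modes separately and charge $|R'\cap Q|\cdot|A_Q|$ in each case.

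In the ball-failure sub-case, every $u\in\OP{Ball}^{(t)}_{<2/3}(R'\cap Q, R'\cap\overline{Q})\subseteq P'\cap\overline{Q}$ lies within distance $2/3$ of some element of $R'\cap Q\subseteq Q$; combined with $\OP{diam}^{(t)}(Q)<1/3$ and the triangle inequality, each pair $(u,a)$ with $a\in A_Q$ has distance strictly less than $1$ (non-forbidden) and is separated by $\MC{Q}^{(t)}$ (since $u\in\overline{Q}$ and $a\in Q$). This produces at least $\alpha|R'\cap Q|\cdot|A_Q|$ charged pairs, giving per-atom loss $1/\alpha$. In the core-failure-only sub-case (the ball condition holds but the core condition fails), the core $\Delta^{(t-1)}_+(R')$ sits entirely in $R'\cap\overline{Q}$; I plan to invoke the cardinality bound stemming from the geometrically-decreasing non-core structure emphasized in the introduction, which forces $|R'\cap Q|\le(\alpha/(1-\alpha))\cdot|\Delta^{(t-1)}_+(R')|$, and then charge each pair $(a,b)\in A_Q\times(R'\cap Q)$ to a pair $(a,c)$ between $A_Q$ and a core element $c\in\Delta^{(t-1)}_+(R')\subseteq P'\cap\overline{Q}$, certifying the non-forbiddenness of $(a,c)$ via a triangle-inequality chain passing through the (diameter-$<1/3$) gluer pre-cluster of $R'$ that contains the core, together with the diameter bound on $Q$.

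The main obstacle will be the core-failure sub-case: unlike in the ball-failure case, a core element of $R'$ is not \emph{a priori} close to any element of $R'\cap Q$, so certifying that the charged separated pairs $(a,c)$ are non-forbidden requires combining the cardinality bound on $R'$ with the small-diameter structure of the gluer pre-cluster (which at layer $t$ remains $<1/3$). Once this is handled, summing the per-atom bounds over failing atoms and then over $Q$ (with no cross-$Q$ double counting by pre-cluster disjointness) and taking the worse of the two per-atom factors yields the claimed bound.
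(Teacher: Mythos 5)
Your overall architecture coincides with the paper's: split $\#_{\OP{NF}}(P,P')$ into cross-pre-cluster pairs (charged at factor $1$) and within-$Q$ pairs, observe that at most one of $P,P'$ is glued by $Q$ at layer $t$, and treat the two failure modes of the merging condition~(\ref{ieq-concentrating-requirement}) separately with per-case factors $1/\alpha$ and $\alpha/(1-\alpha)$; the ball-failure sub-case and the final constant bookkeeping are correct. However, the core-failure sub-case has a genuine gap. You charge the $|A_Q|\cdot|R'\cap Q|$ within-$Q$ pairs to pairs between $A_Q$ and the core $\Delta^{(t-1)}_+(R')$, and these charged pairs must be shown non-forbidden, but no such certification is available. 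Since $\Delta^{(t-1)}_+(R')\cap Q=\emptyset$, the set $R'\cap Q$ lies in $\OP{Ext}^{(t-1)}(R')$, and the extended part is tied to the core only through a \emph{chain} of $2/3$-balls (the layers $L_1,L_2,\dots$ in the paper's tree $\MC{T}_{R'}$); an element of $R'\cap Q$ sitting deep in that chain can be at distance arbitrarily close to $1$ from every core element. Your proposed triangle-inequality chain ``through the gluer pre-cluster of $R'$'' does not close: the elements of $R'\cap Q$ are by assumption outside the core, hence need not lie in or near the gluer pre-cluster, and there is no bound relating their distance to it.

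This is precisely what Part~II of Lemma~\ref{lemma-ieq-ext-delta-plus-with-k} is for: for the relevant $A$ (the $Q$-part of the non-gluer side) one does not charge to the top-level core but to a set $K^{(t)}_P(A)$, a union of cores of sub-clusters in the merge tree whose parents' extended parts cover $A$, which simultaneously satisfies $|A|\le\frac{\alpha}{1-\alpha}\,|K^{(t)}_P(A)|$ and $K^{(t)}_P(A)\subseteq\OP{Ball}^{(t)}_{<2/3}(A,\,P\setminus A)$; the latter containment is what certifies non-forbiddenness via $\OP{diam}^{(t)}(Q)<1/3$, exactly as in your ball-failure case. Without this (or an equivalent) structural lemma your core-failure charging does not go through. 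The remaining differences from the paper --- you apply the dichotomy to the $\MC{P}^{(t-1)}$-atoms of the non-gluer side rather than to the layer-$t$ cluster itself --- are cosmetic and not the source of the problem.
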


Lemma~\ref{lemma-number-of-disagreements-forbidden-pairs-within-P} and Lemma~\ref{lemma-number-of-disagreements-restate-non-forbidden-pairs-P-Pp} bound $\sum_P \#_{\OP{F}}(P) {+} \sum_{P \neq P'} \#_{\OP{NF}}(P,P')$ 
in terms of $|\OP{NFPrs}(\MC{Q}^{(t)})|$.
To further bound $| \OP{NFPrs}(\MC{Q}^{(t)}) |$, we show that the non-forbidden pairs separated in $\MC{Q}^{(t)}$ have an average distance at least $1/6$ via a stronger statement.

\begin{lemma}[Section~\ref{sec-proof-avg-distance-1-3}] \label{lemma-non-forbidden-pairs-separated-avg-cost-1-3}
Consider line~\ref{algo2-cut-pair} in Algorithm~\ref{algo-precluster} with input distance function $x$.
Let $v$ be the pivot chosen in that iteration and $(Q_1, Q_2)$ with $v \in Q_1$ be the pair returned by the procedure {\sc One-Third-Refine-Cut}.
Then
$$
\sum_{ \substack{ \{i,j\} \in \OP{NFPrs}(Q_1, Q_2), \\[2pt] i \in Q_1, \\[2pt] j \in \OP{Ball}^{(x)}_{<1/2}(v, Q_2) } } \hspace{-2pt} \left( \; \min\left\{ \; x_{\{v,j\}}, \; \frac{1}{3} \;\right\} \; - \; x_{\{v,i\}} \; \right) \;\; \ge \;\;
\frac{1}{6} \; \cdot \; \left| \left\{ \; \substack{ \{i,j\} \in \OP{NFPrs}(Q_1, Q_2), \\[2pt] i \in Q_1, \; \\[2pt] j \in \OP{Ball}^{(x)}_{<1/2}(v, Q_2) } \; \right\} \right|,
$$
where $\OP{NFPrs}(Q_1, Q_2)$ denotes the set of non-forbidden pairs between $Q_1$ and $Q_2$.
\end{lemma}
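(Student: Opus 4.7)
The plan is to handle the two branches of \textsc{One-Third-Refine-Cut} separately, since the threshold in Condition~(\ref{ieq-algo-2-cut-precluster-condition}) was evidently chosen so that each branch (taken precisely when the other is inadmissible) produces enough slack to cover the right-hand side. Throughout, let $A := \OP{Ball}^{(x)}_{<1/3}(v,Q)$ and $S := \OP{Ball}^{(x)}_{<1/2}(v,Q)$, and write $B := S \setminus A$ for the shell of vertices $j$ with $1/3 \le x_{\{v,j\}} < 1/2$; in particular $|S| = |A| + |B|$. A uniform first observation is that every pair $\{i,j\}$ indexed in the sum is automatically non-forbidden: by the triangle inequality $x_{\{i,j\}} \le x_{\{v,i\}} + x_{\{v,j\}} < 1/3 + 1/2 = 5/6 < 1$ in the cut branch, and trivially $x_{\{v,j\}} < 1/2 < 1$ in the singleton branch where $i = v$. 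Thus the condition ``$\{i,j\} \in \OP{NFPrs}(Q_1,Q_2)$'' reduces to the geometric constraint.

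First I would treat the singleton branch, where Condition~(\ref{ieq-algo-2-cut-precluster-condition}) holds and $Q_1 = \{v\}$, $Q_2 = Q \setminus \{v\}$. Here $i = v$ forces $x_{\{v,i\}} = 0$, and the relevant $j$'s partition into $A \setminus \{v\}$ (each contributing $x_{\{v,j\}}$) and $B$ (each contributing $1/3$). So the left-hand side equals $\sum_{j \in A} x_{\{v,j\}} + |B|/3$ and the count on the right equals $|A| + |B| - 1$. Plugging in the lower bound from~(\ref{ieq-algo-2-cut-precluster-condition}), namely $\sum_{j \in A} x_{\{v,j\}} \ge |A|/3 - |S|/6 - 1/6 = (|A| - |B| - 1)/6$, yields
$$\sum_{j \in A} x_{\{v,j\}} \;+\; \tfrac{|B|}{3} \;\ge\; \tfrac{|A| - |B| - 1}{6} \;+\; \tfrac{2|B|}{6} \;=\; \tfrac{|A| + |B| - 1}{6},$$
which matches the target exactly.

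Next I would handle the cut branch, where Condition~(\ref{ieq-algo-2-cut-precluster-condition}) fails and $Q_1 = A$, $Q_2 = Q \setminus A$. The relevant $j$'s now lie in $\OP{Ball}^{(x)}_{<1/2}(v,Q_2) = B$, each contributing $\min\{x_{\{v,j\}}, 1/3\} = 1/3$, while each $i \in A$ satisfies $x_{\{v,i\}} < 1/3$. Hence the left-hand side factors as $|B| \cdot \bigl( |A|/3 - \sum_{i \in A} x_{\{v,i\}} \bigr)$ and the count equals $|A| \cdot |B|$. If $|B| = 0$ both sides vanish; otherwise dividing by $|B|$ reduces the claim to $|A|/3 - \sum_{i \in A} x_{\{v,i\}} \ge |A|/6$. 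The strict negation of~(\ref{ieq-algo-2-cut-precluster-condition}) gives $\sum_{i \in A} x_{\{v,i\}} < |A|/3 - |S|/6 - 1/6$, whence
$$\tfrac{|A|}{3} \;-\; \sum_{i \in A} x_{\{v,i\}} \;>\; \tfrac{|A| + |B| + 1}{6} \;\ge\; \tfrac{|A|}{6},$$
as required. I do not foresee a genuine obstacle: the coefficients $1/3$, $1/6$, and $1/6$ in~(\ref{ieq-algo-2-cut-precluster-condition}) close the arithmetic on the nose in both branches, and the only care needed is in bookkeeping the boundary contributions (the $-1/6$ correction in Condition~(\ref{ieq-algo-2-cut-precluster-condition}), the singleton $j = v$ in the first branch, and the potentially empty $B$ in the second).
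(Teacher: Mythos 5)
Your proof is correct and follows essentially the same route as the paper: both identify the algorithm's test in Condition~(\ref{ieq-algo-2-cut-precluster-condition}) as exactly the nonnegativity of the singleton branch's surplus and observe that its failure forces the $1/3$-ball cut's surplus to be positive. The only difference is presentational — you verify the two branches directly, while the paper packages the same arithmetic as a convex combination showing the maximum of the two surpluses is nonnegative.
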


Since $| x_{\{v,i\}} - x_{\{v,j\}}|$ is a lower-bound for $x_{\{i,j\}}$ for any $i,j \in Q_1 \cup Q_2$ by the triangle inequality, Lemma~\ref{lemma-non-forbidden-pairs-separated-avg-cost-1-3} guarantees an average distance at least $1/6$ for the pairs in $\OP{NFPrs}(Q_1, Q_2)$.
Moreover, although the actual distance of such pairs can be much larger than the average, the statement ensures that only a reasonably small amount of it is charged to establish the bound.

\medskip

Using Lemma~\ref{lemma-non-forbidden-pairs-separated-avg-cost-1-3}, we bound 
$\left| \OP{NFPrs}(\MC{Q}^{(t)}) \cap E^{(t)} \right|$ in terms of the objective value of $\tilde{x}^{(t)}$.
Combining all the above with Inequality~(\ref{ieq-overall-simp-version-1}), we obtain the following lemma in Section~\ref{sec-overall}.

\begin{lemma}[Section~\ref{sec-overall}] \label{lemma-overall-approximation-ratio}
\begin{align*}
& {\sum_{1\le t \le \ell} \delta_t \cdot \#(\MC{P}^{(t)})}
\; \le \; \left( \; 7 c(\alpha) +1 \; \right) \cdot \sum_{1\le t\le \ell} \delta_t \cdot \left( \sum_{\{u,v\} \in E^{(t)}} \tilde{x}^{(t)}_{\{u,v\}} + \sum_{\{u,v\} \notin E^{(t)}} \left( 1-\tilde{x}^{(t)}_{\{u,v\}} \right) \right),
\end{align*}
where $c(\alpha) := \max \left\{ \frac{\beta(2-\alpha)(1+\alpha)^2}{2 (1-\alpha)^2}, \frac{1}{1-\alpha}, \frac{1+\alpha}{\alpha} \right\} \approx 3.5406$ for $\alpha := 0.3936$, and $\beta := 0.8346$.
\end{lemma}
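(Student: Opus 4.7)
The plan is to chain the previously stated lemmas into a single per-layer bound and then aggregate across layers. Starting from inequality~(\ref{ieq-overall-simp-version-1}), I would apply Lemma~\ref{lemma-number-of-disagreements-forbidden-pairs-within-P} to each $\#_{\OP{F}}(P)$ and Lemma~\ref{lemma-number-of-disagreements-restate-non-forbidden-pairs-P-Pp} to each $\#_{\OP{NF}}(P,P')$; both bound their respective disagreement counts by $c(\alpha)$ times a cardinality of a subset of $\OP{NFPrs}(\MC{Q}^{(t)})$. Summing over $P$ and over $P \ne P'$, and noting that $\{\OP{NFPrs}(\MC{Q}^{(t)},P)\}_P$ together with $\{\OP{NFPrs}(\MC{Q}^{(t)},P,P')\}_{P\ne P'}$ jointly partition $\OP{NFPrs}(\MC{Q}^{(t)})$, the two contributions collapse into the single expression $c(\alpha)\cdot |\OP{NFPrs}(\MC{Q}^{(t)})|$. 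The middle term $\sum_P \#_{\OP{NFbdNE}}(P)$ is at most $|\OP{NFbdNE}^{(t)}|$. This yields the per-layer bound
$$\#(\MC{P}^{(t)}) \;\le\; c(\alpha)\cdot |\OP{NFPrs}(\MC{Q}^{(t)})| \;+\; |\OP{NFbdNE}^{(t)}|.$$

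I would then weight by $\delta_t$, sum over $t$, and compare against $\text{LP}^*$. Lemma~\ref{lemma-opt-sols-nonforbidden-pair-value} hands us $\sum_t \delta_t\,|\OP{NFbdNE}^{(t)}|\le \text{LP}^*$ directly, providing the ``$+1$'' in the target ratio. The core step is to prove $\sum_t \delta_t\,|\OP{NFPrs}(\MC{Q}^{(t)})|\le 7\,\text{LP}^*$. For this, I would iterate over every cut performed by Algorithm~\ref{algo-precluster} within layer $t$ and invoke Lemma~\ref{lemma-non-forbidden-pairs-separated-avg-cost-1-3}. For the separated pairs $\{i,j\}$ with $j\in\OP{Ball}^{(t)}_{<1/2}(v,Q_2)$, that lemma gives a telescoping count bound of the form $|\cdot|\le 6\sum\bigl(\min\{\tilde{x}^{(t)}_{\{v,j\}},1/3\}-\tilde{x}^{(t)}_{\{v,i\}}\bigr)$, and each summand is upper-bounded by $\tilde{x}^{(t)}_{\{i,j\}}$ via the triangle inequality. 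For the remaining separated pairs (those with $\tilde{x}^{(t)}_{\{v,j\}}\ge 1/2$), the triangle inequality alone gives $\tilde{x}^{(t)}_{\{i,j\}}\ge 1/2-1/3=1/6$, so the identical factor-$6$ averaging bound applies. Summing across all cuts and splitting the resulting sum according to whether $\{i,j\}$ lies in $E^{(t)}$ or $\OP{NFbdNE}^{(t)}$, the edge portion slots directly inside the LP objective (contributing the leading factor $6$) while the non-edge portion is absorbed by a second application of Lemma~\ref{lemma-opt-sols-nonforbidden-pair-value} (contributing the extra $1$), producing the desired bound of $7\,\text{LP}^*$.

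The main obstacle is that the triangle-inequality charge $\tilde{x}^{(t)}_{\{i,j\}}$ is natural for an edge pair, where it coincides with the LP cost, but \emph{not} for a non-forbidden non-edge pair, whose LP cost is the complement $1-\tilde{x}^{(t)}_{\{i,j\}}$. The rescue is precisely the surprising property captured by Lemma~\ref{lemma-opt-sols-nonforbidden-pair-value}: the entire weighted count of non-forbidden non-edge pairs is already paid for by $\text{LP}^*$ itself, so each such pair can be billed as a unit of LP value rather than through its $\tilde{x}$-coordinate. Arranging this split cleanly so that each non-forbidden non-edge pair is charged once for the ``$+1$'' and once (via the trivial bound $\tilde{x}^{(t)}_{\{i,j\}}\le 1$) inside the factor $7$ is the delicate bookkeeping step that yields the clean ratio $7c(\alpha)+1\approx 25.7846$ rather than a larger constant.
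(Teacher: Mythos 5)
Your architecture is exactly the paper's: inequality~(\ref{ieq-overall-simp-version-1}) plus Lemmas~\ref{lemma-number-of-disagreements-forbidden-pairs-within-P} and~\ref{lemma-number-of-disagreements-restate-non-forbidden-pairs-P-Pp} give $\#(\MC{P}^{(t)}) \le |\OP{NFbdNE}^{(t)}| + c(\alpha)\,|\OP{NFPrs}(\MC{Q}^{(t)})|$, and the task reduces to showing $\sum_t \delta_t\,|\OP{NFPrs}(\MC{Q}^{(t)})| \le 7\cdot\mathrm{LP}^*$ with Lemma~\ref{lemma-opt-sols-nonforbidden-pair-value} supplying the ``$+1$''. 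The gap is in how you convert the averaging bound of Lemma~\ref{lemma-non-forbidden-pairs-separated-avg-cost-1-3} into an LP charge. As written, you bound every summand $\min\{\tilde x_{\{v,j\}},1/3\}-\tilde x_{\{v,i\}}$ by $\tilde x_{\{i,j\}}$, obtaining $|\OP{NFPrs}(Q_1,Q_2)|\le 6\sum \tilde x_{\{i,j\}}$. For an edge pair this is six times its LP cost, but for a non-forbidden non-edge pair the charge $6\tilde x_{\{i,j\}}$ can be as large as $5$ (close pairs have $\tilde x_{\{i,j\}}\le 5/6$) and nearly $6$ for far pairs, whereas Lemma~\ref{lemma-opt-sols-nonforbidden-pair-value} only lets you bill such a pair \emph{one} unit. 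Your stated accounting therefore yields $\sum_t\delta_t|\OP{NFPrs}(\MC{Q}^{(t)})|\le 6\,\mathrm{LP}^*+6\,\mathrm{LP}^*=12\,\mathrm{LP}^*$, i.e.\ a ratio of $12c(\alpha)+1$, not $7c(\alpha)+1$. The ``delicate bookkeeping step'' you explicitly defer is exactly where the paper does the remaining work, so the proposal as it stands does not establish the claimed constant.

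The missing step is the paper's Lemma~\ref{lemma-non-forbidden-pairs-separated-in-Q-1-3} (packaged as Corollary~\ref{cor-non-forbidden-pairs-separated-orig-objective-restate}): for a non-edge cut pair with $j\in\OP{Ball}^{(x)}_{<1/2}(v,Q_2)$, the triangle inequality gives $\tilde x_{\{i,j\}}\le 5/6$, hence $\bigl(1-\tilde x_{\{i,j\}}\bigr)+\tfrac16\ \ge\ \tfrac13\ \ge\ \min\{\tilde x_{\{v,j\}},1/3\}-\tilde x_{\{v,i\}}$. Bounding the non-edge summands this way (and charging far non-edge pairs one count unit each, since $7-6\tilde x_{\{i,j\}}\ge 1$) yields $|\OP{NFPrs}(\MC{Q}^{(t)})|\le 6\,\mathrm{LP}_t+|\OP{NFbdNE}^{(t)}|$: each non-edge pair contributes its \emph{own} LP cost $1-\tilde x_{\{i,j\}}$ to the factor-$6$ term and exactly one unit to the term absorbed by Lemma~\ref{lemma-opt-sols-nonforbidden-pair-value}, giving $7\,\mathrm{LP}^*$ and hence the stated ratio $7c(\alpha)+1$.
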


\smallskip

This yields the approximation guarantee of {$25.7846$.}

\section{Bounding the Weighted Disagreements}

In this section we provide the proofs for the lemmas described in the previous section.

\subsection{Cardinality Bounds for $P \in \MC{P}^{(t)}$}

\label{sec-cardinality-bounds}

To prove Lemma~\ref{lemma-number-of-disagreements-forbidden-pairs-within-P} and Lemma~\ref{lemma-number-of-disagreements-restate-non-forbidden-pairs-P-Pp}, one of the key ingredients is a set of cardinality bounds regarding the territory of any cluster in terms of its core.

In particular, the intersection requirement in~(\ref{ieq-concentrating-requirement}) leads to a decrease of the non-core territory in a geometric order for any cluster in the hierarchy.

\smallskip

Let $P \in \MC{P}^{(t)}$ be a cluster in the $t$-th layer.
Recall that $\Delta^{(t)}(P)$ denotes the gluer set of $P$ and $\Delta^{(t)}_+(P) := P \cap \Delta^{(t)}(P)$ is referred to as the core set of $P$.
Additionally define 
\begin{itemize}
	\item
		$\OP{Ext}^{(t)}(P) := P \setminus \Delta^{(t)}_+(P)$ to be the extended part of $P$,
		
	\item
		$\ell(t,P)$ to be the top-most layer up to the $t$-th layer at which $P$ is newly-created, and 
		
		{$L^{(t)}_1(P)$} to be the elements in the $2/3$-vicinity of $P' \cap \Delta^{(\ell(t,P))}_+(P)$ within $P'$ at the $\ell(t,P)$-th layer over all $\vphantom{\text{\Large T}_\text{\Large T}} P' \in \OP{Candi}^{(\ell(t,P))}( \Delta^{(\ell(t,P))}(P) )$.
		
		Formally, 
		$$L^{(t)}_1(P) \;\; := \; \bigcup_{ P' \in \OP{Candi}^{(\ell(t,P))}( Q_P ) } \OP{Ball}^{(\ell(t,P))}_{<2/3} \left( P' \cap Q_P, \hspace{1pt} P' \cap \overline{Q_P} \right),$$
		where we use $Q_P := \Delta^{(\ell(t,P))}(P)$ to denote the gluer set of $P$ at the $\ell(t,P)$-th layer. 
        We note that $\ell(t,P)$ is always well-defined.
				
		Refer to the figure below for an illustration.
		Note that it follows that $|L^{(t)}_1(P)| < \alpha \cdot | \Delta^{(t)}_+(P)|$ by the merging condition in Algorithm~\ref{algo-hier-clustering}.

\end{itemize}

\begin{figure*}[h]
\centering
\includegraphics[scale=0.8]{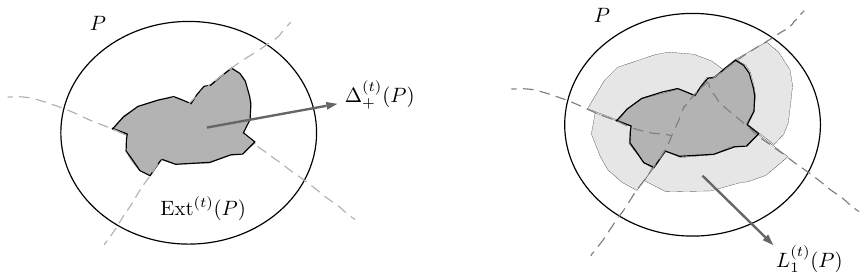}
\end{figure*}

\noindent
We prove the following helper lemma regarding the cardinality of the extended part of $P$ and the reasonably dense structure in any $2/3$-vicinity of it.
The statements are proved using the intersection requirement~(\ref{ieq-concentrating-requirement}) in Algorithm~\ref{algo-hier-clustering} and the diameter bound of $1/3$ for each pre-cluster.

\begin{lemma} \label{lemma-ieq-ext-delta-plus-with-k}
Let $P \in \MC{P}^{(t)}$ be a cluster.
We have
$$|\OP{Ext}^{(t)}(P)| \; \le \; {\min}\left\{ \;\; \frac{\alpha}{1-\alpha} \cdot |\Delta^{(t)}_+(P)|, \;\; \frac{1}{1-\alpha} \cdot |L^{(t)}_1(P)| \;\; \right\}.$$
Furthermore, for any nonempty $A \subseteq \OP{Ext}^{(t)}(P)$, there exists $K^{(t)}_P(A) \subseteq \OP{Ball}^{(t)}_{<2/3}(A, P \setminus A)$ such that 
$$|A| \; \le \; \frac{\alpha}{1-\alpha} \cdot |K^{(t)}_P(A)|.$$
\end{lemma}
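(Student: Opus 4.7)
The plan is to prove Lemma~\ref{lemma-ieq-ext-delta-plus-with-k} by strong induction on the layer index $t$, establishing both assertions simultaneously so that each can invoke the other at lower layers. The base case $t=0$ is trivial: every $P\in\MC{P}^{(0)}$ is a singleton, so $\OP{Ext}^{(0)}(P)=\emptyset$ and both assertions hold vacuously. For the inductive step, I split on whether $P\in\MC{P}^{(t)}$ was previously-formed or newly-created at layer $t$. If $P$ is previously-formed, then $\Delta^{(t)}(P)$, $\Delta^{(t)}_+(P)$, $\OP{Ext}^{(t)}(P)$, $\ell(t,P)$, and $L^{(t)}_1(P)$ all coincide with their values at $t-1$, so the first assertion follows directly from the inductive hypothesis. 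For the existential statement, the witnesses $K^{(t-1)}_P(A)$ remain valid at layer $t$ because distances are non-increasing bottom-up, so $2/3$-balls only grow.

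The substantive work is in the newly-created case, where $P=\bigcup_i P_i$ for $P_i\in\OP{Candi}^{(t)}(Q)$ and $Q=\Delta^{(t)}(P)$. Here $\Delta^{(t)}_+(P)=P\cap Q$, $\OP{Ext}^{(t)}(P)=\bigcup_i(P_i\cap\overline{Q})$, $\ell(t,P)=t$, and $L^{(t)}_1(P)=\bigcup_i B_{P_i}$ with $B_{P_i}:=\OP{Ball}^{(t)}_{<2/3}(P_i\cap Q,\, P_i\cap\overline{Q})$; summing the merging condition~(\ref{ieq-concentrating-requirement}) over $i$ immediately yields $|L^{(t)}_1(P)|<\alpha\,|\Delta^{(t)}_+(P)|$. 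To derive the $|L^{(t)}_1(P)|$-bound, define $A^{(i)}:=(P_i\cap\overline{Q})\setminus B_{P_i}$. Because $\Delta^{(t-1)}_+(P_i)$ has diameter below $1/3$ at layer $t$ (by construction plus monotonicity) and meets $Q$ via the merging condition, $\Delta^{(t-1)}_+(P_i)\cap\overline{Q}\subseteq B_{P_i}$, forcing $A^{(i)}\subseteq\OP{Ext}^{(t-1)}(P_i)$. Applying the inductive second assertion at $(P_i,A^{(i)})$ produces a witness set $K^{(i)}$ with $|A^{(i)}|\le\frac{\alpha}{1-\alpha}|K^{(i)}|$, and a short contradiction argument---any witness in $Q$ would, by distance monotonicity, certify the matched far-element as lying in $B_{P_i}$---forces $K^{(i)}\subseteq B_{P_i}$. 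Summing over $i$ yields $|\OP{Ext}^{(t)}(P)\setminus L^{(t)}_1(P)|\le\frac{\alpha}{1-\alpha}|L^{(t)}_1(P)|$, hence $|\OP{Ext}^{(t)}(P)|\le\frac{1}{1-\alpha}|L^{(t)}_1(P)|$; combined with $|L^{(t)}_1(P)|<\alpha|\Delta^{(t)}_+(P)|$ this also delivers the $|\Delta^{(t)}_+(P)|$-bound.

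For the second assertion in the newly-created case, given a nonempty $A\subseteq\OP{Ext}^{(t)}(P)$, I decompose $A$ along the $P_i$'s and each $A^{(i)}=A^{(i)}_1\cup A^{(i)}_2$ according to whether the element lies in $\Delta^{(t-1)}_+(P_i)$ or $\OP{Ext}^{(t-1)}(P_i)$. The diameter bound lets $\Delta^{(t-1)}_+(P_i)\cap Q\subseteq\Delta^{(t)}_+(P)\subseteq P\setminus A$ serve as close witnesses for $A^{(i)}_1$, while the inductive second assertion applied to $(P_i,A^{(i)}_2)$ supplies witnesses in $P_i\setminus A^{(i)}_2$ that remain within $2/3$ at layer $t$ by distance monotonicity. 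The main obstacle is assembling these per-$P_i$ contributions into a single $K^{(t)}_P(A)\subseteq P\setminus A$ meeting the $\alpha/(1-\alpha)$-ratio: the inductive witnesses for $A^{(i)}_2$ may intersect $A^{(i)}_1$, and the diameter witnesses for $A^{(i)}_1$ must be sized against $\Delta^{(t-1)}_+(P_i)\cap Q$ rather than the looser $P_i\cap Q$ appearing in the merging condition. I expect to absorb these overlaps and close the ratio by blending the merging bound $|B_{P_i}|<\alpha|P_i\cap Q|$ with the inductive cardinality bounds on $\OP{Ext}^{(t-1)}(P_i)$ inside each $P_i$, reflecting the geometric decay of non-core territory that the paper flags as the key structural property.
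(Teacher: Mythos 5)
Your simultaneous induction on the layer index is a genuinely different organization from the paper's proof, which instead builds the merge-tree $\MC{T}_P$ once and for all, extracts disjoint shells $L_1, L_2, \ldots$ by a pre-order traversal for the first assertion, and uses a ``frontier'' of tree nodes whose cores meet $A$ (but whose ancestors' cores do not) for the second. Your handling of the first assertion in the newly-created case is correct: $A^{(i)}:=(P_i\cap\overline{Q})\setminus B_{P_i}$ does lie in $\OP{Ext}^{(t-1)}(P_i)$ because $\Delta^{(t-1)}_+(P_i)\cap\overline{Q}\subseteq B_{P_i}$, the monotonicity contradiction does force the inductive witness into $B_{P_i}$, and summing over $i$ gives $|\OP{Ext}^{(t)}(P)|\le\frac{1}{1-\alpha}|L^{(t)}_1(P)|$.

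The gap is exactly the obstacle you flag for the second assertion, and it does not close by ``blending.'' With your split $A\cap P_i=A^{(i)}_1\cup A^{(i)}_2$, the inductive witness $K^{(i)}_2$ for $A^{(i)}_2$ lives anywhere in $P_i\setminus A^{(i)}_2$, so in the worst case it sits entirely inside $A^{(i)}_1\cup(P_i\cap Q)$; after discarding $K^{(i)}_2\cap A^{(i)}_1$ (bounded only by $|A^{(i)}_1|<\alpha|P_i\cap Q|$) and merging $K^{(i)}_2\cap Q$ into the witnesses already claimed for $A^{(i)}_1$, the accounting yields $|A\cap P_i|\le\frac{2\alpha}{1-\alpha}|P_i\cap Q|+\frac{\alpha}{1-\alpha}|K^{(i)}_2\setminus(A\cup Q)|$ against only $|P_i\cap Q|+|K^{(i)}_2\setminus(A\cup Q)|$ witnesses --- a factor of $2$ off on the first term. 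The repair is a different case split, which is precisely what the paper's frontier achieves: if $A\cap\Delta^{(t-1)}_+(P_i)\neq\emptyset$, do \emph{not} recurse into $P_i$; instead charge all of $A\cap P_i\subseteq P_i\setminus Q$ to the witness set $P_i\cap Q$, using $|P_i\setminus Q|\le\frac{1}{1-\alpha}|B_{P_i}|<\frac{\alpha}{1-\alpha}|P_i\cap Q|$ (your own Part~I computation) together with the fact that the two diameter bounds place every point of $P_i\cap Q$ within $2/3$ of any point of $A\cap\Delta^{(t-1)}_+(P_i)$. If instead $A\cap\Delta^{(t-1)}_+(P_i)=\emptyset$, then $A\cap P_i\subseteq\OP{Ext}^{(t-1)}(P_i)$ and the inductive witness lies in $P_i\setminus(A\cap P_i)=P_i\setminus A$, hence is automatically disjoint from $A$. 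The resulting witness sets live in distinct $P_i$'s, so they are pairwise disjoint and the ratio $\frac{\alpha}{1-\alpha}$ is preserved under the union. Be aware that this is not an optional refinement: since your induction is simultaneous and Part~I at layer $t$ invokes Part~II at layer $t-1$, the unresolved second assertion blocks the first as well.
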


\smallskip

We prove the statements in Lemma~\ref{lemma-ieq-ext-delta-plus-with-k} separately.
Note that it suffices to prove the statements for the $\ell(t,P)$-th layer.
Hence, in the following we assume that $P$ is newly-created at the $t$-th layer.

\smallskip

Consider a tree $\MC{T}_P$ built to represent the sequence of set-merging processes leading to $P$, where each node $v \in \MC{T}_P$ is associated with the following two auxiliary information.
\begin{enumerate}
	\item
		$H(v)$ which is a cluster newly-created at the $t'$-th layer for some $t' \le t$.
		Literally this will be the set to which the node $v$ corresponds.
				
	\item
		$\ell(v)$ which is an index of a layer at which $H(v)$ is newly-created. Refer to the construction described below.

\end{enumerate}

\smallskip

We define $\MC{T}_P$ by describing a procedure to construct it. The process starts with a singleton tree with the root node $r$ such that $H(r) := P$ and $\ell(r) := t$.
In each of the iterations that follow, consider the set of current leaf nodes $v$ in $\MC{T}_P$ with $\ell(v) > 1$.
For each of such leaf nodes $v$, consider the sets contained in $\OP{Candi}^{(\ell(v))}(\Delta^{(\ell(v))}(v))$.
For each $P' \in \OP{Candi}^{(\ell(v))}(\Delta^{(\ell(v))}(v))$, create a node for $P'$, say, $u$, as a child node of $v$.
Set $H(u) := P'$ and $\ell(u)$ to be the largest index between $1$ and $\ell(v)$ such that $P'$ is newly-created at the $\ell(u)$-th layer.

\smallskip

\begin{figure*}[h]
\centering
\includegraphics[scale=0.7]{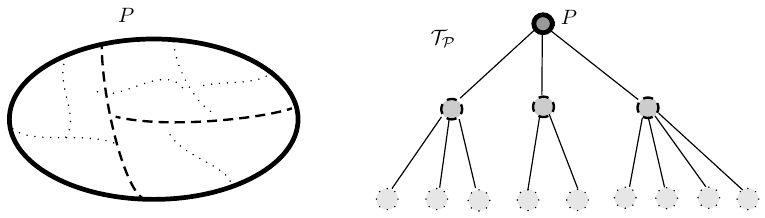}
\end{figure*}

\begin{proof}[{Proof of Lemma~\ref{lemma-ieq-ext-delta-plus-with-k}, Part~I}]
Use a pre-order traversal on $\MC{T}_P$ to define a set of layers as follows.
Initially, define $A_1 := \Delta^{(t)}_+(P)$ and $\OP{Base}_1 := A_1 \cup L_1$, where
$$L_1(P) \;\; := \; \bigcup_{ P' \in \OP{Candi}^{(t)}( \Delta^{(t)}(P) ) } \OP{Ball}^{(t)}_{<2/3} \left( P' \cap \Delta^{(t)}(P), \; P' \cap \overline{\Delta^{(t)}(P)} \right).$$
The traversal starts with the root node $P$ and the initial index $i = 1$.
For any vertex $v$ encountered during the traversal, process $v$ as follows.
If $\Delta^{(\ell(v))}_+(H(v)) \subseteq \OP{Base}_i$, then nothing needs to be done. In this case we proceed to the next vertex directly.

\smallskip

\begin{figure*}[h]
\centering
\includegraphics[scale=0.84]{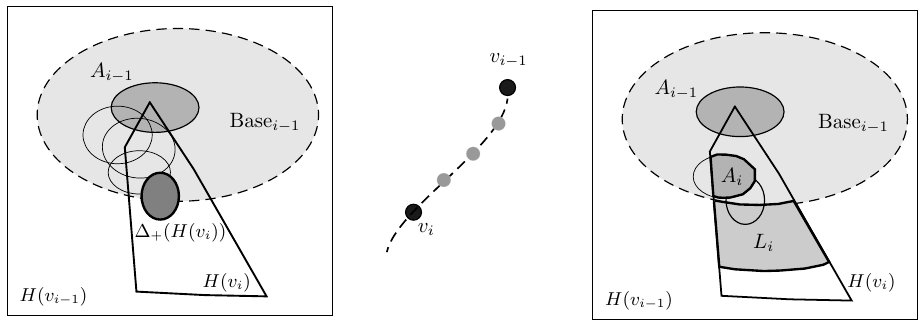}
\caption{From $v_{i-1}$ on which $A_{i-1}$ and $\OP{Base}_{i-1}$ are defined, identify the first descendant $v_i$ whose core set is not contained within $\OP{Base}_{i-1}$. Then $A_i$, $L_i$, and $\OP{Base}_i$ are defined accordingly. 
}
\label{fig-core-tree-S-path1}
\end{figure*}

On the other hand, if $\Delta^{(\ell(v))}_+(H(v)) \not\subseteq \OP{Base}_i$, then consider the parent node $p(v)$ of $v$ in $\MC{T}_P$.
Such node exists since $\Delta^{(\ell(r))}_+(H(r)) \subseteq \OP{Base}_1$ for the root node $r$.
Let $Q(v) := \Delta^{(\ell(p(v))}(H(p(v)))$ denote the gluer set of $H(p(v))$.  
Increase $i$ by one and define
$$A_i \; := \; H(v) \cap Q(v) 
\quad \text{and} \quad
\OP{Base}_i \; := \; \OP{Base}_{i-1} \; \cup \; \OP{Ball}^{(\ell(p(v)))}_{<2/3} \left( \; A_i, \; H(v) \cap \overline{A_i} \; \right). 
$$
Note that $A_i\neq\emptyset.$
Also refer to Figure~\ref{fig-core-tree-S-path1} for an illustration for the definitions.
For any $i \ge 2$, define $$L_i := \OP{Base}_i \setminus \OP{Base}_{i-1}.$$

\smallskip

For any index $i \ge 2$, let $v_i$ denote the specific vertex at which the sets $A_i$, $\OP{Base}_i$, and $L_i$ are defined during the pre-order traversal.
For $i = 1$, define $v_1$ to be the root node $r$ for consistency.
Also refer to Figure~\ref{fig-core-tree-S-path1} for an illustration of the definitions.

\medskip

We prove the following {two invariant conditions} regarding the sets defined during the traversal.
\begin{enumerate}
	\item[i.]
		$|L_i| \le \alpha |A_i|$ for any $i \ge 1$.
		
	\item[ii.]
		$A_i \cap A_j = \emptyset$ for any $i \neq j$. 
		
\end{enumerate}

\smallskip

\noindent
For condition~(i), it suffices to consider any $i \ge 2$.
By the definition of $L_i$ we have 
$$L_i \; \subseteq \; \OP{Ball}^{(\ell(p(v_i)))}_{<2/3}\left( \; A_i, \; H(v_i) \cap \overline{A_i} \; \right) \; = \; \OP{Ball}^{(\ell(p(v_i)))}_{<2/3}\left( \; H(v_i) \cap Q(v_i), \; H(v_i) \cap \overline{Q(v_i)} \; \right),$$
where we recall that $Q(v_i)$ denotes the gluer set of $H(p(v_i))$.
Since $Q(v_i)$ results in the merge of $H(v_i)$, condition~(\ref{ieq-concentrating-requirement}) is satisfied between $H(v_i)$ and $Q(v_i)$.
Hence, $|L_i| \le \alpha |A_i|$.

\smallskip

For condition~(ii), consider any $i,j$ with $1 \le j < i$.
Let $v_k$ be the least common ancestor of $v_i$ and $v_j$ in $\MC{T}_P$.
If $v_k \notin \{ v_i, v_j \}$, then $v_i$ and $v_j$ belong to different subtrees rooted at $v_k$.
Since the sets to which the children nodes of $v_k$ correspond form a partition of $H(v_k)$, it follows that $H(v_i) \cap H(v_j) = \emptyset$ and this condition holds.

\smallskip

Now consider the other case where $v_k \in \{ v_i, v_j\}$, in which $v_j$ is a {proper} ancestor of $v_i$.
Since the core of $v_i$ is not contained within $\OP{Base}_{i-1}$, there exists an element $q \in \Delta^{(\ell(v_i))}_+(H(v_i)) \setminus \OP{Base}_{i-1}$.
Observe that, since $A_i$ intersects $\Delta^{(\ell(v_i))}_+(v_i)$, by the diameter bounds of $Q(v_i)$ and $\Delta^{(\ell(v_i))}_+(v_i)$ together with the triangle inequality, we have
\begin{equation}
    \tilde{x}^{(\ell(p(v_i)))}_{\{q,w\}} \; < \; \frac{2}{3} \quad \text{for any $w \in A_i$}.
    \label{ieq-cardinality-bound-condition-2-1}
\end{equation}
Define 
$$A'_j := \begin{cases}
\; A_j, & \text{if $j > 1$,} \\[2pt]
\; A_1 \cap H(v'), \text{ where $v'$ is the child of $v_1$ such that $A_i \subseteq H(v')$}, & \text{otherwise.}
\end{cases}$$
Note that, to prove that $A_i \cap A_j = \emptyset$, it suffices to prove the statement for $A_i$ and $A'_j$.
For this, we prove the following claim.

\begin{claim*}
$$\tilde{x}^{(\ell(p(v_i)))}_{\{q,u\}} \; \ge \;\; \frac{2}{3} \quad \text{for any $u \in A'_j$}.$$
\end{claim*}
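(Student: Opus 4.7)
The plan is to exploit the hypothesis $q \notin \OP{Base}_{i-1}$. Since the $\OP{Base}_{\,\cdot\,}$ sets grow monotonically along the pre-order traversal and $j < i$, this yields $q \notin \OP{Base}_j$. Combined with the fact that the $H(\cdot)$-sets are nested along ancestry in $\MC{T}_P$ (each child's $H$-set is one of the candidates merged into its parent's $H$-set), we also have $q \in H(v_i) \subseteq H(v_j)$. I would then split on $j \ge 2$ versus $j = 1$, read the desired distance lower bound off the definition of $\OP{Base}_j$, and transfer from the layer at which that definition lives to the layer $\ell(p(v_i))$ using the LP's hierarchical monotonicity $\tilde{x}^{(t+1)}_{\{u,v\}} \le \tilde{x}^{(t)}_{\{u,v\}}$.

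For $j \ge 2$, the set $\OP{Base}_j$ contains $A_j$ together with $\OP{Ball}^{(\ell(p(v_j)))}_{<2/3}(A_j,\, H(v_j) \cap \overline{A_j})$. Using $q \in H(v_j)$ and $q \notin \OP{Base}_j$ gives both $q \in H(v_j) \cap \overline{A_j}$ and $\tilde{x}^{(\ell(p(v_j)))}_{\{q,u\}} \ge 2/3$ for every $u \in A_j = A'_j$. Because layer indices strictly decrease down $\MC{T}_P$, ancestry yields $\ell(p(v_i)) \le \ell(v_j) \le \ell(p(v_j))$, so
\[
  \tilde{x}^{(\ell(p(v_i)))}_{\{q,u\}} \;\ge\; \tilde{x}^{(\ell(p(v_j)))}_{\{q,u\}} \;\ge\; \tfrac{2}{3}.
\]
For $j = 1$, we have $A'_1 = A_1 \cap H(v') = H(v') \cap \Delta^{(t)}(P)$, where $v'$ is the child of the root with $A_i \subseteq H(v')$, and $H(v') \in \OP{Candi}^{(t)}(\Delta^{(t)}(P))$. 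From $q \notin \OP{Base}_1 = A_1 \cup L_1$ and $q \in H(v') \subseteq P$, I first get $q \in H(v') \cap \overline{\Delta^{(t)}(P)}$ (otherwise $q$ would lie in $A_1 = \Delta^{(t)}(P) \cap P$). The summand of $L_1$ indexed by $P' = H(v')$ then forces $\tilde{x}^{(t)}_{\{q,u\}} \ge 2/3$ for every $u \in A'_1$, and the layer transfer $\ell(p(v_i)) \le t$ plus monotonicity concludes.

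The step I expect to be the main obstacle is, in the $j \ge 2$ branch, justifying $q \notin A_j$ — this is what lets $q$ sit in the second argument $H(v_j) \cap \overline{A_j}$ of the ball, which is precisely where the ball-exclusion gives a useful distance bound. This requires reading the construction so that $A_j$ is absorbed into $\OP{Base}_j$, consistent with the explicit inclusion $A_1 \subseteq \OP{Base}_1$ in the base case and with the intent that once the traversal records a new frontier $A_j$ it counts as ``seen'' thereafter. A secondary bookkeeping subtlety is the alignment of layer indices between the definition of $\OP{Base}_j$ (which lives at $\ell(p(v_j))$) and the statement of the claim (at $\ell(p(v_i))$); this is resolved cleanly by the strict decrease of $\ell(\cdot)$ along descent in $\MC{T}_P$ together with the LP's non-increasing-bottom-up constraint on $\tilde{x}$.
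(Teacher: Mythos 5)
Your proof is correct and follows essentially the same route as the paper's: establish $q \in H(v_j)\setminus\OP{Base}_j$ from the nesting of the $H(\cdot)$-sets and the monotone growth of the $\OP{Base}$-sets, read the $2/3$ lower bound off the ball absorbed into $\OP{Base}_j$ (resp.\ the $L_1$-summand for $H(v')$ when $j=1$), and transfer it to layer $\ell(p(v_i))$ via the hierarchical monotonicity of $\tilde{x}$. The one subtlety you flag, namely $q\notin A_j$, is also left implicit in the paper, but it need not be handled by re-reading the definition of $\OP{Base}_j$: an easy induction along the pre-order traversal shows that every core $\Delta^{(\ell(v))}_+(H(v))$ is contained in the current $\OP{Base}$ once $v$ has been processed (either the membership test succeeded, or that core lies within distance $2/3$ of the newly defined $A_m$ by the two diameter bounds and the candidacy condition), and since $A_j$ is a subset of the core of $H(p(v_j))$, this gives $A_j\subseteq\OP{Base}_{j-1}$, so $q\notin\OP{Base}_{i-1}$ already forces $q\notin A_j$.
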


\begin{proof}
Consider the case for which $j \ge 2$.
We have
$$\OP{Base}_j \subseteq \OP{Base}_{i-1} \quad \text{and} \quad H(v_i) \subseteq H(v_j),$$
which shows that $q \in H(v_j) \setminus \OP{Base}_j$.
Since $j\ge 2$, from the construction, we have  that
$$\OP{Ball}^{(\ell(p(v_j)))}_{<2/3} \left( \; A_j, \; H(v_j) \cap \overline{A_j} \; \right) \; \subseteq \; \OP{Base}_j.$$
This implies that
$$\tilde{x}^{(\ell(p(v_i)))}_{\{q,u\}} \; \ge \;\; \tilde{x}^{(\ell(p(v_j)))}_{\{q,u\}} \; \ge \;\; \frac{2}{3} \quad \text{for any $u \in A_j$},$$
where the first inequality follows from the monotonic property of the distances over the layers.
Since $A'_j := A_j$ when $j > 1$, we are done with this case.

\smallskip

For the other case with $j = 1$, recall that $v'$ is the child of $v_1$ such that $A_i \subseteq H(v')$.
From the construction, we have 
$$\OP{Ball}^{(\ell(p(v')))}_{<2/3} \left( \; A_1', \; H(v_1) \cap \overline{A_1'} \; \right) \; \subseteq \; \OP{Base}_1,$$
and hence 
$$\tilde{x}^{(\ell(p(v')))}_{\{q,u\}} \; \ge \;\; \frac{2}{3} \quad \text{for any $u \in A'_1$},$$
Note that since $\ell(p(v'))=\ell(v_1)\ge \ell(p(v_i))$, the monotonicity over the layers completes the proof.
\end{proof}

Combining the above claim with~(\ref{ieq-cardinality-bound-condition-2-1}), we have $$\tilde{x}^{(\ell(p(v_i)))}_{\{u,w\}} \; \ge \;\; \tilde{x}^{(\ell(p(v_i)))}_{\{q,u\}} \; - \; \tilde{x}^{(\ell(p(v_i)))}_{\{q,w\}} \; > \; 0.$$
Since this holds for any $w \in A_i$ and any $u \in A'_j$, we have $A_i \cap A'_j = \emptyset$ and condition~(ii) follows.

\medskip

We are ready to prove the statement of this lemma.
It follows from the above definitions that
$L_i \cap L_j  =  \emptyset$ for all $i \neq j$ and $|\OP{Ext}^{(t)}(P)|  =  \sum_{i \ge 1} |L_i|.$
From invariant condition~(i) and~(ii), we obtain that
\begin{align*}
|\OP{Ext}^{(t)}(P)| \;\; = \;\; \sum_{i \ge 1} |L_i| \;\; 
= \;\; & |L_1| \; + \; \sum_{i \ge 2} |L_i|  \\[2pt]
\le \;\; & |L_1| \; + \; \alpha\cdot\sum_{i \ge 2} |A_i| \;\; \le \;\; |L_1| \; + \; \alpha \cdot |\OP{Ext}^{(t)}(P)|.
\end{align*}
This gives $|\OP{Ext}^{(t)}(P)| \le \frac{1}{1-\alpha} \cdot |L_1|$. 
The first part of the lemma follows from $|L_1| \le \alpha \cdot |\Delta^{(t)}_+(P)|$
\end{proof}

In the following we complete the second part of Lemma~\ref{lemma-ieq-ext-delta-plus-with-k}.

\begin{proof}[{Proof of Lemma~\ref{lemma-ieq-ext-delta-plus-with-k}, Part~II}]
Consider the tree $\MC{T}_P$ and the set of nodes $u \in \MC{T}_P$ whose core set intersects $A$ and whose every ancestor node has its core set being disjoint with $A$.
Formally,
$$\Delta^{(\ell(u))}_+(H(u)) \cap A \neq \emptyset \quad \text{ and } \quad \Delta^{(\ell(v))}_+(H(v)) \cap A = \emptyset \enskip \text{for any ancestor $v$ of $u$ in $\MC{T}_P$}.$$
Let $u_1, u_2, \ldots, u_k$ be the set of all such nodes and $p_1, p_2, \ldots, p_m$ be the parent nodes of $\{u_i\}_{1\le i\le k}$. 
Note that $m \le k$ since some nodes in $\{u_i\}_{1\le i\le k}$ may share a common parent.

\smallskip

It follows that $A \subseteq \bigcup_{1\le i\le m} \OP{Ext}^{(\ell(p_i))}(H(p_i))$ and hence 
$$|A| \; \le \; \frac{\alpha}{1-\alpha} \cdot \sum_{1\le i\le m} \left| \Delta^{(\ell(p_i))}_+(H(p_i)) \right|$$
by the bound proved above for the first part of this lemma.
Furthermore, for each $p_i$, there exists $u_j$ which is a child node of $p_i$ such that 
$$\Delta^{(\ell(p_i))}_+(H(p_i)) \; \cap \; \Delta^{(\ell(u_j))}_+(H(u_j)) \; \neq \; \emptyset \quad \text{and} \quad 
A \cap \Delta^{(\ell(u_j))}_+(H(u_j)) \neq \emptyset.$$
Since $\max\left\{ \; \OP{diam}^{(t)}(\Delta^{(\ell(p_i))}_+(H(p_i))), \; \OP{diam}^{(t)}( \Delta^{(\ell(u_j))}_+(H(u_j)) ) \; \right\} < 1/3$ by the monotonic property of the distance functions over the layers and the diameter bound of the core sets, it follows that
$$\bigcup_{1\le i\le m} \Delta^{(\ell(p_i))}_+(H(p_i)) \;\; \subseteq \;\; \OP{Ball}^{(t)}_{<2/3}(A, P \setminus A).$$
From the construction, $\OP{Ext}^{(\ell(p_i))}(H(p_i))$ for all $1\le i \le m$ are disjoint. Hence, taking $K^{(t)}_P(A) := \bigcup_{1\le i\le m} \Delta^{(\ell(p_i))}_+(H(p_i))$ completes the proof of this lemma.
\end{proof}

\medskip

\subsection{Counting the Number of Disagreements}

\label{sec-counting-disagreements}

We count the total number of disagreements in $\MC{P}^{(t)}$ in terms of the number of pairs in $\OP{NFPrs}(\MC{Q}^{(t)})$ for Lemma~\ref{lemma-number-of-disagreements-forbidden-pairs-within-P} and Lemma~\ref{lemma-number-of-disagreements-restate-non-forbidden-pairs-P-Pp}.
Recall that, for any $P \in \MC{P}^{(t)}$ and any $P' \in \MC{P}^{(t)}$, $P \neq P'$, 
\begin{itemize}
	\item
		$\#_{\OP{F}}(P)$ denotes the number of forbidden pairs clustered into $P$, and
		
	\item
		$\#_{\OP{NF}}(P,P')$ denotes the number of non-forbidden pairs between $P$ and $P'$.
\end{itemize}

\begin{figure*}[h]
\centering
\includegraphics[scale=0.76]{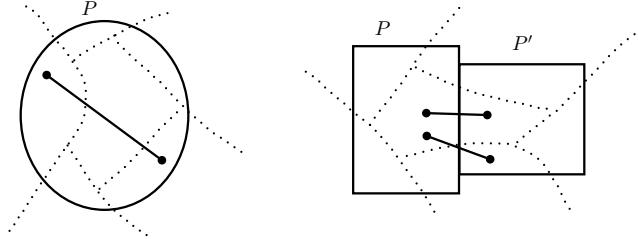}
\caption{Two types of disagreements to bound in this section -- (a) Forbidden pairs clustered into the same part $P$. (b) Non-forbidden pairs across different parts $P$ and $P'$. 
}
\end{figure*}

Also recall that for any cluster $P \in \MC{P}^{(t)}$, 
\begin{itemize}
	\item
		$\Delta^{(t)}(P)$ denotes the gluer set of $P$, $\Delta^{(t)}_+(P) := P \cap \Delta^{(t)}(P)$ is referred to as the core of $P$,

	\item
		$\OP{Ext}^{(t)}(P) := P \setminus \Delta^{(t)}_+(P)$ denotes the extended part of $P$, and
		
	\item
		$L^{(t)}_1(P)$ denotes the set of elements in the $2/3$-vicinity of $P' \cap \Delta^{(\ell(t,P))}_+(P)$ within $P'$ over all $\vphantom{\text{\Large T}_\text{\Large T}} P' \in \OP{Candi}^{(\ell(t,P))}( \Delta^{(\ell(t,P))}(P) )$, where $\ell(t,P)$ is the index of the top-most layer up to the $t$-th layer at which $P$ is newly-created.

\end{itemize}

\paragraph{Sketch of Lemma~\ref{lemma-number-of-disagreements-forbidden-pairs-within-P} -- Forbidden pairs within any $P$.}

To illustrate the ideas, we prove a weaker 
bound of $\frac{(2-\alpha)(1+\alpha)^2}{2 (1-\alpha)^2}$ for $\#_{\OP{F}}(P)$ in the following.
For $\frac{\beta(2-\alpha)(1+\alpha)^2}{2 (1-\alpha)^2}$ with $\beta := 0.8346$, we refer the readers to Section~\ref{sec-appendix-forbidden-pairs-within-P} in the appendix for the details.

\medskip

Let $P \in \MC{P}^{(t)}$ be a cluster.
Since $\OP{diam}^{(t)}(\Delta^{(t)}_+(P)) < 1/3$ and the distances are non-increasing bottom-up over the layers, forbidden pairs only occur between $\OP{Ext}^{(t)}(P)$ and $P$, i.e., no forbidden pairs reside within {$\Delta^{(t)}_+(P)$}.
Hence, we have
\begin{align}
\#_{\OP{F}}(P) \; 
\le \; |\OP{Ext}^{(t)}(P)| \cdot \left( \frac{|\OP{Ext}^{(t)}(P)|}{2} + |\Delta^{(t)}_+(P)| \right) \;
\le & \;\; \frac{1}{1-\alpha} \cdot |L^{(t)}_1(P)| \cdot \frac{2-\alpha}{2(1-\alpha)} \cdot |\Delta^{(t)}_+(P)|
\label{ieq-forbidden-same-P-orig-1-main} \\[2pt]
\le & \;\; \frac{2-\alpha}{2(1-\alpha)^2} \cdot \alpha \cdot |\Delta^{(t)}_+(P)|^2,
\label{ieq-forbidden-same-P-orig-2-main}
\end{align}
where in the last two inequalities we use the bounds from Lemma~\ref{lemma-ieq-ext-delta-plus-with-k}.

\smallskip

We have two cases to consider.
If $P$ is a newly-formed cluster at the $t$-th layer, then any pair between $\Delta^{(t)}_+(P)$ and $L^{(t)}_1(P)$ crosses different pre-clusters and is non-forbidden by the way $L^{(t)}_1(P)$ is defined.
Hence, these pairs are contained within $\OP{NFPrs}(\MC{Q}^{(t)}, P)$ and we have
$|L^{(t)}_1(P)| \cdot |\Delta^{(t)}_+(P)| \le \left| \OP{NFPrs}(\MC{Q}^{(t)}, P) \right|.$
It follows from~(\ref{ieq-forbidden-same-P-orig-1-main}) that
\begin{equation*}
\#_{\OP{F}}(P) \; \le \; \frac{2-\alpha}{2(1-\alpha)^2} \cdot \left| \OP{NFPrs}(\MC{Q}^{(t)}, P) \right|.
\end{equation*}

\begin{figure*}[tp]
\centering
\includegraphics[scale=0.8]{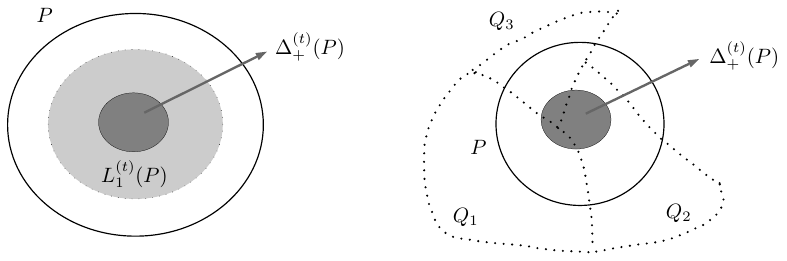}
\caption{
Two cases for the cluster $P$. (a) $P$ is newly-created at the $t$-th layer. In this case, 
the pairs between $L^{(t)}_1(P)$ and $\Delta^{(t)}_+(P)$ 
must be non-forbidden and reside between different pre-clusters within $P$.
(b) $P$ is created at a layer lower than $t$. In this case, $P \notin \OP{Candi}^{(t)}(Q)$ for any pre-cluster $Q$ that intersects $\Delta^{(t)}_+(P)$.
}
\label{fig-fbd-pair-counting-1}
\end{figure*}

\smallskip

Second, if $P$ is a previously-formed cluster at a lower layer, then consider the set of pre-clusters in $\MC{Q}^{(t)}$ that intersect the core set $\Delta^{(t)}_+(P)$.
Let $Q_1, \ldots, Q_k$ denote these pre-clusters and assume W.L.O.G. that $| Q_1 \cap \Delta^{(t)}_+(P) | = \max_{1\le j\le k} | Q_j \cap \Delta^{(t)}_+(P) |$.
Since $P \notin \OP{Candi}^{(t)}(Q_1)$, we have
\begin{equation}
B_1 \; := \; \left| \; \OP{Ball}^{(t)}_{<\frac{2}{3}} \left( \; P \cap Q_1, \; P \cap \overline{Q_1} \; \right) \; \right| \; \ge \; \alpha \cdot \left| P \cap Q_1 \right|.
\label{ieq-forbidden-same-P-concentration-condition-main}
\end{equation}

\noindent
We have two subcases to consider regarding the relative size of $| Q_j \cap \Delta^{(t)}_+(P) |$ for all $j$.

\begin{itemize}
\item
If $\sum_{2 \le j \le k} | Q_j \cap \Delta^{(t)}_+(P) | < \alpha \cdot | Q_1 \cap \Delta^{(t)}_+(P) |$, then
\begin{equation*}
|\Delta^{(t)}_+(P)|^2 \; \le \; (1+\alpha)^2 \cdot | Q_1 \cap \Delta^{(t)}_+(P) |^2 \; \le \; \frac{(1+\alpha)^2}{\alpha} \cdot | Q_1 \cap \Delta^{(t)}_+(P) | \cdot B_1
\end{equation*}
by Condition~(\ref{ieq-forbidden-same-P-concentration-condition-main}).
Since the pairs between $Q_1 \cap \Delta^{(t)}_+(P)$ and $\OP{Ball}^{(t)}_{<\frac{2}{3}} \left( \; P \cap Q_1, \; P \cap \overline{Q_1} \; \right)$ are non-forbidden, reside within $P$, and cross different pre-clusters, they are contained within $\OP{NFPrs}(\MC{Q}^{(t)}, P)$.
By~(\ref{ieq-forbidden-same-P-orig-2-main}) we have
\begin{equation*}
\#_{\OP{F}}(P) \; \le \; \frac{2-\alpha}{2(1-\alpha)^2} \cdot (1+\alpha)^2 \cdot \left| \OP{NFPrs}(\MC{Q}^{(t)}, P) \right|.
\end{equation*}

\item
If $\sum_{2 \le j \le k} | Q_j \cap \Delta^{(t)}_+(P) | \ge \alpha \cdot | Q_1 \cap \Delta^{(t)}_+(P) |$, since $\alpha \le 1/2$, it follows that $Q_1, \ldots, Q_k$ can be partitioned into two groups $\MC{G}_1$ and $\MC{G}_2$ such that\footnote{One of such ways is to consider $Q_j$ in non-ascending order of $| Q_j \cap \Delta^{(t)}_+(P) |$ for all $1\le j\le k$, and assign each $Q_j$ considered to the group that has a smaller intersection with $\Delta^{(t)}(P)$ in size.}
$$\alpha \cdot \sum_{Q \in \MC{G}_1} | Q \cap \Delta^{(t)}_+(P) | \;\; \le \;\; \sum_{Q \in \MC{G}_2} | Q \cap \Delta^{(t)}_+(P) | \;\; \le \;\; \sum_{Q \in \MC{G}_1} | Q \cap \Delta^{(t)}_+(P) |.$$

\noindent	
Define $G_1 := \sum_{Q \in \MC{G}_1} | Q \cap \Delta^{(t)}_+(P) |$ and $G_2 := \sum_{Q \in \MC{G}_2} | Q \cap \Delta^{(t)}_+(P) |$ for short.
We have
\begin{align*}
|\Delta^{(t)}_+(P)|^2 \;\; = \;\; ( G_1 + G_2 )^2 \;\; 
= \;\; & \left( \; \frac{G_1}{G_2} + \frac{G_2}{G_1} + 2 \; \right) \cdot G_1\cdot G_2 \notag \\[4pt]
\le \;\; & \left( \; \frac{1}{\alpha} + \alpha + 2 \; \right) \cdot G_1 \cdot G_2 \;\; = \;\; \frac{(1+\alpha)^2}{\alpha} \cdot G_1 \cdot G_2,
\end{align*}
where the last inequality follows since the function $f(x) = x+1/x$ attains its maximum value at $x = \alpha$ within the interval $[\alpha,1]$.
Since the pairs counted between $G_1$ and $G_2$ are contained within $\OP{NFPrs}(\MC{Q}^{(t)}, P)$, again we have
\begin{equation*}
\#_{\OP{F}}(P) \; \le \; \frac{2-\alpha}{2(1-\alpha)^2} \cdot (1+\alpha)^2 \cdot \left| \OP{NFPrs}(\MC{Q}^{(t)}, P) \right|.
\end{equation*}

\end{itemize}

\noindent
We provide the details for the improved bound $\frac{\beta(2-\alpha)(1+\alpha)^2}{2 (1-\alpha)^2}$ with $\beta := 0.8346$ in Section~\ref{sec-appendix-forbidden-pairs-within-P} in the appendix for further reference.

\paragraph{Proof of Lemma~\ref{lemma-number-of-disagreements-restate-non-forbidden-pairs-P-Pp} -- Non-forbidden pairs across $P$ and $P'$.}

This type of disagreements consists of two different types, namely, whether or not they reside within the same pre-cluster.

\begin{figure*}[h]
\centering
\includegraphics[scale=0.8]{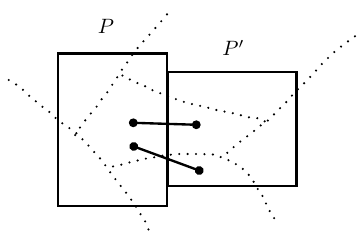}
\caption{
Two types of non-forbidden pairs across different parts $P$ and $P'$ -- Whether or not they reside within the same pre-cluster. 
}
\label{fig-disagreements-nfbd-1}
\end{figure*}

First, the number of non-forbidden pairs between $P$ and $P'$ that are not within the same pre-cluster is at most $\left|\OP{NFPrs}(\MC{Q}^{(t)}, P, P')\right|$,
where 
\begin{equation}
\OP{NFPrs}(\MC{Q}^{(t)}, P, P') := \{ \; \{i,j\} \in \OP{NFPrs}(\MC{Q}^{(t)}) \; \colon \; i \in P, \; j \in P' \; \}
\label{ieq-non-forbidden-diff-Q}
\end{equation}
denotes the set of non-forbidden pairs in $\OP{NFPrs}(\MC{Q}^{(t)})$ that resides between $P$ and $P'$.

\smallskip

Consider the set of non-forbidden pairs that resides between $P$ and $P'$ and that belongs to the same $Q$ for some $Q \in \MC{Q}^{(t)}$.
To count the number of such pairs, fix a $Q \in \MC{Q}^{(t)}$ with $P \cap Q \neq \emptyset$ and $P' \cap Q \neq \emptyset$.
By the design of Algorithm~\ref{algo-hier-clustering}, at most one of $P$ and $P'$ can be newly-created at this layer and have $Q$ being its gluer set. 

\smallskip

Without loss of generality, assume that $P$ is either previously-formed or newly-created with a gluer set other than $Q$.
In the following, for this $\{P, P'\}$ pair, we fix $P$ and count the number of non-forbidden pairs that reside in $Q$ and that have with one end in $P$ and the other end in $P'$.
We have two cases to consider.

\smallskip

If $Q \cap \Delta^{(t)}_+(P) = \emptyset$, then there exists $K^{(t)}_P(P\cap Q) \subseteq \OP{Ball}^{(t)}_{<2/3}(P \cap Q, P \setminus Q)$ such that 
$$|P \cap Q| \; \le \; \frac{\alpha}{1-\alpha} \cdot |K^{(t)}_P(P\cap Q)|$$
by Lemma~\ref{lemma-ieq-ext-delta-plus-with-k}.

Hence,
$$|P \cap Q| \cdot |P' \cap Q| \; \le \; \frac{\alpha}{1-\alpha} \cdot |K^{(t)}_P(P\cap Q)| \cdot |P'\cap Q|.$$
Note that, the pairs counted in the right-hand-side above reside between $P$ and $P'$.
Each of them has one end in $P' \cap Q$ and the other end in $P \cap Q'$ for some other pre-cluster $Q'$.
Moreover, they are non-forbidden.
It follows that
\begin{equation}
|P \cap Q| \cdot |P' \cap Q| \; \le \; \frac{\alpha}{1-\alpha} \cdot \left|\OP{NFPrs}(\MC{Q}^{(t)}, P, P', Q)\right|,
\label{ieq-non-forbidden-same-Q-1}
\end{equation}
where
$$\OP{NFPrs}(\MC{Q}^{(t)}, P, P', Q) := \{ \; \{i,j\} \in \OP{NFPrs}(\MC{Q}^{(t)}) \; \colon \; i \in P \setminus Q, \; j \in P' \cap Q \; \} $$ 
denotes the set of pairs in $\OP{NFPrs}(\MC{Q}^{(t)})$ that have one end in $P \setminus Q$ and the other in $P' \cap Q$.

\smallskip

For the second case, suppose that $Q \cap \Delta^{(t)}_+(P) \neq \emptyset$.
It follows that $P$ must be previously-formed. Furthermore, $P \notin \OP{Candi}^{(t)}(Q)$. Hence, we have
\begin{align}
|P \cap Q| \cdot |P' \cap Q| \; \le \;\; 
& \frac{1}{\alpha} \cdot \left| \; \OP{Ball}^{(t)}_{<\frac{2}{3}} \left( \; P \cap Q, \; P \cap \overline{Q} \; \right) \; \right|
\cdot |P' \cap Q| \notag \\[3pt]
\; \le \;\;
& \frac{1}{\alpha} \cdot \left|\OP{NFPrs}(\MC{Q}^{(t)}, P, P', Q)\right|.
\label{ieq-non-forbidden-same-Q-2}
\end{align}
Combining~(\ref{ieq-non-forbidden-same-Q-1}) and~(\ref{ieq-non-forbidden-same-Q-2}), it follows that 
$$\#_{\OP{NF}}(P,P',Q) \; \le \; \max \left\{ \; \frac{\alpha}{1-\alpha}, \; \frac{1}{\alpha} \; \right\} \cdot \left|\OP{NFPrs}(\MC{Q}^{(t)}, P, P', Q)\right|,$$
where $\#_{\OP{NF}}(P,P',Q)$ denotes the number of non-forbidden pairs that are between $P$ and $P'$ and that belong to $Q$.
Summing up over all $Q$ with $P \cap Q \neq \emptyset$ and $P' \cap Q \neq \emptyset$ and further taking~(\ref{ieq-non-forbidden-diff-Q}) into account, we obtain
\begin{equation}
\#_{\OP{NF}}(P,P') \; \le \; \left( \; \max \left\{ \; \frac{\alpha}{1-\alpha}, \; \frac{1}{\alpha} \; \right\} + 1 \; \right) \cdot \left|\OP{NFPrs}(\MC{Q}^{(t)}, P, P')\right|.
\label{ieq-non-forbidden-final}
\end{equation}

\smallskip

\subsection{Average Distance of Non-Forbidden Cut Pairs}

\label{sec-proof-avg-distance-1-3}

Consider the execution of Algorithm~\ref{algo-precluster}.
Let $x$ be the input distance function.
Suppose that the algorithm picks a pair $(v,Q)$ with $v \in Q \in \MC{Q}$ and $\max_{u \in Q} x_{\{u,v\}} \ge 1/3$ in some iteration and let $(Q_1, Q_2)$ with $v \in Q_1$ be the pair returned by the procedure {\sc One-Third-Refine-Cut}.

\smallskip

Recall that we use $\OP{NFPrs}(Q_1, Q_2)$ to denote the set of non-forbidden pairs between $Q_1$ and $Q_2$.
For the ease of notation define
$$B_{1/3} \; := \; \OP{Ball}^{(x)}_{<1/3}(v, Q), \quad B_{1/2} \; := \; \OP{Ball}^{(x)}_{<1/2}(v, Q), \quad \text{and} \quad Q'_2 := Q_2 \cap B_{1/2}.$$
We prove the following lemma.

\begin{lemma}[Restate of Lemma~\ref{lemma-non-forbidden-pairs-separated-avg-cost-1-3}] \label{lemma-non-forbidden-pairs-separated-avg-cost-1-3-restate}
$$
\sum_{ \substack{ \{i,j\} \in \OP{NFPrs}(Q_1, Q_2), \\[2pt] i \in Q_1, \; j \in Q'_2 } } \left( \; \min\left\{ \; x_{\{v,j\}}, \; \frac{1}{3} \;\right\} \; - \; x_{\{v,i\}} \; \right) \;\; \ge \;\;
\frac{1}{6} \; \cdot \; \left| \left\{ \; \substack{ \{i,j\} \in \OP{NFPrs}(Q_1, Q_2), \\[2pt] i \in Q_1, \; j \in Q'_2 } \; \right\} \right|.
$$
\end{lemma}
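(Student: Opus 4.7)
The plan is to case-split on which branch of \textsc{One-Third-Refine-Cut} was taken and to exploit the fact that the threshold in Condition~(\ref{ieq-algo-2-cut-precluster-condition}) is precisely tuned to deliver the $1/6$ average-distance bound in either case. Before splitting, I first observe that in both branches $Q_1 \subseteq B_{1/2}$, so $x_{\{v,i\}} < 1/2$ for every $i \in Q_1$, and by definition $x_{\{v,j\}} < 1/2$ for every $j \in Q'_2$. The triangle inequality then forces $x_{\{i,j\}} < 1$ for every such pair, so no pair $\{i,j\}$ with $i \in Q_1$, $j \in Q'_2$ can be forbidden. In particular, the cardinality on the right-hand side of the claimed inequality equals $|Q_1| \cdot |Q'_2|$ and the sum on the left ranges over all such pairs unrestricted.

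For the cut branch, $Q_1 = B_{1/3}$ and $Q'_2 = B_{1/2} \setminus B_{1/3}$, so every $j \in Q'_2$ satisfies $x_{\{v,j\}} \ge 1/3$ and hence $\min\{x_{\{v,j\}}, 1/3\} = 1/3$. The left-hand side then factors as
\[
|Q'_2| \cdot \left( \frac{|B_{1/3}|}{3} \;-\; \sum_{i \in B_{1/3}} x_{\{v,i\}} \right).
\]
Since the algorithm took this branch, Condition~(\ref{ieq-algo-2-cut-precluster-condition}) fails, which rearranges to $\frac{|B_{1/3}|}{3} - \sum_{i \in B_{1/3}} x_{\{v,i\}} > \frac{|B_{1/2}| + 1}{6}$. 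Comparing against the target $\frac{|Q_1| |Q'_2|}{6} = \frac{|B_{1/3}| |Q'_2|}{6}$ reduces the claim to $|B_{1/3}| \le |B_{1/2}| + 1$, which is immediate from $B_{1/3} \subseteq B_{1/2}$.

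For the singleton branch, $Q_1 = \{v\}$ and $Q'_2 = B_{1/2} \setminus \{v\}$. Using $x_{\{v,v\}} = 0$, splitting $Q'_2$ according to whether $x_{\{v,j\}}$ is below $1/3$ or at least $1/3$ rewrites the left-hand side as
\[
\sum_{j \in B_{1/3}} x_{\{v,j\}} \;+\; \frac{|B_{1/2}| - |B_{1/3}|}{3}.
\]
Now Condition~(\ref{ieq-algo-2-cut-precluster-condition}) holds, so substituting its lower bound $\frac{|B_{1/3}|}{3} - \frac{|B_{1/2}|}{6} - \frac{1}{6}$ for the first summand and simplifying collapses the expression exactly to $\frac{|B_{1/2}| - 1}{6}$, matching the right-hand side $\frac{|Q'_2|}{6}$.

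I do not expect any substantive obstacle: the main subtlety is simply recognizing that the correction terms $-\frac{1}{6}|B_{1/2}| - \frac{1}{6}$ in~(\ref{ieq-algo-2-cut-precluster-condition}) are calibrated so that both branches yield exactly the $1/6$ guarantee, after which the argument is straight bookkeeping. If anything is delicate, it is handling the $j = v$ term and the $\min\{\cdot, 1/3\}$ truncation cleanly so that the two branches combine through the same split.
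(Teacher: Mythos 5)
Your proof is correct and follows essentially the same route as the paper's: both reduce the claim to the two possible cuts returned by \textsc{One-Third-Refine-Cut}, both note that all $Q_1$--$Q'_2$ pairs are automatically non-forbidden, and both hinge on Condition~(\ref{ieq-algo-2-cut-precluster-condition}) being exactly the nonnegativity test for the singleton cut. The only (cosmetic) divergence is in the ball-cut branch, where you substitute the negated condition directly, while the paper derives the equivalent fact via a convex-combination identity showing that the better of the two cuts always has nonnegative value; your version is slightly more direct but mathematically the same.
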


\begin{proof}
For any $p,q \in B_{2/3}$, define $d(p,q) := | \min\{ x_{\{v, p\}}, 1/3\} - \min\{ x_{\{v,q\}}, 1/3\} | - 1/6$.
Since $Q_1 \subseteq B_{1/3}$, to prove the statement of this lemma, it suffices to prove that
\begin{equation}
\sum_{\{p,q\} \in \OP{NFPrs}(Q_1,Q'_2)} d(p,q) \; \ge \; 0.
\label{ieq-proof-pre-clustering-avg-frlp-1-3-1-0}
\end{equation}
From the setting of the procedure {\sc One-Third-Refine-Cut}, we have
$$(Q_1, Q'_2) \; \in \; 
\left\{ \; \begin{matrix}
\OP{Cut}_1 \; = \; \left( \; \{v\}, \; B_{1/2} \setminus \{v\} \; \right), \\[6pt]
\OP{Cut}_2 \; = \; \left( \; B_{1/3}, \; B_{1/2} \setminus B_{1/3} \; \right)
\end{matrix} \; \right\}.
$$

\noindent
Hence, to prove~(\ref{ieq-proof-pre-clustering-avg-frlp-1-3-1-0}), it suffices to prove that
\begin{equation}
W \; := \; \max_{1\le i\le 2} \left\{ \; \sum_{ \{p,q\} \in \OP{NFPrs}(\OP{Cut}_i) } d(p,q) \; \right\} \; \ge \; 0.
\label{ieq-proof-pre-clustering-avg-frlp-1-3-1}
\end{equation}
In the following we prove~(\ref{ieq-proof-pre-clustering-avg-frlp-1-3-1}).

\medskip

Let $k := |B_{1/3}|$ and $m := |B_{1/2} \setminus B_{1/3}|$.
For $\OP{Cut}_1$, 
we have
\begin{align}
\sum_{ \{p,q\} \in \OP{NFPrs}(\OP{Cut}_1) } d(p,q) \; 
= \; & \sum_{ q \in B_{1/3} } x_{\{v,q\}} \; + \; \frac{1}{3} \cdot |B_{1/2} \setminus B_{1/3}| \; - \; \frac{1}{6} \cdot \left( |B_{1/2}| - 1 \right) \notag \\[2pt]
= \; & \sum_{ q \in B_{1/3} } x_{\{v,q\}} \; + \; \frac{1}{6} \cdot |B_{1/2}| \; - \; \frac{1}{3} \cdot |B_{1/3}| \; + \; \frac{1}{6}
\label{ieq-proof-pre-clustering-avg-frlp-1-3-algo-condition} \\[2pt]
= \; & \sum_{ q \in B_{1/3} } x_{\{v,q\}} \; + \; \frac{1}{6} \cdot ( m-k+1).
\label{ieq-proof-pre-clustering-avg-frlp-1-3-2}
\end{align}
Note that the nonnegativity of~(\ref{ieq-proof-pre-clustering-avg-frlp-1-3-algo-condition}) is exactly tested by the procedure {\sc One-Third-Refine-Cut}.

\smallskip

For $\OP{Cut}_2$, observe that any $p \in B_{1/3}$ and $q \in B_{1/2}$ always forms a non-forbidden pair.
By a similar argument to the above, we have
\begin{align}
\sum_{ \{p,q\} \in \OP{NFPrs}(\OP{Cut}_2) } d(p,q) \; 
= \;\; &  \frac{1}{6} \cdot | \OP{NFPrs}(\OP{Cut}_2) | \; - \; m \cdot \sum_{ q \in B_{1/3} } x_{\{v,q\}}.
\label{ieq-proof-pre-clustering-avg-frlp-1-3-3}
\end{align}
From the definition of $W$ in~(\ref{ieq-proof-pre-clustering-avg-frlp-1-3-1}) combined with~(\ref{ieq-proof-pre-clustering-avg-frlp-1-3-2}) and~(\ref{ieq-proof-pre-clustering-avg-frlp-1-3-3}), we obtain
\begin{align}
W \; 
& \ge \;\; \frac{m}{m+1} \cdot \sum_{ \{p,q\} \in \OP{NFPrs}(\OP{Cut}_1) } d(p,q) \; + \; \frac{1}{m+1} \cdot \sum_{ \{p,q\} \in \OP{NFPrs}(\OP{Cut}_2) } d(p,q) \notag \\[6pt]
& = \;\; \frac{m}{m+1} \cdot \left( \; \sum_{ q \in B_{1/3} } x_{\{v,q\}} \; + \; \frac{1}{6} \cdot ( m-k +1) \; \right) + \; \frac{m}{m+1} \cdot \left( \; \frac{1}{6} \cdot k -  \sum_{ q \in B_{1/3} } x_{\{v,q\}} \; \right) \notag \\[6pt]
& = \;\; \frac{m}{6(m+1)}\cdot (m+1) \; \ge \; 0, \notag
\end{align}
where in the second last equality we use the fact that $| \OP{NFPrs}(\OP{Cut}_2) | = k\cdot m$.
\end{proof}

Recall that we define $Q'_2 := Q_2 \cap B_{1/2}$.
The following corollary, which is obtained by taking into accounts the pairs $\{i,j\}$ with $i \in Q_1$, $j \in Q_2 \setminus Q'_2$, summarizes the guarantee for the average distance of non-forbidden cut pairs.

\begin{corollary} \label{cor-non-forbidden-pairs-separated-avg-cost-1-3-overall}
\begin{align*}
\frac{1}{6} \cdot \left|  \OP{NFPrs}(Q_1, Q_2) \right|
\;\; \le \;\; 
& \sum_{ \substack{ \{i,j\} \in \OP{NFPrs}(Q_1, Q_2), \\[2pt] i \in Q_1, \; j \in Q'_2 } } \hspace{-4pt} \left( \; \min\left\{ \; x_{\{v,j\}}, \; \frac{1}{3} \;\right\} \; - \; x_{\{v,i\}} \; \right) \\[6pt]
& \hspace{0.8cm} + \;\; \sum_{ \substack{ \{i,j\} \in \OP{NFPrs}(Q_1, Q_2), \\[2pt] i \in Q_1, \; j \in Q_2 \setminus Q'_2, \\[2pt] \{i,j\} \in E^{(t)} } } x_{\{i,j\}}
\;\; + \;\; \frac{1}{6}\cdot \left| \left\{ \; \substack{ \{i,j\} \in \OP{NFPrs}(Q_1, Q_2), \\[2pt] i \in Q_1, \; j \in Q_2 \setminus Q'_2, \\[2pt] \{i,j\} \in NE^{(t)} } \; \right\} \right|
.
\end{align*}
\end{corollary}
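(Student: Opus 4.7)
The plan is to partition $\OP{NFPrs}(Q_1, Q_2)$ into three disjoint groups that align with the three sums on the right-hand side, and to verify that each group contributes at least $\frac{1}{6}$ per pair, so the three contributions together add up to $\frac{1}{6} \cdot |\OP{NFPrs}(Q_1, Q_2)|$. Since $(Q_1, Q_2)$ is a cut and $Q_2 = Q'_2 \sqcup (Q_2 \setminus Q'_2)$, I orient each $\{i,j\} \in \OP{NFPrs}(Q_1, Q_2)$ so that $i \in Q_1$ and $j \in Q_2$; it is then uniquely classified by whether $j \in Q'_2$ or $j \in Q_2 \setminus Q'_2$, and in the latter case further by whether $\{i,j\} \in E^{(t)}$ or $\{i,j\} \in NE^{(t)}$. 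This yields three groups that partition $\OP{NFPrs}(Q_1, Q_2)$ exactly as indexed by the three sums.

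For the group with $j \in Q'_2$, Lemma~\ref{lemma-non-forbidden-pairs-separated-avg-cost-1-3-restate} already asserts that the first sum on the right-hand side is at least $\frac{1}{6}$ times the cardinality of this group, so nothing more needs to be done there. For any pair in the remaining two groups, meaning $j \in Q_2 \setminus Q'_2$, I would apply the triangle inequality using that $j \notin B_{1/2}$ forces $x_{\{v,j\}} \ge 1/2$ while $i \in Q_1 \subseteq B_{1/3}$ forces $x_{\{v,i\}} < 1/3$, giving
\[
x_{\{i,j\}} \;\ge\; x_{\{v,j\}} - x_{\{v,i\}} \;>\; \frac{1}{2} - \frac{1}{3} \;=\; \frac{1}{6}.
\]
This pointwise lower bound shows that the middle sum contributes more than $\frac{1}{6}$ for each edge pair in its range, while for the non-edge pairs in its range the last term of the right-hand side already contributes exactly $\frac{1}{6}$ per pair by construction.

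Summing the three contributions then yields the desired inequality. The only technical content beyond the case split is the two-line triangle-inequality calculation above; the heavy lifting is packaged inside Lemma~\ref{lemma-non-forbidden-pairs-separated-avg-cost-1-3-restate}, so I do not anticipate any substantive obstacle. A minor bookkeeping point to double-check is that every pair in $\OP{NFPrs}(Q_1, Q_2)$ is indeed covered by exactly one of the three sums; this is immediate from the orientation convention and the disjointness of $Q'_2$ and $Q_2 \setminus Q'_2$.
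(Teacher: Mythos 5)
Your proof is correct and matches the paper's own argument: the paper proves this corollary with exactly the same decomposition, invoking Lemma~\ref{lemma-non-forbidden-pairs-separated-avg-cost-1-3-restate} for the pairs with $j \in Q'_2$ and observing that $x_{\{i,j\}} \ge x_{\{v,j\}} - x_{\{v,i\}} \ge 1/6$ for pairs with $i \in Q_1$ and $j \in Q_2 \setminus Q'_2$. Your triangle-inequality calculation and three-way case split are precisely the paper's (one-line) proof, just spelled out in more detail.
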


\begin{proof}
Observe that, for any $i \in Q_1$, $j \in Q_2 \setminus Q'_2$, we have $x_{\{i,j\}} \ge x_{\{v,j\}} - x_{\{v,i\}} \ge 1/6$.
\end{proof}

The following lemma relates the number of non-forbidden pairs separated by $Q_1$ and $Q_2$ to the objective value of the input distance function in terms of the original input instance $(E^{(t)}, NE^{(t)})$.

\begin{lemma} \label{lemma-non-forbidden-pairs-separated-in-Q-1-3}
$$\sum_{ \substack{ \\[2pt] \{i,j\} \in \OP{NFPrs}(Q_1, Q_2), \\[2pt] \{i,j\} \in E^{(t)} } } \hspace{-16pt} x_{\{i,j\}} \; + \hspace{-4pt} \sum_{ \substack{ \\[2pt] \{i,j\} \in \OP{NFPrs}(Q_1, Q_2), \\[2pt] \{i,j\} \in NE^{(t)} } } \hspace{-10pt} \left(1-x_{\{i,j\}} \right)
\;\; \ge \;\; \frac{1}{6}\cdot \left| \left\{ \; \substack{ \{i,j\} \in \OP{NFPrs}(Q_1,Q_2), \\[2pt] \{i,j\} \in E^{(t)} } \; \right\} \right|.$$
\end{lemma}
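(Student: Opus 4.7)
The plan is to reduce Lemma~\ref{lemma-non-forbidden-pairs-separated-in-Q-1-3} to Corollary~\ref{cor-non-forbidden-pairs-separated-avg-cost-1-3-overall} by a pair-wise comparison. First I would partition $\OP{NFPrs}(Q_1, Q_2)$ into four classes according to (i) whether the $Q_2$-endpoint lies in $Q'_2$ (``near'') or $Q_2\setminus Q'_2$ (``far''), and (ii) whether the pair is in $E^{(t)}$ or in $NE^{(t)}$. Write $N_E$ for the number of separated edge pairs (all automatically non-forbidden) and $N_{NE,\text{near}}$, $N_{NE,\text{far}}$ for the number of non-forbidden non-edge near and far pairs respectively, so that $|\OP{NFPrs}(Q_1,Q_2)| = N_E + N_{NE,\text{near}} + N_{NE,\text{far}}$. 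Moving the $\frac{1}{6}N_{NE,\text{far}}$ term of Corollary~\ref{cor-non-forbidden-pairs-separated-avg-cost-1-3-overall} to its left-hand side, one obtains
$$\tfrac{1}{6}\bigl(N_E + N_{NE,\text{near}}\bigr) \;\le\; S_1 + S_2,$$
where $S_1 := \sum_{i\in Q_1,\,j\in Q'_2}\bigl(\min\{x_{\{v,j\}},\tfrac{1}{3}\}-x_{\{v,i\}}\bigr)$ runs over near pairs in $\OP{NFPrs}(Q_1,Q_2)$ and $S_2 := \sum x_{\{i,j\}}$ over far edge pairs.

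The core of the argument is to bound $S_1+S_2$ term-by-term by the LHS of the target inequality. The far edge contributions in $S_2$ are already of the form $x_{\{i,j\}}$ and hence trivially absorbed. For a near edge pair, the triangle inequality gives $\min\{x_{\{v,j\}},\tfrac{1}{3}\}-x_{\{v,i\}}\le x_{\{v,j\}}-x_{\{v,i\}}\le x_{\{i,j\}}$. For a near non-edge pair, I would show the pointwise bound
$$\min\bigl\{x_{\{v,j\}},\tfrac{1}{3}\bigr\}-x_{\{v,i\}}-\tfrac{1}{6} \;\le\; 1-x_{\{i,j\}}.$$
Using $x_{\{i,j\}}\le x_{\{v,i\}}+x_{\{v,j\}}$, it suffices to verify $x_{\{v,j\}}+\min\{x_{\{v,j\}},\tfrac{1}{3}\}\le \tfrac{7}{6}$; since $j\in Q'_2\subseteq B_{1/2}$ one has $x_{\{v,j\}}<\tfrac{1}{2}$, and the left side is at most $\tfrac{5}{6}<\tfrac{7}{6}$. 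Summing all three pointwise bounds and subtracting $\tfrac{1}{6}N_{NE,\text{near}}$ from both sides yields
$$\tfrac{1}{6}N_E \;\le\; \sum_{\text{separated edge pairs}} x_{\{i,j\}} \;+\; \sum_{\text{non-edge near, non-forbidden}}\bigl(1-x_{\{i,j\}}\bigr).$$
Finally, the omitted far non-edge terms $(1-x_{\{i,j\}})$ are nonnegative (their pairs being non-forbidden means $x_{\{i,j\}}<1$), so adding them back only enlarges the right-hand side, yielding the full LHS of Lemma~\ref{lemma-non-forbidden-pairs-separated-in-Q-1-3}.

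The main obstacle is the non-edge near case: the naive triangle-inequality bound $1-x_{\{i,j\}}\ge 1-x_{\{v,i\}}-x_{\{v,j\}}$ is only usable because $Q'_2\subseteq B_{1/2}$ keeps $x_{\{v,j\}}$ strictly below $1/2$. Without this constraint the combination $x_{\{v,j\}}+\min\{x_{\{v,j\}},\tfrac{1}{3}\}$ could exceed $\tfrac{7}{6}$ and the inequality would fail, which explains why the earlier development deliberately splits $Q_2$ into a ``near'' ball $Q'_2$ and a ``far'' remainder and treats the far non-edge pairs only via the cheap pointwise bound $x_{\{i,j\}}\ge \tfrac{1}{6}$ already embedded in Corollary~\ref{cor-non-forbidden-pairs-separated-avg-cost-1-3-overall}.
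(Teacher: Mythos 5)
Your proposal is correct and follows essentially the same route as the paper: both arguments combine Corollary~\ref{cor-non-forbidden-pairs-separated-avg-cost-1-3-overall} with the same two pointwise comparisons (triangle inequality for edge pairs, and the bound $x_{\{i,j\}}\le x_{\{v,i\}}+x_{\{v,j\}}\le 5/6$ for near non-edge pairs with $j\in Q_2'$), then cancel the $\tfrac16\cdot|NE^{(t)}\cap\OP{NFPrs}(Q_1,Q_2)|$ slack. The only difference is bookkeeping — you move the far non-edge term to the other side of the corollary first, whereas the paper adds it at the end — so the content is identical.
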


\begin{proof}
Recall that $Q'_2 := Q_2 \cap B_{1/2}$.
First, we prove that
\begin{align}
& \sum_{ \substack{ \\[2pt] \{i,j\} \in \OP{NFPrs}(Q_1, Q_2), \\[2pt] \{i,j\} \in E^{(t)} } } \hspace{-16pt} x_{\{i,j\}} \; + \hspace{-4pt} \sum_{ \substack{ \\[2pt] \{i,j\} \in \OP{NFPrs}(Q_1, Q_2), \\[2pt] \{i,j\} \in NE^{(t)} } } \hspace{-10pt} \left(1-x_{\{i,j\}} \right) \; + \; \frac{1}{6}\cdot \left| \left\{ \; \substack{ \{i,j\} \in \OP{NFPrs}(Q_1,Q'_2), \\[2pt] \{i,j\} \in NE^{(t)} } \; \right\} \right| \notag \\[6pt]
& \hspace{1.8cm} \ge \;\; 
\sum_{ \substack{ \{i,j\} \in \OP{NFPrs}(Q_1, Q_2), \\[2pt] i \in Q_1, \; j \in Q'_2 } } \hspace{-4pt} \left( \; \min\left\{ \; x_{\{v,j\}}, \; \frac{1}{3} \;\right\} \; - \; x_{\{v,i\}} \; \right)
\; +  \sum_{ \substack{ \{i,j\} \in \OP{NFPrs}(Q_1, Q_2), \\[2pt] i \in Q_1, \; j \in Q_2 \setminus Q'_2, \\[2pt] \{i,j\} \in E^{(t)} } } \hspace{-8pt} x_{\{i,j\}}.
\label{ieq-non-forbidden-pairs-separated-in-Q-1-3-overall-1}
\end{align}

To prove~(\ref{ieq-non-forbidden-pairs-separated-in-Q-1-3-overall-1}), consider any $\{i,j\} \in \OP{NFPrs}(Q_1, Q_2)$ with $i \in Q_1$.
\begin{enumerate}
	\item
		If $\{i,j\}$ is an edge pair in $E^{(t)}$, then using the triangle inequality we have $x_{\{i,j\}} \ge x_{\{v,j\}} - x_{\{v,i\}}$ and hence $x_{\{i,j\}} \ge \min \left\{ x_{\{v,j\}}, \; \frac{1}{3} \right\} - x_{\{v,i\}} $. 

	\item
		If $\{i,j\}$ is a non-edge pair in $NE^{(t)}$ with $j \in Q'_2$, then 
		applying the setting and the triangle inequality we have $x_{\{i,j\}} \le x_{\{v,i\}} + x_{\{v,j\}} \le 5/6$, and hence
				$$\left( \; 1-x_{\{i,j\}} \; \right) \; + \; \frac{1}{6} \;\; \ge \;\; \frac{1}{3} \;\; \ge \;\; \min\left\{ \; x_{\{v,j\}}, \; \frac{1}{3} \;\right\} \; - \; x_{\{v,i\}}.$$
\end{enumerate}
The above compares the left-hand side of~(\ref{ieq-non-forbidden-pairs-separated-in-Q-1-3-overall-1}) with its right-hand side for all cases.
Hence, we have~(\ref{ieq-non-forbidden-pairs-separated-in-Q-1-3-overall-1}).
Adding
$\frac{1}{6}\cdot \left| \left\{ \; \substack{ \{i,j\} \in \OP{NFPrs}(Q_1, Q_2), \\[2pt] i \in Q_1, \; j \in Q_2 \setminus Q'_2, \\[2pt] \{i,j\} \in NE^{(t)} } \; \right\}^{\vphantom{\text{\large TT}}} \right|$ 
to both sides of~(\ref{ieq-non-forbidden-pairs-separated-in-Q-1-3-overall-1}) and applying Corollary~\ref{cor-non-forbidden-pairs-separated-avg-cost-1-3-overall}, it follows that %
$$
\sum_{ \substack{ \\[2pt] \{i,j\} \in \OP{NFPrs}(Q_1, Q_2), \\[2pt] \{i,j\} \in E^{(t)} } } \hspace{-20pt} x_{\{i,j\}} \; + \hspace{-6pt} \sum_{ \substack{ \\[2pt] \{i,j\} \in \OP{NFPrs}(Q_1, Q_2), \\[2pt] \{i,j\} \in NE^{(t)} } } \hspace{-22pt} \left(1-x_{\{i,j\}} \right) 
\; + \; \frac{1}{6} \cdot \left| \left\{ \substack{ \{i,j\} \in \OP{NFPrs}(Q_1, Q_2), \\[2pt] \{i,j\} \in NE^{(t)} } \right\} \right|
\; \ge \; 
\frac{1}{6} \cdot \left| \OP{NFPrs}(Q_1, Q_2) \right|,
$$
and this lemma follows.
\end{proof}

Since Lemma~\ref{lemma-non-forbidden-pairs-separated-in-Q-1-3} holds for every $(Q_1, Q_2)$ output by the procedure {\sc One-Third-Refine-Cut}, we have the following corollary for the pre-cluster $\MC{Q}$ output by Algorithm~\ref{algo-precluster}.

\begin{corollary}
\label{cor-non-forbidden-pairs-separated-orig-objective-restate}
$$\sum_{ \substack{ \\[2pt] \{i,j\} \in \OP{NFPrs}(\MC{Q}^{(t)}), \\[2pt] \{i,j\} \in E^{(t)} } } \hspace{-14pt} \tilde{x}^{(t)}_{\{i,j\}} \;\; + \hspace{-10pt} \sum_{ \substack{ \\[2pt] \{i,j\} \in \OP{NFPrs}(\MC{Q}^{(t)}), \\[2pt] \{i,j\} \in NE^{(t)} } } \hspace{-10pt} \left(1-\tilde{x}^{(t)}_{\{i,j\}} \right) \; + \; \frac{1}{6}\cdot \left| \left\{ \; \substack{ \{i,j\} \in \OP{NFPrs}(\MC{Q}^{(t)}), \\[2pt] \{i,j\} \in NE^{(t)} } \; \right\} \right|
\; \ge \; \frac{1}{6}\cdot \left| \OP{NFPrs}(\MC{Q}^{(t)}) \right|.$$
\end{corollary}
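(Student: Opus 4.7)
The plan is to derive the corollary by summing Lemma~\ref{lemma-non-forbidden-pairs-separated-in-Q-1-3} over every invocation of the procedure \textsc{One-Third-Refine-Cut} during the execution of Algorithm~\ref{algo-precluster} on input $\tilde{x}^{(t)}$, and then adjusting both sides by a simple additive term.

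First, I would establish the underlying combinatorial identity. The pre-clustering $\MC{Q}^{(t)}$ is built incrementally: at each step Algorithm~\ref{algo-precluster} selects a set $Q$ in the current collection together with a vertex $v \in Q$, and replaces $Q$ with the pair $(Q_1, Q_2)$ returned by \textsc{One-Third-Refine-Cut}. The refinement is hierarchical, so once two elements $i$ and $j$ lie in different sets after some cut, every later cut can only further subdivide one of those sets, never reunite them. Hence each pair $\{i,j\} \in \OP{NFPrs}(\MC{Q}^{(t)})$ is separated by exactly one cut step, and the collection $\{\OP{NFPrs}(Q_1, Q_2)\}$ indexed by the cut steps partitions $\OP{NFPrs}(\MC{Q}^{(t)})$. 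The same is true when restricted to $E^{(t)}$ or to $NE^{(t)}$.

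Next, I would apply Lemma~\ref{lemma-non-forbidden-pairs-separated-in-Q-1-3} to every cut step with the input distance function $x := \tilde{x}^{(t)}$ and sum the resulting inequalities over all cuts. The partition fact above lets the two sums on the left-hand side aggregate cleanly: the edge-pair contributions yield $\sum \tilde{x}^{(t)}_{\{i,j\}}$ over $\{i,j\} \in \OP{NFPrs}(\MC{Q}^{(t)}) \cap E^{(t)}$, the non-edge-pair contributions yield $\sum (1-\tilde{x}^{(t)}_{\{i,j\}})$ over $\{i,j\} \in \OP{NFPrs}(\MC{Q}^{(t)}) \cap NE^{(t)}$, while the right-hand side aggregates to $\tfrac{1}{6}\,\bigl|\OP{NFPrs}(\MC{Q}^{(t)}) \cap E^{(t)}\bigr|$.

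To finish, I would add $\tfrac{1}{6}\,\bigl|\OP{NFPrs}(\MC{Q}^{(t)}) \cap NE^{(t)}\bigr|$ to both sides; the right-hand side then becomes $\tfrac{1}{6}\,\bigl|\OP{NFPrs}(\MC{Q}^{(t)})\bigr|$, matching the stated inequality exactly, while the extra term on the left is the one that already appears in the corollary. No obstacle arises beyond verifying the partition identity, because Lemma~\ref{lemma-non-forbidden-pairs-separated-in-Q-1-3} is phrased for an arbitrary cut step and the call in question is precisely \textsc{Pre-clustering}$(\tilde{x}^{(t)})$, so the $x$ used in each application of the lemma is consistently $\tilde{x}^{(t)}$.
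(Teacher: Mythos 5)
Your proposal is correct and follows exactly the paper's (very terse) argument: the paper likewise obtains the corollary by noting that Lemma~\ref{lemma-non-forbidden-pairs-separated-in-Q-1-3} holds for every pair $(Q_1,Q_2)$ produced by \textsc{One-Third-Refine-Cut}, summing over all such cuts (using that the hierarchical refinement makes the sets $\OP{NFPrs}(Q_1,Q_2)$ partition $\OP{NFPrs}(\MC{Q}^{(t)})$), and then adding the $\tfrac{1}{6}\cdot|\OP{NFPrs}(\MC{Q}^{(t)})\cap NE^{(t)}|$ term to both sides. You have merely spelled out the details the paper leaves implicit.
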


\newpage

\subsection{Relating the Objectives}

\label{section-relating-objectives}

We prove the following key technical lemma regarding the weighted cardinality of non-forbidden non-edge pairs over the layers in any optimal LP-solution.

\begin{lemma}[Restate of Lemma~\ref{lemma-opt-sols-nonforbidden-pair-value}] \label{lemma-opt-sols-nonforbidden-pair-value-restate}
$$\sum_{1\le t\le \ell} \; \delta_t \cdot \left( \; \sum_{ \{u,v\} \in {E}^{(t)} } \tilde{x}^{(t)}_{u,v} \; + \; \hspace{-4pt} \sum_{ \{u,v\} \in NE^{(t)} } \left( \; 1 - \tilde{x}^{(t)}_{u,v} \; \right) \; \right) \; \ge \; \sum_{1\le t\le \ell} \; \delta_t \cdot \left| \; \OP{NFbdNE}^{(t)} \; \right|, $$
where we use $\OP{NFbdNE}^{(t)} := NE^{(t)} \setminus \OP{Fbd}^{(t)}$ to denote the set of non-forbidden non-edge pairs at the $t$-th layer.
\end{lemma}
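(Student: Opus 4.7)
My plan is to exploit the optimality of $\tilde{x}$ against a one-parameter family of feasible perturbations obtained by \emph{truncated scaling}: for $\lambda\ge 1$, define $\tilde{x}(\lambda)$ by $\tilde{x}^{(t)}_{u,v}(\lambda) := \min\{\lambda\, \tilde{x}^{(t)}_{u,v},\, 1\}$ for every layer $t$ and pair $\{u,v\}$, so that $\tilde{x}(1)=\tilde{x}$. The first step is to verify that $\tilde{x}(\lambda)$ remains LP-feasible for every $\lambda\ge 1$. The bounds $0\le \tilde{x}(\lambda)\le 1$ and the layerwise monotonicity $\tilde{x}^{(t+1)}(\lambda)\le \tilde{x}^{(t)}(\lambda)$ are immediate because $z\mapsto \min\{\lambda z,1\}$ is non-decreasing in $z$. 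The triangle inequality follows from the elementary identity $\min\{a,1\}+\min\{b,1\}\ge \min\{a+b,1\}$ (valid for all $a,b\ge 0$), applied with $a=\lambda\,\tilde{x}^{(t)}_{u,w}$ and $b=\lambda\,\tilde{x}^{(t)}_{w,v}$.

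The second step is to extract first-order information from the optimality of $\tilde{x}=\tilde{x}(1)$. Let $g(\lambda)$ denote the LP objective evaluated at $\tilde{x}(\lambda)$; since $g$ attains its minimum at $\lambda=1$, the right derivative satisfies $g'(1^+)\ge 0$. For $\lambda$ slightly larger than $1$, the forbidden pairs (those with $\tilde{x}^{(t)}_{u,v}=1$) stay at value $1$ and contribute nothing to $g'(1^+)$, whereas non-forbidden pairs satisfy $\tilde{x}^{(t)}_{u,v}(\lambda)=\lambda\,\tilde{x}^{(t)}_{u,v}$ and contribute $+\delta_t\,\tilde{x}^{(t)}_{u,v}$ if $\{u,v\}\in E^{(t)}$ and $-\delta_t\,\tilde{x}^{(t)}_{u,v}$ if $\{u,v\}\in NE^{(t)}$. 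Assembling the derivative gives
\begin{equation*}
0 \;\le\; g'(1^+) \;=\; \sum_{1\le t\le \ell}\delta_t\left(\;\sum_{\{u,v\}\in E^{(t)},\,\tilde{x}^{(t)}_{u,v}<1}\tilde{x}^{(t)}_{u,v} \;-\; \sum_{\{u,v\}\in\OP{NFbdNE}^{(t)}}\tilde{x}^{(t)}_{u,v}\;\right).
\end{equation*}

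The final step is a short rearrangement. Dropping the restriction $\tilde{x}^{(t)}_{u,v}<1$ on the left sum only enlarges it, yielding $\sum_t\delta_t\sum_{\{u,v\}\in E^{(t)}}\tilde{x}^{(t)}_{u,v}\ge \sum_t\delta_t\sum_{\{u,v\}\in\OP{NFbdNE}^{(t)}}\tilde{x}^{(t)}_{u,v}$. Combining with the identity $\sum_{\{u,v\}\in NE^{(t)}}(1-\tilde{x}^{(t)}_{u,v}) = |\OP{NFbdNE}^{(t)}| - \sum_{\{u,v\}\in\OP{NFbdNE}^{(t)}}\tilde{x}^{(t)}_{u,v}$ (the forbidden pairs in $NE^{(t)}$ contribute $0$ to the left side) and substituting into the LP objective gives the claimed inequality. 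The only mild obstacle I anticipate is the triangle-inequality check for the truncated scaling; once feasibility of the perturbed family is in place, everything afterwards is first-order optimality bookkeeping.
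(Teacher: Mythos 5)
Your proof is correct, and it takes a genuinely different route from the paper. The paper proves this lemma by LP duality: it substitutes the constant $1$ for every variable with $\tilde{x}^{(t)}_{\{u,v\}}=1$, prunes the resulting redundant constraints, verifies that the restriction of $\tilde{x}$ stays optimal for the modified LP, and then reads off the bound from the dual objective, using complementary slackness (all surviving primal variables are strictly below their upper bounds, so the duals of the constraints $x^{(t)}_{\{u,v\}}\le 1$ vanish) to conclude that the constant term $\sum_t \delta_t\,|\OP{NFbdNE}^{(t)}|$ in the dual objective lower-bounds the optimal value. You instead exploit optimality through the primal perturbation $\tilde{x}^{(t)}_{\{u,v\}}(\lambda):=\min\{\lambda\tilde{x}^{(t)}_{\{u,v\}},1\}$; your feasibility check is sound (the subadditivity $\min\{a,1\}+\min\{b,1\}\ge\min\{a+b,1\}$ handles the triangle inequality, and monotonicity of $z\mapsto\min\{\lambda z,1\}$ handles the layer constraints), and since there are finitely many pairs, the objective is affine in $\lambda$ on a right neighborhood of $1$, so $g'(1^+)\ge 0$ is legitimate and yields exactly the inequality $\sum_t\delta_t\sum_{E^{(t)}}\tilde{x}^{(t)}_{\{u,v\}}\ge\sum_t\delta_t\sum_{\OP{NFbdNE}^{(t)}}\tilde{x}^{(t)}_{\{u,v\}}$ needed to finish. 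Your argument is more elementary and self-contained: it avoids constructing the modified LP, arguing preservation of optimality under constraint removal, and invoking strong duality and complementary slackness, at the cost of the (routine) feasibility verification for the truncated scaling family. The two arguments are morally dual to one another --- a vanishing dual multiplier on $x\le 1$ corresponds to the freedom to push non-saturated variables upward --- but yours is the shorter path to the statement.
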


Fix an optimal solution $\tilde{x}$ to~\ref{LP-HIER-CORR-CLUS}.
In the following we modify the constraints in~\ref{LP-HIER-CORR-CLUS} step by step, while keeping the invariant that $\tilde{x}$ remains an optimal solution to the working LP.

\smallskip

Let LP-(W) denote the current working LP, where LP-(W) is initially~\ref{LP-HIER-CORR-CLUS}.
For each variable $x^{(t)}_{\{u,v\}}$ such that $\vphantom{\text{\Large T}_{\text{\LARGE T}}} \tilde{x}^{(t)}_{\{u,v\}}=1$, replace all occurrences of $x^{(t)}_{\{u,v\}}$ in~LP-(W) with the constant $1$.
Note that the restriction of $\tilde{x}$ to the surviving variables continues to be an optimal solution for the working LP after this modification.
Continue this process until $\tilde{x}^{(t)}_{\{u,v\}} < 1$ for all surviving variables in $\tilde{x}$.

\smallskip

There are three types of constraints in~LP-(W) other than the nonnegativity constraints, namely, $\vphantom{\text{\Large}_{\text{\LARGE T}}} x^{(t)}_{\{u,p\}}+x^{(t)}_{\{p,v\}}\ge x^{(t)}_{\{u,v\}}$, $x^{(t)}_{\{u,v\}}\ge x^{(t+1)}_{\{u,v\}}$, and $x^{(t)}_{\{u,v\}}\le 1$.
We further modify the LP to remove \emph{redundant} constraints, which we describe in the following.
\begin{itemize}
    \item For each $1\le t \le \ell$ and $u,v,p \in V$, remove the constraint $x^{(t)}_{\{u,p\}}+x^{(t)}_{\{p,v\}}\ge x^{(t)}_{\{u,v\}}$ if at least one variable on the left-hand side was replaced with $1$.

    \item For each $1\le t\le \ell$ and $u,v \in V$, remove the constraint $x^{(t)}_{\{u,v\}}\ge x^{(t+1)}_{\{u,v\}}$ if at least one variable was replaced with $1$.
    
    \item For each $1\le t\le \ell$ and $u,v \in V$, remove $x^{(t)}_{\{u,v\}}\le 1$ if $x^{(t)}_{\{u,v\}}$ was replaced with $1$.
\end{itemize}

\noindent
Let $\OP{SV}^{(t)} := \{ \{u,v\} \colon \tilde{x}^{(t)}_{\{u,v\}} < 1 \}$ be the set of variables that survived in $\tilde{x}^*$ with the layer index $t$, $\tilde{x}^*$ be the restriction of $\tilde{x}$ to $\{\OP{V}^{(t)}\}_t$, and~\ref{LP-HIER-CORR-CLUS-Modified} be the LP obtained by the above procedure.

\smallskip

We have the following lemma.

\begin{lemma}
$\tilde{x}^*$ is an optimal solution for~\ref{LP-HIER-CORR-CLUS-Modified}.
\end{lemma}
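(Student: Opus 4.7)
The plan is to establish two things: $\tilde{x}^*$ is feasible for LP-(*)-Modified, and no feasible solution of the modified LP strictly beats it in objective. Feasibility is immediate because every constraint of the modified LP is either (i) a constraint of LP-(*) restricted to the surviving variables (with replaced variables substituted by the constant $1$) or (ii) a constraint of LP-(*) left unchanged; in all cases $\tilde{x}$ satisfied it, so its restriction $\tilde{x}^*$ does as well.

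For optimality, I would argue by contradiction. Suppose $y$ is feasible for LP-(*)-Modified with strictly smaller objective value than $\tilde{x}^*$. Define $\hat{y}$ on the full variable set of LP-(*) by $\hat{y}^{(t)}_{\{u,v\}} := y^{(t)}_{\{u,v\}}$ if $\{u,v\} \in \OP{SV}^{(t)}$, and $\hat{y}^{(t)}_{\{u,v\}} := 1$ otherwise. Because the objective of LP-(*)-Modified is obtained from that of LP-(*) purely by substituting $1$ for the replaced variables, the contribution to the LP-(*) objective from those replaced variables is the same constant for $\hat{y}$ and for $\tilde{x}$. Hence the objective value of $\hat{y}$ in LP-(*) equals the objective value of $y$ in LP-(*)-Modified plus this constant, which is strictly less than the objective value of $\tilde{x}$ in LP-(*). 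This will contradict optimality of $\tilde{x}$ as soon as I verify that $\hat{y}$ is feasible for LP-(*).

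The only nontrivial step is checking that the constraints removed during the construction of LP-(*)-Modified are automatically satisfied by $\hat{y}$. I would handle the three types in turn. For a removed triangle inequality $x^{(t)}_{\{u,p\}} + x^{(t)}_{\{p,v\}} \ge x^{(t)}_{\{u,v\}}$, some LHS variable was replaced so $\hat{y}$ assigns it value $1$, giving $\text{LHS} \ge 1 \ge \text{RHS}$ (with $\text{RHS}\le 1$ since either $\hat y^{(t)}_{\{u,v\}}=1$ or its upper-bound constraint survives in the modified LP). For a removed monotonicity constraint $x^{(t+1)}_{\{u,v\}} \le x^{(t)}_{\{u,v\}}$, if $\tilde{x}^{(t+1)}_{\{u,v\}}=1$ then monotonicity of $\tilde{x}$ forces $\tilde{x}^{(t)}_{\{u,v\}}=1$ as well, so both sides of $\hat y$ equal $1$; otherwise only $\tilde{x}^{(t)}_{\{u,v\}}=1$ was replaced, so $\hat y^{(t)}_{\{u,v\}}=1 \ge \hat y^{(t+1)}_{\{u,v\}}$ since the upper-bound constraint on $\hat y^{(t+1)}_{\{u,v\}}$ survived in the modified LP. A removed upper bound $x^{(t)}_{\{u,v\}}\le 1$ was removed exactly when $\hat y^{(t)}_{\{u,v\}}=1$, hence trivially holds. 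This completes the argument.

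The main obstacle is not mathematical depth but careful bookkeeping: keeping track of exactly which constraints were removed under which conditions, and invoking monotonicity of $\tilde{x}$ (propagating "value $1$" from upper to lower layers) at the right moment to rule out the only case that would otherwise be problematic for the removed monotonicity constraints. Once this case analysis is clear, the proof is a one-line contradiction.
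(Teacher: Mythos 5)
Your proof is correct and is essentially the paper's argument in slightly different packaging: the paper shows the removed constraints are implied by the surviving ones (so the feasible set is unchanged), while you show the equivalent fact that any feasible solution of the modified LP lifts back to a feasible solution of LP-(*) by setting replaced variables to $1$; the case analysis of the three removed constraint types, including the use of monotonicity of $\tilde{x}$ to handle the case $\tilde{x}^{(t+1)}_{\{u,v\}}=1$, matches the paper's. No gaps.
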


\begin{proof}
We claim that removing the above constraints does not change the set of feasible solutions, and hence $\tilde{x}^*$ remains an optimal solution to the resulting LP. 

\smallskip

Consider the first type of constraints.
The removed constraints are in the form of $1+x^{(t)}_{\{p,v\}} \ge x^{(t)}_{\{u,v\}}$, $1+1 \ge x^{(t)}_{\{u,v\}}$, $1+x^{(t)}_{\{p,v\}} \ge 1$, or $1+1 \ge 1$. 
\begin{itemize}
	\item
		For $1+x^{(t)}_{\{p,v\}} \ge x^{(t)}_{\{u,v\}}$, where $\{p,v\}, \{u,v\} \in \OP{SV}^{(t)}$, the removed constraint is implied by $x^{(t)}_{\{p,v\}} \ge 0$ and $x^{(t)}_{\{u,v\}} \leq 1$, which are constraints that still exist in~\ref{LP-HIER-CORR-CLUS-Modified}.
		
	\item
		For $1+1 \ge x^{(t)}_{\{u,v\}}$, where $\{u,v\} \in \OP{SV}^{(t)}$, the removed constraint is implied by $x^{(t)}_{\{u,v\}} \leq 1$, a constraint still existing in~\ref{LP-HIER-CORR-CLUS-Modified}.
		
	\item
		For $1+x^{(t)}_{\{p,v\}} \ge 1$ with $\{p,v\} \in \OP{SV}^{(t)}$, again it is implied by $x^{(t)}_{\{p,v\}} \ge 0$, which exits in~\ref{LP-HIER-CORR-CLUS-Modified}.
\end{itemize}

\smallskip

Consider the second type of constraints, i.e., $x^{(t)}_{\{u,v\}}\ge x^{(t+1)}_{\{u,v\}}$. 
If $x^{(t+1)}_{\{u,v\}}$ was replaced with $1$, then $\tilde{x}^{(t)}_{\{u,v\}}=1$ and $\{u,v\} \notin \OP{SV}^{(t)}$. 
If only $x^{(t)}_{\{u,v\}}$ was replaced, then $1 \ge x^{(t+1)}_{\{u,v\}}$ is a constraint that persists in~\ref{LP-HIER-CORR-CLUS-Modified}.
Finally, for the third type of constraints, $x^{(t)}_{\{u,v\}}\le 1$, the claimed statement is trivial. 
This proves the lemma.
\end{proof}

\smallskip

\begin{figure*}[htp]
\centering
\fbox{
\begin{minipage}{.93\textwidth}
\begin{align}
\text{min} \; 
& \;\; \sum_{1\le t \le \ell} \delta_t \cdot \left( \; \left| E^{(t)} \setminus \OP{SV}^{(t)} \right| 
\; + \hspace{-6pt} \sum_{ \substack{ \{u,v\} \in E^{(t)}, \\[2pt] \{u,v\} \in \OP{SV}^{(t)} } } \hspace{-6pt} x^{(t)}_{\{u,v\}} 
\; + \hspace{-6pt} \sum_{ \substack{ \{u,v\} \in NE^{(t)}, \\[2pt] \{u,v\} \notin \OP{Fbd}^{(t)} } } \hspace{-6pt} ( 1-x^{(t)}_{\{u,v\}} ) \; \right)  & & \label{LP-HIER-CORR-CLUS-Modified} \tag*{LP-(**)} \\[8pt]
\text{s.t.} \; & \;\;\; x^{(t)}_{\{u,p\}} \; + \; x^{(t)}_{\{p,v\}} \; \ge \; x^{(t)}_{\{u,v\}},  & &  
\hspace{-7cm} \forall \; 1\le t\le \ell, \; \{u,p\}, \{p,v\}, \{u,v\} \in \OP{SV}^{(t)}, \notag \\[6pt]
& \;\;\; x^{(t)}_{\{u,p\}} \; + \; x^{(t)}_{\{p,v\}} \; \ge \; 1,  & & 
\hspace{-7cm}  \forall \; 1\le t\le \ell, \; \{u,p\}, \{p,v\} \in \OP{SV}^{(t)}, \; \{u,v\} \notin \OP{SV}^{(t)}, \notag \\[6pt]
& \;\;\; x^{(t+1)}_{\{u,v\}} \; \le \; x^{(t)}_{\{u,v\}},  & & 
\hspace{-7cm}  \forall \; 1\le t < \ell, \; \{u,v\} \in \OP{SV}^{(t)} \cap \OP{SV}^{(t+1)},  \notag \\[4pt]
& \;\;\; 0 \; \le \; x^{(t)}_{\{u,v\}} \; \le \; 1,  & & 
\hspace{-7cm}  \forall \; 1\le t \le \ell, \; \{u,v\} \in \OP{SV}^{(t)}.  \notag
\end{align}
\vspace{-6pt}
\end{minipage}\quad
}
\end{figure*}

\smallskip

Let us now consider the dual of~\ref{LP-HIER-CORR-CLUS-Modified}, which has an objective function of the following form
$$\max \; \sum_{1\le t\le \ell} \; \delta_t \cdot \left( \; \left| E^{(t)} \setminus \OP{SV}^{(t)} \right| 
\; + \; \left| \OP{NFbdNE}^{(t)} \right| 
\; + \hspace{-6pt} 
\sum_{ \{u,v\} \notin \OP{SV}^{(t)} } c^{(t)}_{\{u,v\}} \cdot \beta^{(t)}_{\{u,v\}} 
\; - \hspace{-4pt} 
\sum_{ \{u,v\} \in \OP{SV}^{(t)} } \eta^{(t)}_{\{u,v\}}
\; \right),$$
where $\{\beta^{(t)}_{\{u,v\}}\}_{1\le t\le \ell, \; \{u,v\} \notin \OP{SV}^{(t)}}$ and 
$\{\eta^{(t)}_{\{u,v\}}\}_{1\le t\le \ell, \; \{u,v\} \in \OP{SV}^{(t)}}$ 
are non-negative dual variables for the second set and the last set of constraints in~\ref{LP-HIER-CORR-CLUS-Modified}, respectively, and 
$$c^{(t)}_{\{u,v\}} \; := \; \left| \left\{ \; p \; \colon \; \{u,p\}, \{p,v\} \in \OP{SV}^{(t)} \right\} \right|.$$

\smallskip

Since $\tilde{x}^{*(t)}_{\{u,v\}} < 1$ for any $1\le t\le \ell$ and $\{u,v\} \in \OP{SV}^{(t)}$, the complementary slackness condition states that in any optimal dual solution with $\tilde{y}^*$ which contains $\eta^*$ as dual variables for the last set of constraints in~\ref{LP-HIER-CORR-CLUS-Modified}, we always have that
$$\eta^{*(t)}_{\{u,v\}} = 0 \enskip \text{ for any $1\le t\le \ell$ and $\{u,v\}\in \OP{SV}^{(t)}$.}$$
This implies that 
$\sum_{1\le t\le \ell} \; \delta_t \cdot | \OP{NFbdNE}^{(t)} | \le \OP{Val}(\tilde{y}^*) = \OP{Val}(\tilde{x}^*)$, 
where $\OP{Val}(\tilde{x}^*)$ and $\OP{Val}(\tilde{y}^*)$ denote the objective value of $\tilde{x}^*$ and $\tilde{y}^*$, and it follows that
\begin{align}
\sum_{1\le t\le \ell} \; \delta_t \cdot \left| \OP{NFbdNE}^{(t)} \right| \; \le \; \sum_{1\le t\le \ell} \; \delta_t \cdot \left( \; \sum_{ \{u,v\} \in {E}^{(t)} } \tilde{x}^{(t)}_{u,v} \; + \; \hspace{-4pt} \sum_{ \{u,v\} \in NE^{(t)} } \left( \; 1 - \tilde{x}^{(t)}_{u,v} \; \right) \; \right).\label{ieq-changing-objective}
\end{align}

\medskip

\subsection{Putting Things Together}
\label{sec-overall}

Now we are ready to prove Lemma~\ref{lemma-overall-approximation-ratio}.
Consider the statements of Lemma~\ref{lemma-number-of-disagreements-forbidden-pairs-within-P} and Lemma~\ref{lemma-number-of-disagreements-restate-non-forbidden-pairs-P-Pp}.
By the definition of $\OP{NFPrs}(\MC{Q}^{(t)}, P)$ and $\OP{NFPrs}(\MC{Q}^{(t)}, P, P')$, we have that
$$ \OP{NFPrs}(\MC{Q}^{(t)}) \; = \; \bigcup_{P \in \MC{P}^{(t)}} \OP{NFPrs}(\MC{Q}^{(t)}, P) 
\;\; \cup \; \bigcup_{P, P' \in \MC{P}^{(t)}, \; P \neq P'} \OP{NFPrs}(\MC{Q}^{(t)}, P, P').$$
Hence, the two lemmas give that

\begin{equation}
\sum_{P \in \MC{P}^{(t)}} \#_{\OP{F}}(P) \; + \sum_{ \substack{ P,P' \in \MC{P}^{(t)} \\[2pt] P\neq P' } } \#_{\OP{NF}}(P,P') 
\; \le \; c(\alpha) \cdot \left| \OP{NFPrs}(\MC{Q}^{(t)}) \right|,
\label{eq-overall-number-of-disagreements}
\end{equation}
where $c(\alpha) := \max \left\{ \frac{\beta(2-\alpha)(1+\alpha)^2}{2 (1-\alpha)^2}, \frac{1}{1-\alpha}, \frac{1+\alpha}{\alpha} \right\} \approx 3.5406$ for $\alpha := 0.3936$, and $\beta := 0.8346$.

\smallskip

Applying Inequality~(\ref{eq-overall-number-of-disagreements}) on Inequality~(\ref{ieq-overall-simp-version-1}), we have 
\begin{align}
{\#(\MC{P}^{(t)}) \; \le}
 \; &  \sum_{P \in \MC{P}^{(t)}} \#_{\OP{NFbdNE}}(P) + \sum_{P \in \MC{P}^{(t)}} \#_{\OP{F}}(P) + \sum_{ \substack{ P,P' \in \MC{P}^{(t)} \\[2pt] P\neq P' } } \#_{\OP{NF}}(P,P')  \notag \\[-3pt] 
\le \; & \;\; 
{| \OP{NFbdNE}^{(t)}|} 
\; + \; c(\alpha) \cdot \left| \OP{NFPrs}(\MC{Q}^{(t)}) \right|,
\label{ieq-proof-putting-things-overall-guarantee-1}
\end{align} 
where we use the fact that $\#_{\OP{NFbdNE}}(P)$ is the number of non-forbidden non-edge pairs clustered within $P$.
By Corollary~\ref{cor-non-forbidden-pairs-separated-orig-objective-restate}, the R.H.S. of~(\ref{ieq-proof-putting-things-overall-guarantee-1}) is upper-bounded by
$$
|\OP{NFbdNE}^{(t)}|
\; + \; c(\alpha) \cdot \left( \; 
|\OP{NFbdNE}^{(t)}| 
\; + \; 6 \cdot \left( \sum_{ \substack{ \\[2pt] \{i,j\} \in \OP{NFPrs}(\MC{Q}^{(t)}), \\[2pt] \{i,j\} \in E^{(t)} } } \hspace{-12pt} \tilde{x}^{(t)}_{\{i,j\}} \;\; + \hspace{-4pt} \sum_{ \substack{ \\[2pt] \{i,j\} \in \OP{NFPrs}(\MC{Q}^{(t)}), \\[2pt] \{i,j\} \in NE^{(t)} } } \hspace{-6pt} \left(1-\tilde{x}^{(t)}_{\{i,j\}} \right) \; \right) \right).
$$
Summing up the weighted disagreements over all layers $t$ with $1\le t\le \ell$ and apply~(\ref{ieq-changing-objective}), we obtain
\begin{align*}
& \sum_{1\le t \le \ell} \delta_t \cdot 
 \#(\MC{P}^{(t)}) 
\; \le \; \left( \; 7 c(\alpha) +1 \; \right) \cdot \sum_{1\le t\le \ell} \delta_t \cdot \left( \sum_{\{u,v\} \in E^{(t)}} \tilde{x}^{(t)}_{\{u,v\}} + \sum_{\{u,v\} \notin E^{(t)}} \left( 1-\tilde{x}^{(t)}_{\{u,v\}} \right) \right).
\end{align*}

\medskip

\newpage

\section{Extension to Ultrametric Violation Distance}

Recall that given a set of pairwise measured distance for a set $V$ of elements, the goal of the ultrametric violation distance problem is to edit the minimum number of input distances so that there is a perfect fit to an ultrametric.
In~\cite{DBLP:journals/siamcomp/CohenAddadFLM25,DBLP:conf/soda/CharikarG24} the following formulation is introduced for this problem, where $t_{\{u,v\}}$ denotes the supposed layer at which $u$ and $v$ are separated in the ultrametric when a perfect fit for the given distances exists.

\begin{figure*}[h]
\centering
\fbox{
\begin{minipage}{.76\textwidth}
\begin{align}
\text{min} \; & \;\; \sum_{ \substack{ u, v \in V, \; u \neq v } }  \left( \; \left( \; 1-x^{(t_{\{u,v\}})}_{\{u,v\}} \; \right) \; + \; x^{(t_{\{u,v\}}+1)}_{\{u,v\}} \; \right) & & \label{LP-HIER-CORR-CLUS-L0} \tag*{LP-($L_0$)} \\[5pt]
\text{s.t.} \; & \;\;\; x^{(t)}_{\{u,v\}} \; \le \; x^{(t)}_{\{u,p\}} \; + \; x^{(t)}_{\{p,v\}},  & & \hspace{-1.4cm} \forall \; 1\le t\le \ell, \; u,v,p \in V, \notag \\[5pt]
& \;\;\; 0 \; \le \; x^{(t+1)}_{\{u,v\}} \; \le \; x^{(t)}_{\{u,v\}} \; \le \; 1,  & & \hspace{-1.4cm} \forall \; 1\le t < \ell, \; u,v \in V.  \notag
\end{align}
\vspace{-8pt}
\end{minipage}\quad
}
\caption{LP formulation for the Ultrametric Violation Distance.}
\label{fig-hier-corr-clus-natural-lp-L0}
\end{figure*}

\smallskip

As for the~\ref{LP-HIER-CORR-CLUS} for the hierarchical correlation clustering problem, we implicitly assume in the following that $x^{(t)}_{\{u,u\}}=0$ for all $u\in V$ holds in any feasible solution $x$ for~\ref{LP-HIER-CORR-CLUS-L0}.
Furthermore, we extend the definition such that $x^{(\ell+1)}_{\{u,v\}}:=0$ for any $u,v\in V$.

\begin{algorithm*}[htp]
\caption{Ultrametric-Violation-Distance$(\{\tilde{x}\}_{1\le t\le \ell})$} \label{algo-3-ultrametric-violation-distance}
\begin{algorithmic}[1]
\State Let $\MC{P}^{(\ell+1)} \gets \{V\}$.
\For {$t = \ell$ down to $1$}
	\State Let $\MC{P}^{(t)} \gets \MC{P}^{(t+1)}$.
	\While {$\OP{diam}^{(t)}(P) \ge 1/2$ for some $P \in \MC{P}^{(t)}$}
		\State Pick $P \in \MC{P}^{(t)}$ and $v \in P$ such that $\max_{u \in P} \tilde{x}^{(t)}_{\{u,v\}} \ge 1/2$.
		\State $\MC{P}' \gets$ {\sc One-Half-Refine-Cut}$(P, v, \tilde{x}^{(t)})$. \quad \texttt{// Compute a refined cut}
		\State Replace $P$ with the sets in $\MC{P}'$ in $\MC{P}^{(t)}$.
	\EndWhile
\EndFor
\State \Return $\{\MC{P}^{(t)}\}_{1\le t\le \ell}$.
\end{algorithmic}

\smallskip

\hrule

\smallskip

\begin{algorithmic}[1]
\Procedure{One-Half-Refine-Cut}{$P, v, x$}
	\If {Condition~(\ref{ieq-algo-3-ratio-4-cutting-condition}) is satisfied for $(P,v)$} 
		\State \Return $\{ \; \{v\}, \; P\setminus \{v\} \; \}$. \quad \texttt{// make $v$ a singleton}
	\Else
		\State \Return $\{ \; \OP{Ball}^{(x)}_{<1/2}(v,P), \;\; P \setminus \OP{Ball}^{(x)}_{< 1/2}(v,P) \; \}$. \quad \texttt{// cut at $1/2-\epsilon$}
	\EndIf
\EndProcedure
\end{algorithmic}
\end{algorithm*}

Let $\tilde{x}$ be an optimal solution to~\ref{LP-HIER-CORR-CLUS-L0}.
The algorithm begins with a big cluster $\MC{P}^{(\ell+1)} := \{ V\}$ and proceeds in a top-down manner.
For each iteration $t$ with $t = \ell, \ldots, 1$, the algorithm uses $\MC{P}^{(t)} := \MC{P}^{(t+1)}$ as the initial clustering and repeats until $\OP{diam}^{(t)}(P) < 1/2$ holds for all $P \in \MC{P}^{(t)}$.
If $P \in \MC{P}^{(t)}$ contains a pair $(u,v)$ with distance at least $1/2$, then the cutting procedure {\sc One-Half-Refine-Cut} is applied to separate this pair.

\smallskip

The procedure {\sc One-Half-Refine-Cut} takes as input a tuple $(P,v,x)$, where $P$ is a set, $v \in P$ is the pivot, and $x$ is a distance function, and tests the following condition.
If
\begin{equation}
\sum_{q \in \OP{Ball}^{(x)}_{<1/2}(v,P)} x_{\{v,q\}} \;\; \ge \;\; \frac{1}{2} \cdot |\OP{Ball}^{(x)}_{<1/2}(v,P)| \; - \; \frac{1}{4} \cdot |\OP{Ball}^{(x)}_{<3/4}(v,P)| \; - \; \frac{1}{4},
\label{ieq-algo-3-ratio-4-cutting-condition}
\end{equation}
then the algorithm makes $v$ a singleton cluster by replacing $P$ with $\{v\}$ and $P \setminus \{v\}$.
Otherwise, $P$ is replaced with $\OP{Ball}^{(x)}_{<1/2}(v,P)$ and $P \setminus \OP{Ball}^{(x)}_{< 1/2}(v,P)$.

\medskip

\subsection*{Approximation Guarantee}

\label{sec-L0-analysis-overview}

We prove the following theorem for the statement in Corollary~\ref{cor-5-approx-L0}.

\begin{theorem} \label{thm-overall-approximation-ratio-L0}
Let $\{\MC{P}^{(t)}\}_{1\le t\le \ell}$ be the output of Algorithm~\ref{algo-3-ultrametric-violation-distance} and $\hat{x}$ be the rounded integer distance function to which $\MC{P}^{(t)}$ corresponds.
We have
\begin{align*}
\sum_{ \substack{ u, v \in V, \\[2pt] u \neq v } }  \left( \; \left( \; 1-\hat{x}^{(t_{\{u,v\}})}_{\{u,v\}} \; \right) \; + \; \hat{x}^{(t_{\{u,v\}}+1)}_{\{u,v\}} \; \right) 
\;\; \le \;\; 
5 \cdot \sum_{ \substack{ u, v \in V, \\[2pt] u \neq v } }  \left( \; \left( \; 1-\tilde{x}^{(t_{\{u,v\}})}_{\{u,v\}} \; \right) \; + \; \tilde{x}^{(t_{\{u,v\}}+1)}_{\{u,v\}} \; \right),
\end{align*}
where $\tilde{x}$ is an optimal solution to~\ref{LP-HIER-CORR-CLUS-L0}.
\end{theorem}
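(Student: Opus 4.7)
I mirror the per-pair charging from Section~\ref{sec-overall}, adapted to the $L_0$ setting. For each pair $\{u,v\}$, let $\tau_{uv}$ denote the largest layer at which $u$ and $v$ belong to different parts of $\MC{P}^{(t)}$ (and $0$ if no such layer exists). Then $\hat{x}^{(t)}_{uv} = \mathbf{1}[t \le \tau_{uv}]$, and the per-pair algorithmic cost $(1 - \hat{x}^{(t_{uv})}_{uv}) + \hat{x}^{(t_{uv}+1)}_{uv}$ equals $1$ exactly when $\tau_{uv} \ne t_{uv}$; the pairs with cost $1$ split into \emph{Case~A} with $\tau_{uv} < t_{uv}$ (clustered together at $t_{uv}$) and \emph{Case~B} with $\tau_{uv} > t_{uv}$ (separated at $t_{uv}+1$). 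Case~A is handled by the diameter invariant enforced by Algorithm~\ref{algo-3-ultrametric-violation-distance}: since $u$ and $v$ share a part of $\MC{P}^{(t_{uv})}$ of $\tilde{x}^{(t_{uv})}$-diameter strictly less than $1/2$, we have $\tilde{x}^{(t_{uv})}_{uv} < 1/2$ and hence $1 - \tilde{x}^{(t_{uv})}_{uv} > 1/2$, so the LP contribution of this pair is bounded below by $1/2$ and its charging ratio is at most $2$.

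For Case~B, I split by whether the pair is forbidden at layer $t_{uv}+1$: the subcase $\tilde{x}^{(t_{uv}+1)}_{uv} = 1$ is handled trivially since the LP already pays $1$. For the non-forbidden subcase I prove a $1/2$-radius analog of Lemma~\ref{lemma-non-forbidden-pairs-separated-avg-cost-1-3-restate}: for every invocation of \textsc{One-Half-Refine-Cut} with pivot $v$ producing a cut $(Q_1, Q_2)$ on input distance $x = \tilde{x}^{(t)}$,
\[
\sum_{\substack{\{i,j\} \in \OP{NFPrs}(Q_1, Q_2),\\ i \in Q_1,\; j \in Q_2 \cap \OP{Ball}^{(x)}_{<3/4}(v, P)}} \left(\min\{x_{vj}, 1/2\} - x_{vi}\right) \;\ge\; \frac{1}{4} \cdot \bigl|\{\text{such pairs}\}\bigr|,
\]
while pairs with $j \notin \OP{Ball}^{(x)}_{<3/4}(v, P)$ satisfy $x_{ij} \ge 3/4 - 1/2 = 1/4$ by the triangle inequality. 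The proof is a direct translation of Section~\ref{sec-proof-avg-distance-1-3} after replacing the parameter triple $(1/3, 1/2, 1/6)$ by $(1/2, 3/4, 1/4)$; Condition~\ref{ieq-algo-3-ratio-4-cutting-condition} is exactly what is required so that each of the two branches of \textsc{One-Half-Refine-Cut} yields non-negativity of the associated ``separating-distance minus $1/4$'' functional, computed as in the proof of Lemma~\ref{lemma-non-forbidden-pairs-separated-avg-cost-1-3-restate}.

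These two ingredients, combined with the hierarchical monotonicity $\tilde{x}^{(\tau_{uv})}_{uv} \le \tilde{x}^{(t_{uv}+1)}_{uv}$ (valid since $\tau_{uv} \ge t_{uv}+1$ in Case~B), let us charge the unit of algorithmic cost from each Case~B-nonforbidden pair to the LP contribution $\tilde{x}^{(t_{uv}+1)}_{uv}$ with multiplicative factor $4$. Adding the Case~A factor of $2$, the Case~B-forbidden factor of $1$, and an extra paradigm adjustment analogous to Lemma~\ref{lemma-opt-sols-nonforbidden-pair-value} yields the overall factor $5$. The main technical obstacle is in this last step: the $1/2$-analog sums over all non-forbidden cut pairs at layer $\tau$, including ``correctly-cut'' pairs (with $t_{ij} = \tau$) or Case~A pairs that do not themselves contribute to the algorithmic cost, and one needs to verify that the sums of $\tilde{x}^{(\tau)}_{ij}$ from these extra pairs are already absorbed into the LP cost either by the diameter bound (for Case~A) or by the complementary inequality $1 - \tilde{x}^{(\tau)}_{ij} \le 1 - \tilde{x}^{(t_{ij})}_{ij}$ (for correctly-cut non-forbidden pairs), so that the final ratio does not blow up.
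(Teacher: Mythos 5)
Your identification of the key technical ingredient is correct: the $(1/2,3/4,1/4)$ analogue of Lemma~\ref{lemma-non-forbidden-pairs-separated-avg-cost-1-3} is exactly the paper's Lemma~\ref{cor-non-forbidden-pairs-separated-orig-objective-L0}, Condition~(\ref{ieq-algo-3-ratio-4-cutting-condition}) plays precisely the role you describe in its proof, and your treatment of pairs with $j \notin \OP{Ball}^{(x)}_{<3/4}(v,P)$ via $x_{\{i,j\}}\ge 1/4$ matches the paper. The diameter-based factor-$2$ charge and the trivial factor-$1$ charge for pairs that are already extreme also appear in the paper's argument.

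The gap is in the final accounting, and it is not the deferred ``verification'' at the end of your plan but a genuine double-charge. The average-distance lemma is inherently aggregate: it bounds $|\NFPrszero(Q_1,Q_2)|$ by $4\sum_{\{i,j\}\in\NFPrszero(Q_1,Q_2)}\OP{Val}_{\{i,j\}}$ plus a non-forbidden count, where the sum necessarily ranges over \emph{all} non-extreme cut pairs of that cut --- including your Case-A pairs (those separated at some layer $\tau_{uv}<t_{uv}$ with $\tilde{x}^{(\tau_{uv})}_{\{u,v\}}<1$) and correctly-cut pairs; you cannot restrict the charged $\OP{Val}$'s to Case-B pairs, nor can you charge each Case-B pair individually to $4\tilde{x}^{(t_{\{u,v\}}+1)}_{\{u,v\}}$ (a single such pair may have arbitrarily small LP value). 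Consequently a Case-A pair that is a non-extreme cut pair is charged twice under your scheme: with factor $2$ through the diameter argument and with factor $4$ through the average-distance route (and once more through the global $|\OP{NFbd}|\le\sum\OP{Val}$ step), so your bookkeeping only certifies a factor of $7$, not $5$; the computation $\max\{2,1,4\}+1=5$ presumes the three charges draw on disjoint pools of LP value, which they do not. The fix, which is what the paper does in Inequality~(\ref{ieq-ultrametric-violation-distance-overall-1}), is to route \emph{every} pair of $\bigcup_i\NFPrszero(Q^{(i)}_1,Q^{(i)}_2)$ --- whether Case A, Case B, or correctly cut --- exclusively through the average-distance charge (its unit of algorithmic cost, if any, is dominated by the count $|\NFPrszero^{(i)}|$ itself), and to reserve the factor-$2$ diameter charge for pairs lying in no $\NFPrszero^{(i)}$, namely never-separated pairs and pairs that are extreme ($\tilde{x}^{(t^*_{\{u,v\}})}_{\{u,v\}}=1$) at their separation layer. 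These two families are disjoint, so the coefficients $4$ and $2$ never stack on the same pair, and $4+1=5$ survives.
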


\smallskip

Observe that each $\{u,v\}$ pair contributes exactly two items in the objective value of $\tilde{x}$, namely, $\tilde{x}^{(t_{\{u,v\}}+1)}_{\{u,v\}}$ and $1-\tilde{x}^{(t_{\{u,v\}})}_{\{u,v\}}$.
We consider $\{u,v\}$ an edge pair for all the layers above $t_{\{u,v\}}$ and a non-edge pair for the remaining layers.
In this regard, define 
$$E^{(t)} \; := \; \left\{ \; \{u,v\} \; \colon \; t_{\{u,v\}} < t \; \right\} 
\quad \text{and} \quad
NE^{(t)} \; := \; \left\{ \; \{u,v\} \; \colon \; t_{\{u,v\}} \ge t \; \right\}.$$
Define $\OP{Fbd} := \left\{ \{u,v\} \; \colon \; \tilde{x}^{(t_{\{u,v\}})}_{\{u,v\}} = 1 \right\}$ to be the set of \emph{forbidden pairs} and
$\OP{NFbd} := {\binom{V}{2}} \setminus \OP{Fbd}$.

\begin{figure*}[tp]
\centering
\includegraphics[scale=0.96]{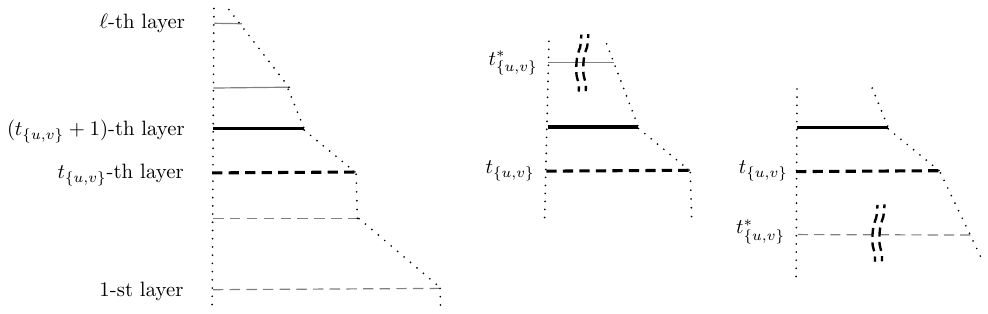}
\caption{
(a) We consider $\{u,v\}$ an edge pair for all the layers above the $t_{\{u,v\}}$-th layer and a non-edge pair for the remaining layers.
Moreover, $\{u,v\}$ contributes to the objective value only at the $(t_{\{u,v\}}+1)$-th and the $t_{\{u,v\}}$-th layers.
(b) Two types of disagreements for the $\{u,v\}$ pair, namely, $t^*_{\{u,v\}} > t_{\{u,v\}}$ or $t^*_{\{u,v\}} < t_{\{u,v\}}$.
}
\label{fig-two-layers-L0}
\end{figure*}

\smallskip

With exactly the same argument as in Lemma~\ref{lemma-opt-sols-nonforbidden-pair-value} (proved in Section~\ref{section-relating-objectives}), we have the following updated version of statement for~\ref{LP-HIER-CORR-CLUS-L0}. 

\begin{lemma} \label{lemma-opt-sols-nonforbidden-pair-value-L0}
Let $\tilde{x}$ be an optimal solution to~\ref{LP-HIER-CORR-CLUS-L0}.
We have
$$\sum_{ \substack{ u, v \in V, \\[2pt] u \neq v } }  \left( \; \left( \; 1-\tilde{x}^{(t_{\{u,v\}})}_{\{u,v\}} \; \right) \; + \; \tilde{x}^{(t_{\{u,v\}}+1)}_{\{u,v\}} \; \right) 
\;\; \ge \;\; 
\left| \; \OP{NFbd} \; \right|. $$
\end{lemma}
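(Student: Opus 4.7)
The plan is to follow the same LP-surgery-plus-LP-duality argument used for Lemma~\ref{lemma-opt-sols-nonforbidden-pair-value-restate} in Section~\ref{section-relating-objectives}, with the substitutions adapted to the $L_0$ objective. Fix an optimal solution $\tilde{x}$ to~\ref{LP-HIER-CORR-CLUS-L0}. First I would repeatedly replace every variable $x^{(t)}_{\{u,v\}}$ with $\tilde{x}^{(t)}_{\{u,v\}}=1$ by the constant $1$, both in the objective and in each constraint. As in Section~\ref{section-relating-objectives}, the restriction $\tilde{x}^*$ of $\tilde{x}$ to the surviving variables remains optimal for the resulting LP, and by construction every surviving variable is strictly less than $1$.

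Next I would remove the constraints that become redundant after substitution. Exactly as in the three bullets preceding~\ref{LP-HIER-CORR-CLUS-Modified}: triangle inequalities $x^{(t)}_{\{u,p\}}+x^{(t)}_{\{p,v\}} \ge x^{(t)}_{\{u,v\}}$ in which at least one left-hand-side variable was substituted are implied by the surviving nonnegativity and upper-bound constraints; hierarchy constraints $x^{(t)}_{\{u,v\}} \ge x^{(t+1)}_{\{u,v\}}$ in which a variable was substituted are either trivial or imply the corresponding pair is already forbidden at layer $t$ (and hence not in $\OP{SV}$); and $x^{(t)}_{\{u,v\}} \le 1$ is vacuous whenever $x^{(t)}_{\{u,v\}}$ was substituted.

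Now I would inspect the objective of this modified LP. For each pair $\{u,v\}$, the summand $\bigl(1-x^{(t_{\{u,v\}})}_{\{u,v\}}\bigr)+x^{(t_{\{u,v\}}+1)}_{\{u,v\}}$ contributes a constant $1$ coming from the ``$1$'' in $(1-x^{(t_{\{u,v\}})}_{\{u,v\}})$ precisely when $\{u,v\}\in\OP{NFbd}$ (for $\{u,v\}\in\OP{Fbd}$ the whole term vanishes upon substitution), together with a further non-negative constant whenever $\tilde{x}^{(t_{\{u,v\}}+1)}_{\{u,v\}}=1$. Thus the modified-LP objective has constant part at least $|\OP{NFbd}|$ and a linear part in the surviving variables (with upper-bound dual variables $\eta$ appearing with coefficient $-\eta$ in the dual objective).

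Finally, I would pass to the dual. Since $\tilde{x}^{*(t)}_{\{u,v\}}<1$ strictly for every surviving variable, complementary slackness forces every dual multiplier of a surviving upper-bound constraint $x^{(t)}_{\{u,v\}}\le 1$ to be $0$ in any optimal dual solution. Hence the only potentially negative contribution to the dual objective vanishes, and by weak LP-duality the objective value of $\tilde{x}$ in~\ref{LP-HIER-CORR-CLUS-L0} is at least $|\OP{NFbd}|$, as claimed. The main obstacle, as in Section~\ref{section-relating-objectives}, is the careful bookkeeping showing that each removed constraint is indeed redundant and that the constant part of the modified objective cleanly isolates $|\OP{NFbd}|$; the rest of the argument is a direct translation because the $L_0$ objective still has the same ``$1-x$'' structure that drove the original proof.
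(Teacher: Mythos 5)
Your proposal is correct and follows exactly the route the paper takes: the paper proves this lemma by invoking "exactly the same argument as in Lemma~\ref{lemma-opt-sols-nonforbidden-pair-value}" (the variable-substitution, redundant-constraint removal, and complementary-slackness argument of Section~\ref{section-relating-objectives}), which is precisely what you reproduce, with the constant part of the modified objective correctly isolated as at least $|\OP{NFbd}|$ since each non-forbidden pair retains the ``$1$'' from $(1-x^{(t_{\{u,v\}})}_{\{u,v\}})$.
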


\medskip

Consider the execution of Algorithm~\ref{algo-3-ultrametric-violation-distance} and the calls the algorithm makes to the procedure {\sc One-Half-Refine-Cut}.
{
Let $k$ be the number of calls to the procedure and $\{ \hspace{1pt} (Q^{(i)}_1, Q^{(i)}_2) \hspace{1pt} \}_{1\le i\le k}$ be the set of pairs returned by the procedure upon these calls.
For each $(Q^{(i)}_1, Q^{(i)}_2)$, define 
$$\NFPrszero\left( Q^{(i)}_1, Q^{(i)}_2 \right) \; := \; \left\{ \; \{u,v\} \; \colon \; u \in Q^{(i)}_1, \; v \in Q^{(i)}_2, \; x^{(t_i)}_{\{u,v\}} < 1 \; \right\}$$
to be the set of pairs that are separated by $Q^{(i)}_1$ and $Q^{(i)}_2$ and that have distances strictly smaller than $1$ at the $t_i$-th layer, where we use $t_i$ to denote the layer at which $(Q^{(i)}_1, Q^{(i)}_2)$ is separated. 
We will refer these pairs to as \emph{non-extreme cut pairs}.
}

\medskip

For any $\{u,v\}$ pair, define 
$$\#_{\{u,v\}} := \left( \; 1-\hat{x}^{(t_{\{u,v\}})}_{\{u,v\}} \; \right) \; + \; \hat{x}^{(t_{\{u,v\}}+1)}_{\{u,v\}} 
\quad \text{and} \quad
\OP{Val}_{\{u,v\}} := \left( \; 1-\tilde{x}^{(t_{\{u,v\}})}_{\{u,v\}} \; \right) \; + \; \tilde{x}^{(t_{\{u,v\}}+1)}_{\{u,v\}}$$
to be the disagreement caused by the $\{u,v\}$ pair and the LP value the $\{u,v\}$ pair has, respectively.
To upper-bound $\#_{\{u,v\}}$, let $t^*_{\{u,v\}}$ be the top-most layer at which $\{u,v\}$ is separated for the first time in the hierarchy.
Define $t^*_{\{u,v\}}$ to be zero if $\{u,v\}$ is never separated in the hierarchy.

\medskip

It is clear that $\#_{\{u,v\}} = 1$ only when $t^*_{\{u,v\}} \neq t_{\{u,v\}}$.
Consider the following two cases.
\begin{itemize}
	\item
		$t^*_{\{u,v\}} = 0$.
		
		\smallskip
		
		In this case, $\{u,v\}$ is never separated.
		Then it follows from the design of Algorithm~\ref{algo-3-ultrametric-violation-distance} that
		$\tilde{x}^{(t_{\{u,v\}})}_{\{u,v\}} < \frac{1}{2}$.
		Hence,
		$$\#_{\{u,v\}} \; \le \; 2\cdot \left( \; 1-\tilde{x}^{(t_{\{u,v\}})}_{\{u,v\}} \; \right) \;\; \le \;\; 2\cdot \OP{Val}_{\{u,v\}}.$$
		
		\smallskip
		
	\item
		$0 < t^*_{\{u,v\}} \neq t_{\{u,v\}}$.
		
		\smallskip
		
		In this case, $\{u,v\}$ is separated by exactly one pair in $\MC{Q}$ and this happens at the $t^*_{\{u,v\}}$-th layer in the hierarchy.
		Denote this particular pair by $(Q^{(i)}_1, Q^{(i)}_2)$.

		\smallskip

		Further consider the following subcases.
		\begin{itemize}
			\item
				If $\{u,v\} \notin \NFPrszero(Q^{(i)}_1, Q^{(i)}_2)$ and $t^*_{\{u,v\}} > t_{\{u,v\}}$, then $\tilde{x}^{(t^*_{\{u,v\}})}_{\{u,v\}} = 1$, which implies that $\tilde{x}^{(t)}_{\{u,v\}} = 1$ for all $t \le t^*_{\{u,v\}}$, and hence
				\begin{equation*}
				\#_{\{u,v\}} \; = \; 1 \; = \; \tilde{x}^{(t_{\{u,v\}}+1)}_{\{u,v\}} \; = \; \OP{Val}_{\{u,v\}}.
				\end{equation*}
			
			\item
				If $\{u,v\} \notin \NFPrszero(Q^{(i)}_1, Q^{(i)}_2)$ and $t^*_{\{u,v\}} < t_{\{u,v\}}$, then it follows from the design of Algorithm~\ref{algo-3-ultrametric-violation-distance} that $\tilde{x}^{(t_{\{u,v\}})}_{\{u,v\}} < \frac{1}{2}$.
				Hence again 
				$$\#_{\{u,v\}} \; \le \; 2\cdot \left( \; 1-\tilde{x}^{(t_{\{u,v\}})}_{\{u,v\}} \; \right) \;\; \le \;\; 2\cdot \OP{Val}_{\{u,v\}}.$$
				
		\end{itemize}
\end{itemize}

\smallskip

\noindent
From the above three cases, we obtain that
\begin{align}
\sum_{u \neq v} \#_{\{u,v\}} \;\; 
= \;\; & \left| \left\{ \; \{u,v\} \; \colon \; 0 < t^*_{\{u,v\}} \neq t_{\{u,v\}} \; \right\} \right| \; + \; \left| \left\{ \; \{u,v\} \; \colon \; t^*_{\{u,v\}} = 0 \; \right\} \right| \notag \\[1pt]
\le \;\; & \sum_{1\le i \le k} \left| \NFPrszero( Q^{(i)}_1, Q^{(i)}_2 ) \right| \;\; + \;\; 2\cdot \hspace{-10pt} \sum_{ \substack{ \{u,v\} \colon t^*_{\{u,v\}} = 0 \text{ or } \\[2pt] \left( \; 0 \; < \; t^*_{\{u,v\}} \; \neq \; t_{\{u,v\}} \text{ and } \tilde{x}^{(t^*_{\{u,v\}})}_{\{u,v\}} = 1  \; \right) } } \hspace{-14pt} \OP{Val}_{\{u,v\}}.
\label{ieq-ultrametric-violation-distance-overall-1}
\end{align}

\noindent

The following lemma, which is the updated version of Corollary~\ref{cor-non-forbidden-pairs-separated-orig-objective-restate} for the Algorithm~\ref{algo-3-ultrametric-violation-distance}, bounds the number of pairs in $\OP{Fbd} \cap \NFPrszero(Q^{(i)}_1, Q^{(i)}_2)$ in terms of the average distance of the pairs in $\NFPrszero(Q^{(i)}_1, Q^{(i)}_2)$.
We provide the proof in Section~\ref{sec-proof-avg-distance-1-4} in the appendix for further reference.

\begin{lemma}[Section~\ref{sec-proof-avg-distance-1-4}]  \label{cor-non-forbidden-pairs-separated-orig-objective-L0}
Let $(Q_1, Q_2)$ be a pair returned by the procedure {\sc One-Half-Refine-Cut}.
We have that
$$\sum_{ \substack{ \\[2pt] \{i,j\} \in \NFPrszero(Q_1, Q_2) } } \hspace{-4pt} \OP{Val}_{\{i,j\}} \;\; + \;\;\; \frac{1}{4}\cdot \left| \left\{ \; \substack{ \{i,j\} \in \NFPrszero(Q_1, Q_2), \\[2pt] \{i,j\} \in \OP{NFbd} } \; \right\} \right|
\;\; \ge \;\; \frac{1}{4}\cdot \left| \NFPrszero(Q_1, Q_2) \right|.$$
\end{lemma}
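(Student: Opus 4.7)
The plan is to mirror the three-step pipeline of Section~\ref{sec-proof-avg-distance-1-3} (namely Lemma~\ref{lemma-non-forbidden-pairs-separated-avg-cost-1-3-restate}, Corollary~\ref{cor-non-forbidden-pairs-separated-avg-cost-1-3-overall}, and Lemma~\ref{lemma-non-forbidden-pairs-separated-in-Q-1-3}), with the radii shifted from $(1/3, 1/2)$ to $(1/2, 3/4)$ and the target average distance shifted from $1/6$ to $1/4$.

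First I would prove an average-distance analog of Lemma~\ref{lemma-non-forbidden-pairs-separated-avg-cost-1-3-restate}. Let $v$ be the pivot, $t_i$ the layer of the call, and write $B_r := \OP{Ball}^{(t_i)}_{<r}(v, Q_1 \cup Q_2)$. Defining $d(p,q) := \bigl|\min\{\tilde{x}^{(t_i)}_{\{v,p\}}, 1/2\} - \min\{\tilde{x}^{(t_i)}_{\{v,q\}}, 1/2\}\bigr| - 1/4$, a direct arithmetic check shows that Condition~(\ref{ieq-algo-3-ratio-4-cutting-condition}) is precisely the statement $\sum_{q \in B_{3/4} \setminus \{v\}} d(v,q) \geq 0$, which covers the case when the procedure outputs $\OP{Cut}_1 = (\{v\}, (Q_1 \cup Q_2) \setminus \{v\})$. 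When instead $\OP{Cut}_2 = (B_{1/2}, (Q_1 \cup Q_2) \setminus B_{1/2})$ is returned, Condition~(\ref{ieq-algo-3-ratio-4-cutting-condition}) fails, and a parallel arithmetic gives $\sum_{i \in B_{1/2},\, j \in B_{3/4} \setminus B_{1/2}} d(i,j) > 0$. The main technical subtlety---my expected obstacle---is that, unlike in the hierarchical correlation clustering setting where $1/3 + 1/2 = 5/6 < 1$ automatically rules out extreme pairs, a pair $(i,j)$ with $i \in B_{1/2}$, $j \in B_{3/4} \setminus B_{1/2}$ may here have $\tilde{x}^{(t_i)}_{\{i,j\}} = 1$. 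The fix is to observe that any such extreme pair satisfies $\tilde{x}^{(t_i)}_{\{v,i\}} > 1 - \tilde{x}^{(t_i)}_{\{v,j\}} > 1/4$ by the triangle inequality, so $d(i,j) = 1/4 - \tilde{x}^{(t_i)}_{\{v,i\}} < 0$; excising these pairs from the sum only increases it, and restricts the bound to $\NFPrszero(Q_1, Q_2)$ with $j \in B_{3/4}$.

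Next I would translate this average-distance bound into the desired $\OP{Val}$ bound through a case analysis mirroring Lemma~\ref{lemma-non-forbidden-pairs-separated-in-Q-1-3}. A key preliminary is that $\OP{Fbd} \cap \NFPrszero(Q_1,Q_2) \subseteq E^{(t_i)}$, since if $\tilde{x}^{(t_{\{i,j\}})}_{\{i,j\}} = 1$ while $\tilde{x}^{(t_i)}_{\{i,j\}} < 1$, then LP monotonicity forces $t_{\{i,j\}} < t_i$. For $\{i,j\} \in \NFPrszero$ with $j \in B_{3/4}$: if $\{i,j\} \in E^{(t_i)}$, then $\OP{Val}_{\{i,j\}} \geq \tilde{x}^{(t_{\{i,j\}}+1)}_{\{i,j\}} \geq \tilde{x}^{(t_i)}_{\{i,j\}} \geq \min\{\tilde{x}^{(t_i)}_{\{v,j\}}, 1/2\} - \tilde{x}^{(t_i)}_{\{v,i\}}$ by monotonicity and the triangle inequality; if $\{i,j\} \in NE^{(t_i)}$ (automatically non-forbidden by the preliminary), then $\OP{Val}_{\{i,j\}} + 1/4 \geq (1 - \tilde{x}^{(t_i)}_{\{i,j\}}) + 1/4 \geq 5/4 - \tilde{x}^{(t_i)}_{\{v,i\}} - \tilde{x}^{(t_i)}_{\{v,j\}}$, and a split on whether $\tilde{x}^{(t_i)}_{\{v,j\}} \leq 1/2$ or $1/2 < \tilde{x}^{(t_i)}_{\{v,j\}} < 3/4$ (using $j \in B_{3/4}$) shows this is at least $\min\{\tilde{x}^{(t_i)}_{\{v,j\}}, 1/2\} - \tilde{x}^{(t_i)}_{\{v,i\}}$. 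For $\{i,j\} \in \NFPrszero$ with $j \in Q_2 \setminus B_{3/4}$, the triangle inequality yields $\tilde{x}^{(t_i)}_{\{i,j\}} > 1/4$, which gives $\OP{Val}_{\{i,j\}} \geq 1/4$ when $\{i,j\} \in E^{(t_i)}$, while for $\{i,j\} \in NE^{(t_i)}$ the bound $\OP{Val}_{\{i,j\}} + 1/4 \geq 1/4$ is trivial. Summing all contributions gives $\sum_{\NFPrszero} \OP{Val}_{\{i,j\}} + \frac{1}{4}|\NFPrszero \cap NE^{(t_i)}| \geq \frac{1}{4}|\NFPrszero|$, and since $\NFPrszero \cap NE^{(t_i)} \subseteq \NFPrszero \cap \OP{NFbd}$ (again by monotonicity), the claimed inequality follows.
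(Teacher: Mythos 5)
Your proposal is correct and follows essentially the same route as the paper: the same potential $d(p,q)$ with the observation that Condition~(\ref{ieq-algo-3-ratio-4-cutting-condition}) is exactly the nonnegativity of the $\OP{Cut}_1$ sum, the same resolution of the extreme-pair subtlety in $\OP{Cut}_2$ (the paper tracks these via counts $N(q)$ and the bound $x_{\{v,q\}}\ge 1/4$ inside a convex-combination of the two cut sums, which is arithmetically equivalent to your ``excise the negative terms'' argument), the same triangle-inequality case analysis relating $\OP{Val}_{\{i,j\}}$ to $\min\{x_{\{v,j\}},1/2\}-x_{\{v,i\}}$ with the $\ge 1/4$ bound for $j\notin B_{3/4}$, and the same final inclusion $\NFPrszero(Q_1,Q_2)\cap NE^{(t_i)}\subseteq\NFPrszero(Q_1,Q_2)\cap\OP{NFbd}$. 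The only differences are presentational (you split cases on $x_{\{v,j\}}\lessgtr 1/2$ where the paper splits on the shape of $Q_1$ and on $i\in B_{1/4}$), so no gap remains.
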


\medskip

\noindent
Combining Lemma~\ref{cor-non-forbidden-pairs-separated-orig-objective-L0} with~(\ref{ieq-ultrametric-violation-distance-overall-1}) and Lemma~\ref{lemma-opt-sols-nonforbidden-pair-value-L0}, we obtain that
\begin{align}
\sum_{u \neq v} \#_{\{u,v\}} \;\; 
\le \;\;\; & 4\cdot \sum_{u \neq v} \OP{Val}_{\{u,v\}} \; + \; \left| \OP{NFbd} \right|
\;\; \le \;\; 5\cdot \sum_{u \neq v} \OP{Val}_{\{u,v\}}.
\notag
\end{align}
This proves Theorem~\ref{thm-overall-approximation-ratio-L0}.

\medskip
\medskip

\section{Conclusion}

\label{sec-future-dir}

In this work, we present a new paradigm that advances the current understanding for hierarchical clustering in both conceptual and technical capacities.
A natural question following our results is whether the presented paradigm can be extended to other variations of hierarchical clustering problems with different objectives.
The technical problem boils down to the problem of finding cuts with prescribed properties regarding the average distances for the problem considered.

\smallskip

Another natural question is whether we can obtain better approximation result via improving the partitioning algorithm, e.g.,~\textsc{One-Half-Refine-Cut} in Algorithm~\ref{algo-3-ultrametric-violation-distance}.
The current partitioning algorithm can be interpreted as follows: sort the points by their distance from the pivot, and cut this sorted list either at distance $\epsilon$ or $1/2-\epsilon$. 
One could ask: \emph{what if we allow cutting to happen anywhere in the list?} We believe such an algorithm which partitions the ordered list of points into two consecutive sublists may be of  interest.

\newpage

\begin{appendix}

\section{Proofs of Technical Lemmas}

\subsection{Lemma~\ref{lemma-number-of-disagreements-forbidden-pairs-within-P} -- Forbidden Pairs within any $P$.}

\label{sec-appendix-forbidden-pairs-within-P}

Let $P \in \MC{P}^{(t)}$ be a cluster and recall that
\begin{itemize}
	\item
		$\#_{\OP{F}}(P)$ denotes the number of forbidden pairs clustered into $P$,

	\item
		$\Delta^{(t)}(P)$ denotes the gluer set of $P$, $\Delta^{(t)}_+(P) := P \cap \Delta^{(t)}(P)$ is referred to as the core of $P$,

	\item
		$\OP{Ext}^{(t)}(P) := P \setminus \Delta^{(t)}_+(P)$ denotes the extended part of $P$, and
		
	\item
		$L^{(t)}_1(P)$ denotes the set of elements in the $2/3$-vicinity of $P' \cap \Delta^{(\ell(t,P))}_+(P)$ within $P'$ over all $\vphantom{\text{\Large T}_\text{\Large T}} P' \in \OP{Candi}^{(\ell(t,P))}( \Delta^{(\ell(t,P))}(P) )$, where $\ell(t,P)$ is the index of the top-most layer up to the $t$-th layer at which $P$ is newly-created.
		
\end{itemize}

\noindent
We prove the following lemma.

\begin{lemma}[Restate of Lemma~\ref{lemma-number-of-disagreements-forbidden-pairs-within-P}] \label{lemma-number-of-disagreements-forbidden-pairs-within-P-restate}
For $\alpha := 0.3936$ and any $P \in \MC{P}^{(t)}$, we have
$$\#_{\OP{F}}(P) \; \le \; \frac{(2-\alpha)(1+\alpha)^2}{2(1-\alpha)^2} \cdot \beta \cdot \left| \OP{NFPrs}(\MC{Q}^{(t)}, P) \right|,$$
where $\beta := 0.8346$ and $\OP{NFPrs}(\MC{Q}^{(t)}, P) := \left\{ \; \{i,j\} \in \OP{NFPrs}(\MC{Q}^{(t)}) \; \colon \; i,j \in P \; \right\} $
denotes the set of pairs in $\OP{NFPrs}(\MC{Q}^{(t)})$ residing within $P$.
\end{lemma}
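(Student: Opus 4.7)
The plan is to strengthen the $\frac{(2-\alpha)(1+\alpha)^2}{2(1-\alpha)^2}$-bound sketched in Section~\ref{sec-counting-disagreements} to $\frac{\beta(2-\alpha)(1+\alpha)^2}{2(1-\alpha)^2}$ by refining only the previously-formed case; the newly-created case in the sketch already achieves the factor $\frac{2-\alpha}{2(1-\alpha)^2}$, which is strictly smaller than the target since $\beta(1+\alpha)^2 > 1$ for the parameter values at hand. The improvement will arise from the observation that the sketch uses two mutually pessimistic arguments---one to bound $\#_{\OP{F}}(P)$ from above and one to bound $|\OP{NFPrs}(\MC{Q}^{(t)}, P)|$ from below---without exploiting any joint structure between them, whereas a unified accounting can salvage an extra factor of $\beta$.

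For a previously-formed $P \in \MC{P}^{(t)}$, let $a := |\Delta^{(t)}_+(P)|$, $e := |\OP{Ext}^{(t)}(P)|$, and enumerate the pre-clusters of $\MC{Q}^{(t)}$ intersecting $\Delta^{(t)}_+(P)$ as $Q_1, \ldots, Q_k$ with $a_j := |Q_j \cap \Delta^{(t)}_+(P)|$, ordered so that $a_1 \ge a_2 \ge \cdots \ge a_k$. The condition $P \notin \OP{Candi}^{(t)}(Q_j)$ yields $B_j := |\OP{Ball}^{(t)}_{<2/3}(P \cap Q_j, P \cap \overline{Q_j})| \ge \alpha |P \cap Q_j|$ for every $j$. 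First I would bound $\#_{\OP{F}}(P) \le e\cdot(e/2 + a) \le \frac{\alpha(2-\alpha)}{2(1-\alpha)^2} a^2$ using Lemma~\ref{lemma-ieq-ext-delta-plus-with-k}. Next I would establish a refined lower bound on $|\OP{NFPrs}(\MC{Q}^{(t)}, P)|$ by summing two disjoint contributions: (i) the cross-core pairs $\sum_{i<j} a_i a_j$, which are non-forbidden because $\OP{diam}^{(t)}(\Delta^{(t)}_+(P)) < 1/3$; and (ii) pairs between $Q_1 \cap \Delta^{(t)}_+(P)$ and $B_1 \cap \OP{Ext}^{(t)}(P)$, which cross pre-clusters and have distance $< 2/3$ and therefore are non-forbidden. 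Finally I would parametrize by $\lambda := a_1 / a$ and optimize the resulting ratio over $\lambda$ together with the pre-cluster profile to pin down the worst-case constant.

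The hard part will be managing the overlap between the two pair-sources and choosing the accounting so that the result strictly improves upon both subcases of the sketch simultaneously. The ball $B_1$ contains $\Delta^{(t)}_+(P) \setminus Q_1$ by the $1/3$-diameter bound on the core, so naively combining the two sources would double-count pairs between $Q_1$ and the other cores; I therefore restrict contribution (ii) to $B_1 \cap \OP{Ext}^{(t)}(P)$ and lower-bound its size via $|B_1| - |\Delta^{(t)}_+(P) \setminus Q_1| \ge \alpha a_1 - (a - a_1)$, which becomes strictly positive precisely when $\lambda > 1/(1+\alpha)$---exactly the threshold of the sketch's binary dichotomy. The resulting parametric bound is piecewise-linear in $\lambda$ with a kink at $\lambda = 1/(1+\alpha)$, and optimizing across the kink yields the factor $\beta$; I would then finish by verifying, via direct numerical calculation at the worst-case $\lambda^\ast$, that $\beta = 0.8346$ is attained for $\alpha = 0.3936$.
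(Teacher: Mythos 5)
There is a genuine gap: your refinement of the previously-formed case cannot produce any factor $\beta<1$. Test it at the configuration $k=2$, $a_1=\lambda a$ with $\lambda=1/(1+\alpha)$, $a_2=a-a_1=\alpha a_1$, and $|B_1|=\alpha\,|P\cap Q_1|$ met with equality by $B_1=\Delta^{(t)}_+(P)\setminus Q_1$. Then your contribution~(ii) vanishes ($\alpha a_1-(a-a_1)=0$, which is exactly your kink), and contribution~(i) equals $a_1a_2=\frac{\alpha}{(1+\alpha)^2}a^2$, so your lower bound on $\bigl|\OP{NFPrs}(\MC{Q}^{(t)},P)\bigr|$ is $\frac{\alpha}{(1+\alpha)^2}\,|\Delta^{(t)}_+(P)|^2$ while your upper bound on $\#_{\OP{F}}(P)$ is $\frac{\alpha(2-\alpha)}{2(1-\alpha)^2}|\Delta^{(t)}_+(P)|^2$; the ratio is exactly $\frac{(2-\alpha)(1+\alpha)^2}{2(1-\alpha)^2}$, i.e.\ the unimproved constant. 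This is not an accident of your restriction to $B_1\cap\OP{Ext}^{(t)}(P)$: both subcases of the sketch are tight at this same boundary configuration, so any accounting that only interpolates between cross-core pairs and core-to-ball pairs bottoms out at the unimproved constant and cannot be ``optimized across the kink'' down to $\beta=0.8346<1$.

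The missing idea is that the gain must come from the \emph{upper} bound on $\#_{\OP{F}}(P)$ as well, not solely from a better lower bound on $\bigl|\OP{NFPrs}(\MC{Q}^{(t)},P)\bigr|$. By committing up front to $\#_{\OP{F}}(P)\le\frac{\alpha(2-\alpha)}{2(1-\alpha)^2}|\Delta^{(t)}_+(P)|^2$ you discard the intermediate bound $\#_{\OP{F}}(P)\le\frac{2-\alpha}{2(1-\alpha)^2}\,|L^{(t)}_1(P)|\cdot|\Delta^{(t)}_+(P)|$, which is the quantity the paper cases on: if $|L^{(t)}_1(P)|\le\beta\alpha\,|\Delta^{(t)}_+(P)|$ the factor $\beta$ is already won in the numerator, and otherwise $L^{(t)}_1(P)$ is large enough to supply a substantial set of additional non-forbidden cross-pre-cluster pairs against $\Delta^{(t)}_+(P)$, disjoint from both of your contributions. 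The value of $\beta$ is then determined by balancing these two regimes (together with further sub-cases, governed by the parameters $\eta$, $\gamma$, $\zeta$, on how $L^{(t)}_1(P)$ distributes relative to $Q_1$); it is not recoverable from the $\lambda$-parametrization you propose. Your observation that the newly-created case needs no improvement (since $\beta(1+\alpha)^2>1$) is correct and matches the paper.
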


\begin{proof}
Since $\OP{diam}^{(t)}(\Delta^{(t)}_+(P)) < 1/3$, forbidden pairs only occur between elements in $\OP{Ext}^{(t)}(P)$ and that in $P$.
Hence, we have
\begin{align}
\#_{\OP{F}}(P) \; 
\le & \;\; |\OP{Ext}^{(t)}(P)| \cdot \left( \frac{|\OP{Ext}^{(t)}(P)|}{2} + |\Delta^{(t)}_+(P)| \right) \notag \\[2pt]
\le & \;\; \frac{1}{1-\alpha} \cdot |L^{(t)}_1(P)| \cdot \frac{2-\alpha}{2(1-\alpha)} \cdot |\Delta^{(t)}_+(P)|,
\label{appendix-ieq-forbidden-same-P-orig-1} \\[4pt]
\le & \;\; \frac{2-\alpha}{2(1-\alpha)^2} \cdot \alpha \cdot |\Delta^{(t)}_+(P)|^2,
\label{appendix-ieq-forbidden-same-P-orig-2}
\end{align}
where we apply Lemma~\ref{lemma-ieq-ext-delta-plus-with-k} in the last two inequalities.

\smallskip

We have two cases to consider.
If $P$ is a newly-formed cluster at the $t$-th layer, then any pair between $\Delta^{(t)}_+(P)$ and $L^{(t)}_1(P)$ crosses different pre-clusters and is non-forbidden by the way $L^{(t)}_1(P)$ is defined.
Hence, these pairs are contained within $\OP{NFPrs}(\MC{Q}^{(t)}, P)$ and we have
$$|L^{(t)}_1(P)| \cdot |\Delta^{(t)}_+(P)| \; \le \; \left| \OP{NFPrs}(\MC{Q}^{(t)}, P) \right|.$$
Hence, from~(\ref{appendix-ieq-forbidden-same-P-orig-1}) we obtain
\begin{equation}
\#_{\OP{F}}(P) \; \le \; \frac{2-\alpha}{2(1-\alpha)^2} \cdot \left| \OP{NFPrs}(\MC{Q}^{(t)}, P) \right|.
\label{appendix-ieq-forbidden-same-P-final-1}
\end{equation}

\medskip

\begin{figure*}[h]
\centering
\includegraphics[scale=0.8]{fbd-pair-counting-1}
\end{figure*}

\smallskip

If $P$ is a previously-formed cluster at a lower layer, then consider the set of pre-clusters in $\MC{Q}^{(t)}$ that intersect the core set $\Delta^{(t)}_+(P)$.
Let $Q_1, \ldots, Q_k$ denote these pre-clusters and assume W.L.O.G. that $| Q_1 \cap \Delta^{(t)}_+(P) | = \max_{1\le j\le k} | Q_j \cap \Delta^{(t)}_+(P) |$.
Since $P \notin \OP{Candi}^{(t)}(Q_1)$, by Step~6 of Algorithm~\ref{algo-hier-clustering}, we have
\begin{equation}
B_1 \; := \; \left| \; \OP{Ball}^{(t)}_{<\frac{2}{3}} \left( \; P \cap Q_1, \; P \cap \overline{Q_1} \; \right) \; \right| \; \ge \; \alpha \cdot \left| P \cap Q_1 \right|.
\label{appendix-ieq-forbidden-same-P-concentration-condition}
\end{equation}
We have two subcases to consider regarding the relative size of $| Q_j \cap \Delta^{(t)}_+(P) |$ for all $j$.

\paragraph{Case~(i) -- Imbalanced in Size.}
If $\sum_{2 \le j \le k} | Q_j \cap \Delta^{(t)}_+(P) | \; < \; \alpha \cdot | Q_1 \cap \Delta^{(t)}_+(P) |$, then
\begin{equation}
|\Delta^{(t)}_+(P)|^2 \;\; \le \;\; (1+\alpha)^2 \cdot | Q_1 \cap \Delta^{(t)}_+(P) |^2.
\label{appendix-ieq-forbidden-same-P-imbalance-1}
\end{equation}
To bound $| Q_1 \cap \Delta^{(t)}_+(P) |^2$, further consider two subcases regarding the size of $L^{(t)}_1(P)$ and $\Delta^{(t)}_+(P)$.
\begin{enumerate}
	\item
		If $|L^{(t)}_1(P)| \; \le \; \beta \cdot \alpha |\Delta^{(t)}_+(P)|$, then Inequality~(\ref{appendix-ieq-forbidden-same-P-orig-1}) yields a good bound.
		Combined with~(\ref{appendix-ieq-forbidden-same-P-imbalance-1}), we have
		\begin{align*}
		\#_{\OP{F}}(P) \; 
		\le \;\; & \frac{2-\alpha}{ \; 2(1-\alpha)^2 \; } \cdot \beta \cdot \alpha \cdot |\Delta^{(t)}_+(P)|^2 \\[4pt]
		\le \;\; & \frac{2-\alpha}{ \; 2(1-\alpha)^2 \; } \cdot (1+\alpha)^2 \cdot \beta \cdot \alpha \cdot | Q_1 \cap \Delta^{(t)}_+(P) |^2 \\[4pt]
		\le \;\; & \frac{2-\alpha}{ \; 2(1-\alpha)^2 \; } \cdot (1+\alpha)^2 \cdot \beta \cdot | Q_1 \cap \Delta^{(t)}_+(P) | \cdot B_1,
		\end{align*}
		where we use Condition~(\ref{appendix-ieq-forbidden-same-P-concentration-condition}) in the last inequality.
		Since the pairs between $Q_1 \cap \Delta^{(t)}_+(P)$ and $\vphantom{\text{\LARGE T}_\text{\LARGE T}} \OP{Ball}^{(t)}_{<2/3} \left( \; P \cap Q_1, \; P \cap \overline{Q_1} \; \right)$ are non-forbidden, reside within $P$, and cross different pre-clusters, it follows that
		\begin{equation}
		\#_{\OP{F}}(P) \; \le \; \frac{2-\alpha}{2(1-\alpha)^2} \cdot (1+\alpha)^2 \cdot \beta \cdot \left| \OP{NFPrs}(\MC{Q}^{(t)}, P) \right|.
		\label{appendix-ieq-forbidden-same-P-final-1-a}
		\end{equation}
		
	\item
		If $|L^{(t)}_1(P)| \; \ge \; \beta \cdot \alpha |\Delta^{(t)}_+(P)|$, then a decent number of elements exist in the $2/3$-vicinity of $\Delta^{(t)}_+(P)$.
		Define for short the following notations. 
		\begin{itemize}
			\item
				$\ell_1 := |Q_1 \cap L^{(t)}_1(P)|$ \; and \; $\ell_2 := \sum_{2\le j \le k} |Q_j \cap L^{(t)}_1(P)|$,
		
				\smallskip
		
			\item
				$\ell := |L^{(t)}_1(P)| - (\ell_1 + \ell_2)$,
		
				\smallskip
		
			\item
				$G_1 := | Q_1 \cap \Delta^{(t)}_+(P) |$ \; and \; $G_2 := \sum_{2\le j\le k} | Q_j \cap \Delta^{(t)}_+(P) |$.
		\end{itemize}
		Further consider two subcases regarding the relative size of $Q_1 \cap \OP{Ext}^{(t)}(P)$ and $L^{(t)}_1(P)$.
		
		\medskip
		
		If $\ell_1 \ge \; \eta \cdot | L^{(t)}_1(P) |$, where $\eta := \frac{\; 1-\beta \;}{ \; \alpha\beta^2 \; } \approx 0.6034$, then $Q_1$ contains a large number of elements in addition to those in $Q_1 \cap \Delta^{(t)}_+(P)$. 
		
		\smallskip
		
		In particular, we have 
		$\ell_1 \; \ge \; \eta \cdot | L^{(t)}_1(P) | \; \ge \; \alpha \beta \eta \cdot |\Delta^{(t)}_+(P)| \; \ge \; \alpha \beta \eta \cdot G_1$.
		Applying~Condition~(\ref{appendix-ieq-forbidden-same-P-concentration-condition}), we obtain
		\begin{align*}
		B_1 \; 
		\ge \;\; & \alpha \cdot \left| P \cap Q_1 \right| 
		\; {\ge} \;\; \alpha \cdot ( \ell_1 + G_1 ) \; 
		\ge \;\;  \alpha\cdot \left( 1+\alpha \beta \eta \right) \cdot G_1.
		\end{align*} 
		Following~(\ref{appendix-ieq-forbidden-same-P-imbalance-1}) and that $G_1 := | Q_1 \cap \Delta^{(t)}_+(P) |$, we obtain
		\begin{align*}
			|\Delta^{(t)}_+(P)|^2 \;\; \le \;\; (1+\alpha)^2 \cdot G_1^2 \;\; 
			\le \;\; & \frac{ \; (1+\alpha)^2 }{ \; \alpha\cdot ( 1+\alpha \beta \eta ) \; }  \cdot G_1 \cdot B_1 \; = \; \frac{ \; (1+\alpha)^2 \cdot \beta }{ \; \alpha \; }  \cdot G_1 \cdot B_1,
		\end{align*}
		where in the last equality we plug in the setting of $\eta$ to obtain that $\frac{1}{ \; 1+\alpha\beta\eta \; } = \beta$.
		Since $G_1 \cdot B_1 \le \left| \OP{NFPrs}(\MC{Q}^{(t)}, P) \right|$, 
		from~(\ref{appendix-ieq-forbidden-same-P-orig-2}) we have
		\begin{align}
		\#_{\OP{F}}(P) \; 
		\le \; & \; \frac{2-\alpha}{2(1-\alpha)^2} \cdot (1+\alpha)^2 \cdot \beta \cdot \left| \OP{NFPrs}(\MC{Q}^{(t)}, P) \right|.
		\label{appendix-ieq-forbidden-same-P-final-1-b}
		\end{align}
		
		\medskip
		
		If $\ell_1 \le \; \eta \cdot | L^{(t)}_1(P) |$, then a decent fraction of elements in $L^{(t)}_1(P)$ lies outside $Q_1$ and is ready to pair up with elements in $Q_1 \cap \Delta^{(t)}_+(P)$.
		We have
		\begin{equation}
		\ell_2 + \ell \; \ge \; \left( 1-\eta \right) \cdot | L^{(t)}_1(P) | \; \ge \; \alpha \cdot \beta \cdot \left(1-\eta \right) \cdot |\Delta^{(t)}_+(P)|.
		\label{appendix-ieq-forbidden-same-P-final-1-c-0}
		\end{equation}
		Let $\gamma := \frac{2\alpha\beta\cdot (1-\eta)  }{ \; 1+\alpha\beta\cdot(1-\eta) \; } \approx 0.2305$.
		We have
		\begin{align}
			|\Delta^{(t)}_+(P)|^2 \;\; 
			= \;\; & \left( \; G_1 \; + \; G_2 \; \right) \cdot |\Delta^{(t)}_+(P)| \notag \\[4pt]
			= \;\; & \gamma \cdot G_1 \cdot |\Delta^{(t)}_+(P)| \; + \; \left( \; (1-\gamma) \cdot G_1 \; + \; G_2 \; \right) \cdot \left( \; G_1 \; + \; G_2 \; \right) \notag \\[4pt]
			\le \;\; & \frac{ \gamma }{ \; \alpha \beta (1-\eta ) \; } \cdot G_1 \cdot \left( \ell_2 + \ell \right) \; + \; \frac{ \gamma }{ \; \alpha \beta (1-\eta ) \; } \cdot G_1 \cdot G_2 \notag \\[6pt]
			& \hspace{0.2cm} + \; ( \; 1-\gamma \; ) \cdot G_1^2 \; + \; \left( \; 2- \gamma - \frac{ \gamma }{ \; \alpha \beta  (1-\eta ) \; } \; \right) \cdot G_1 \cdot G_2 \; + \;  G_2^2,
			\label{appendix-ieq-forbidden-same-P-final-1-c-1}
		\end{align}
		where in the last inequality we apply Inequality~(\ref{appendix-ieq-forbidden-same-P-final-1-c-0}).
		Note that by the setting of $\gamma$, for the coefficient of $G_1 \cdot G_2$ in the above, we have
		\begin{equation*}
			2- \gamma - \frac{ \gamma }{ \; \alpha \beta (1-\eta ) \; } \;\; \ge \;\; 0.
		\end{equation*} 
		Hence, all the coefficients in the right-hand-side of~(\ref{appendix-ieq-forbidden-same-P-final-1-c-1}) are nonnegative, and it gives a valid upper-bound of $|\Delta^{(t)}_+(P)|^2$ in terms of pairs counted in $G_1 \cdot \left( \ell_2 + \ell + G_2 \right)$, \; $G_1^2$, \; $G_1\cdot G_2$, and $G_2^2$.
		Since $G_2 < \alpha \cdot G_1$, from~(\ref{appendix-ieq-forbidden-same-P-final-1-c-1}) we have
		\begin{align}
			|\Delta^{(t)}_+(P)|^2 \;\; 
			\le \;\; & \frac{ \gamma }{ \; \alpha \beta (1-\eta ) \; } \cdot G_1 \cdot \left( \ell_2 + \ell + G_2 \right)  \notag \\[6pt]
			& \hspace{0.1cm} \; + \; \left( \; (1-\gamma) \; + \; \alpha \cdot \left( \; 2- \gamma - \frac{ \gamma }{ \; \alpha \beta (1-\eta ) \; } \; \right) \; + \; \alpha^2 \; \right) \cdot G_1^2  \notag \\[8pt]
			\le \;\; & \frac{ \gamma }{ \; \alpha \beta (1-\eta ) \; } \cdot G_1 \cdot \left( \ell_2 + \ell + G_2 \right)  \notag \\[8pt]
			& \hspace{0.3cm} \; + \; \frac{1}{\alpha} \cdot \left( \; (1+\alpha)^2 \; - \; \gamma \cdot (1+\alpha) \; - \; \frac{ \gamma }{ \; \beta (1-\eta ) \; } \; \right) \cdot G_1 \cdot B_1,
			\label{appendix-ieq-forbidden-same-P-final-1-c-2}
		\end{align}
		where in the last inequality we apply Condition~(\ref{appendix-ieq-forbidden-same-P-concentration-condition}).
		Since the pairs counted in $G_1 \cdot \left( \ell_2 + \ell + G_2 \right)$ and $G_1 \cdot B_1$ are non-forbidden, it follows that
		$$ \max\left\{ \vphantom{\text{\LARGE T}} \; G_1 \cdot \left( \ell_2 + \ell + G_2 \right), \; G_1 \cdot B_1 \; \right\} \; \le \; \left| \OP{NFPrs}(\MC{Q}^{(t)}, P) \right|.$$
		Combining the above with~(\ref{appendix-ieq-forbidden-same-P-final-1-c-2}), we obtain
		\begin{equation}
			\#_{\OP{F}}(P) \; 
			\le \; \frac{2-\alpha}{2(1-\alpha)^2} \cdot (1+\alpha)^2 \cdot \left( \; 1 - \frac{\gamma}{1+\alpha} \; \right) \cdot \left| \OP{NFPrs}(\MC{Q}^{(t)}, P) \right|,
			\label{appendix-ieq-forbidden-same-P-final-1-c}
		\end{equation}
		where $\frac{\gamma}{ \; 1+\alpha \; } = \frac{1}{1+\alpha} \cdot \frac{ 2\alpha\beta^2 + 2\beta -2 }{ \alpha\beta^2 + 2\beta -1 }$ by plugging in the setting for $\eta$. 
		\medskip
		
\end{enumerate}

\paragraph{Case~(ii) -- Balanced in Size.}
If $\vphantom{\text{\LARGE T}_\text{\Large T}} \sum_{2 \le j \le k} | Q_j \cap \Delta^{(t)}_+(P) | \; \ge \; \alpha \cdot | Q_1 \cap \Delta^{(t)}_+(P) |$, 
since $\alpha \le 1/2$, it follows that $Q_1, \ldots, Q_k$ can be partitioned into two groups $\MC{G}_1$ and $\MC{G}_2$ such that\footnote{Note that, one way is to start with two empty groups and consider $Q_j$ in non-ascending order of $| Q_j \cap \Delta^{(t)}_+(P) |$ for all $1\le j\le k$.
For each $Q_j$ considered, assign it to the group that has a smaller intersection with $\Delta^{(t)}(P)$ in size.}
$$\alpha \cdot \sum_{Q \in \MC{G}_1} | Q \cap \Delta^{(t)}_+(P) | \;\; \le \;\; \sum_{Q \in \MC{G}_2} | Q \cap \Delta^{(t)}_+(P) | \;\; \le \;\; \sum_{Q \in \MC{G}_1} | Q \cap \Delta^{(t)}_+(P) |.$$
Define for short the following notations.
\begin{itemize}
	\item
		$G_1 := \sum_{Q \in \MC{G}_1} | Q \cap \Delta^{(t)}_+(P) |$, and
		
	\item
		$G_2 := \sum_{Q \in \MC{G}_2} | Q \cap \Delta^{(t)}_+(P) |$.
\end{itemize}
We have
\begin{align}
	|\Delta^{(t)}_+(P)|^2 \;\; = \;\; ( G_1 + G_2 )^2 \;\; 
	= \;\; & \left( \; \frac{G_1}{G_2} + \frac{G_2}{G_1} + 2 \; \right) \cdot G_1\cdot G_2 \notag \\[4pt]
	\le \;\; & \left( \; \frac{1}{\alpha} + \alpha + 2 \; \right) \cdot G_1 \cdot G_2 \;\; = \;\; \frac{(1+\alpha)^2}{\alpha} \cdot G_1 \cdot G_2,
	\label{appendix-ieq-forbidden-same-P-2-1}
\end{align}
where the last inequality follows since, within the interval $[\alpha,1]$, the function $f(x) = x+1/x$ attains its maximum value at $x = \alpha$.

\smallskip

Further consider two subcases regarding the relative size of $L^{(t)}_1(P)$ and $|\Delta^{(t)}_+(P)|$.
\begin{enumerate}
	\item
		If $|L^{(t)}_1(P)| \; \le \; \beta \cdot \alpha |\Delta^{(t)}_+(P)|$, then following Inequality~(\ref{appendix-ieq-forbidden-same-P-orig-1}) and~(\ref{appendix-ieq-forbidden-same-P-2-1}) we have
		\begin{align}
			\#_{\OP{F}}(P) \; 
			\le \;\; & \frac{2-\alpha}{ \; 2(1-\alpha)^2 \; } \cdot \beta \cdot \alpha \cdot |\Delta^{(t)}_+(P)|^2  \notag \\[4pt]
			\le \;\; & \frac{2-\alpha}{2(1-\alpha)^2} \cdot (1+\alpha)^2 \cdot \beta \cdot \left| \OP{NFPrs}(\MC{Q}^{(t)}, P) \right|.
			\label{appendix-ieq-forbidden-same-P-final-2-a}
		\end{align}
		
	\item
		If $|L^{(t)}_1(P)| \; \ge \; \beta \cdot \alpha |\Delta^{(t)}_+(P)|$, then a decent number of pairs exists between $L^{(t)}_1(P)$ and $\Delta^{(t)}_+(P)$.
		In this regard, define the following notations. 
		\begin{itemize}
			\item
				$\ell_1 := \sum_{Q \in \MC{G}_1} |Q \cap L^{(t)}_1(P)|$, \; $\ell_2 := \sum_{Q \in \MC{G}_2} |Q \cap L^{(t)}_1(P)|$, 
		
				\smallskip
			
			\item
				$L := |L^{(t)}_1(P)|$, \; and \; $\ell := L - (\ell_1 + \ell_2)$.
		
		\end{itemize}
		
		Further define $G := G_1\cdot \left( \ell_2 + \ell \right) \; + \; G_2 \cdot \left( \ell_1 + \ell \right)$ to count pairs between $L^{(t)}_1(P)$ and $\Delta^{(t)}_+(P)$.
		We have
		\begin{align*}
			G  \;\; 
			\ge \;\; & \left( \frac{\ell_2}{L} + \frac{\ell}{L} \right)\cdot G_1 \cdot L \; + \; \alpha \cdot \left( \frac{\ell_1}{L} + \frac{\ell}{L} \right)\cdot G_1 \cdot L \\[6pt]
			\ge \;\; & \alpha \cdot G_1 \cdot L \;\; \ge \;\; \alpha^2 \cdot \beta \cdot G_1 \cdot |\Delta^{(t)}_+(P)|, 
		\end{align*}
		which the second inequality follows from the fact that the previous R.H.S. attains its minimum value when $\ell_1 = L$ and $\ell_2 = \ell = 0$.
		Since $$G_1 \cdot |\Delta^{(t)}_+(P)| \;\; = \;\; G_1^2 + G_1\cdot G_2 \;\; \ge \;\; 2 \cdot G_1 \cdot G_2,$$
		we obtain that 
		\begin{align}
			G_1 \cdot G_2 \;\; 
			= \;\; & \; \zeta \cdot G_1 \cdot G_2 \; + \; (1-\zeta) \cdot G_1 \cdot G_2 \notag \\[4pt]
			\le \;\; & \frac{\zeta}{ \; 2 \alpha^2 \beta \; }\cdot G \; + \; \left(1-\zeta \right) \cdot G_1 \cdot G_2, 
			\label{appendix-ieq-forbidden-same-P-final-2-b-1}
		\end{align}
		where $\zeta := \frac{2\alpha^2\cdot \beta}{1+2\alpha^2\cdot \beta}$.
		Note that, the setting of $\zeta$ satisfies that $$\frac{\zeta}{ \; 2 \alpha^2 \beta \;} \; = \; 1-\zeta.$$
		Combining~(\ref{appendix-ieq-forbidden-same-P-final-2-b-1}) with~(\ref{appendix-ieq-forbidden-same-P-2-1}), we obtain
		\begin{align*}
			|\Delta^{(t)}_+(P)|^2 \;\; 
			\le \;\;  \frac{(1+\alpha)^2}{\alpha} \cdot 
			{ \frac{1}{ \; 1+ 2 \alpha^2 \beta \;} }
			\cdot \left( \; G \; + \; G_1 \cdot G_2 \; \right).
		\end{align*} 
		Since $G$ and $G_1\cdot G_2$ count two disjoint sets of non-forbidden pairs in $\OP{NFPrs}(\MC{Q}^{(t)}, P)$, it follows from~(\ref{appendix-ieq-forbidden-same-P-orig-2}) that
		\begin{align}
			\#_{\OP{F}}(P) \; 
			\le \;\; & \frac{2-\alpha}{2(1-\alpha)^2} \cdot (1+\alpha)^2 \cdot \frac{1}{ \; 2 \alpha^2 \beta \; } \cdot \left| \OP{NFPrs}(\MC{Q}^{(t)}, P) \right|.
			\label{appendix-ieq-forbidden-same-P-final-2-b}
		\end{align}
		
		\medskip
		
\end{enumerate}

Combining Inequalities~(\ref{appendix-ieq-forbidden-same-P-final-1-a}),~(\ref{appendix-ieq-forbidden-same-P-final-1-b}),~(\ref{appendix-ieq-forbidden-same-P-final-1-c}),~(\ref{appendix-ieq-forbidden-same-P-final-2-a}), and~(\ref{appendix-ieq-forbidden-same-P-final-2-b}), we obtain
\begin{align*}
\#_{\OP{F}}(P) \; 
\le \;\; & \frac{2-\alpha}{2(1-\alpha)^2} \cdot (1+\alpha)^2 \cdot W \cdot \left| \OP{NFPrs}(\MC{Q}^{(t)}, P) \right|,
\end{align*}
where 
{$$W := \max\left\{ \;\; \beta, \;\; 1 - \frac{ 2\alpha\beta^2 + 2\beta -2 }{ \; (1+\alpha) (\alpha\beta^2 + 2\beta -1) \; } , \;\; \frac{1}{ \; 1+2\alpha^2\beta \; } \;\; \right\},$$}
which has a value of $0.8346$ with the setting $\alpha := 0.3936$ and $\beta := 0.8346$.
Since $W = \beta$ and $(1+\alpha)^2 \beta \ge 1$, the statement of this lemma follows.
\end{proof}

\medskip
\medskip

\subsection{Lemma~\ref{cor-non-forbidden-pairs-separated-orig-objective-L0} -- Average Distance of \NonForbidden Cut Pairs}

\label{sec-proof-avg-distance-1-4}

Consider the procedure {\sc One-Half-Refine-Cut} with input tuple $(P,v,x)$, where $x$ is a distance function, $P$ is a set with $\OP{diam}^{(x)}(P) \ge 1/2$, and $v \in P$ is the pivot with $\max_{u \in P} x_{\{v,u\}} \ge 1/2$.

\begin{algorithm*}[h]

\smallskip

\begin{algorithmic}[1]
\Procedure{One-Half-Refine-Cut}{$P, v, x$}
	\If {Condition~(\ref{ieq-algo-3-ratio-4-cutting-condition}) is satisfied for $(P,v)$} 
		\State \Return $\{ \; \{v\}, \; P\setminus \{v\} \; \}$. \quad \texttt{// make $v$ a singleton}
	\Else
		\State \Return $\{ \; \OP{Ball}^{(x)}_{<1/2}(v,P), \;\; P \setminus \OP{Ball}^{(x)}_{< 1/2}(v,P) \; \}$. \quad \texttt{// cut at $1/2-\epsilon$}
	\EndIf
\EndProcedure
\end{algorithmic}
\end{algorithm*}

Suppose that the procedure is called at the $t$-th layer and $(Q_1, Q_2)$ with $v \in Q_1$ is the pair returned by the procedure {\sc One-Half-Refine-Cut}.
Recall that we use
$$\OP{Val}_{\{u,v\}} := \left( \; 1-\tilde{x}^{(t_{\{u,v\}})}_{\{u,v\}} \; \right) \; + \; \tilde{x}^{(t_{\{u,v\}}+1)}_{\{u,v\}}$$
to denote the objective value the pair $\{u,v\}$ possesses, $\NFPrszero(Q_1, Q_2)$ to denote the set of pairs with distances strictly smaller than $1$ between $Q_1$ and $Q_2$, and $\OP{NFbd} := {\binom{V}{2}} \setminus \OP{Fbd}$ to denote the set of $\{u,v\}$ pairs with $\tilde{x}^{(t_{\{u,v\}})}_{\{u,v\}} < 1$.

\medskip

In this section we prove the following lemma.

\begin{lemma}[Restate of Lemma~\ref{cor-non-forbidden-pairs-separated-orig-objective-L0}] \label{lemma-non-forbidden-pairs-separated-orig-objective-L0-proof}
$$\sum_{ \substack{ \\[2pt] \{i,j\} \in \NFPrszero(Q_1, Q_2) } } \hspace{-4pt} \OP{Val}_{\{i,j\}} \;\; + \;\;\; \frac{1}{4}\cdot \left| \left\{ \; \substack{ \{i,j\} \in \NFPrszero(Q_1, Q_2), \\[2pt] \{i,j\} \in \OP{NFbd} } \; \right\} \right|
\;\; \ge \;\; \frac{1}{4}\cdot \left| \NFPrszero(Q_1, Q_2) \right|.$$
\end{lemma}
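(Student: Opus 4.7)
The plan is to mirror the three-step argument of Section~\ref{sec-proof-avg-distance-1-3} that culminates in Corollary~\ref{cor-non-forbidden-pairs-separated-orig-objective-restate}, adapted to the $1/2$-threshold regime of {\sc One-Half-Refine-Cut}. The parameter substitution is $1/3 \rightsquigarrow 1/2$, $1/2 \rightsquigarrow 3/4$, and $1/6 \rightsquigarrow 1/4$; the convex-combination trick, the triangle inequality, and the near-$v$ / far-$v$ case split all carry over structurally.

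First I would prove the per-cut bound. Define $d(p,q) := |\min\{x_{\{v,p\}}, 1/2\} - \min\{x_{\{v,q\}}, 1/2\}| - 1/4$, and set $k := |B_{1/2}|$, $m := |B_{3/4}\setminus B_{1/2}|$ where $B_r := \OP{Ball}^{(x)}_{<r}(v,P)$. For $\OP{Cut}_1 := (\{v\}, B_{3/4}\setminus\{v\})$, every pair lies in $\NFPrszero$ (distances from $v$ are $<3/4$), and direct computation gives $\sum d = \sum_{q \in B_{1/2}} x_{\{v,q\}} + (m-k+1)/4$, whose nonnegativity is exactly condition~(\ref{ieq-algo-3-ratio-4-cutting-condition}). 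For $\OP{Cut}_2 := (B_{1/2}, B_{3/4}\setminus B_{1/2})$, the all-pair sum is $km/4 - m\sum_{p \in B_{1/2}} x_{\{v,p\}}$; any pair we exclude (because $x^{(t)} = 1$) forces $x_{\{v,p\}} > 1/4$ via triangle inequality with $x_{\{v,q\}} < 3/4$, making $d = 1/4 - x_{\{v,p\}} < 0$, so restricting to $\NFPrszero$ only raises the sum. The convex combination of the two sums with weights $m/(m+1)$ and $1/(m+1)$ evaluates identically to $m/4 \geq 0$, so at least one of the two is nonnegative, matching the branch selected by the procedure.

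Combining Step~1 with the triangle-inequality bound $x^{(t)}_{\{i,j\}} \geq 3/4 - 1/2 = 1/4$ for pairs with $j \in Q_2 \setminus B_{3/4}$ yields the $\NFPrszero$-wide analog of Corollary~\ref{cor-non-forbidden-pairs-separated-avg-cost-1-3-overall}. I would then mirror Lemma~\ref{lemma-non-forbidden-pairs-separated-in-Q-1-3} to establish
$$\sum_{\{i,j\} \in \NFPrszero \cap E^{(t)}} x^{(t)}_{\{i,j\}} \; + \sum_{\{i,j\} \in \NFPrszero \cap NE^{(t)}} (1-x^{(t)}_{\{i,j\}}) \; + \; \tfrac{1}{4}|\NFPrszero \cap NE^{(t)}| \; \geq \; \tfrac{1}{4}|\NFPrszero(Q_1,Q_2)|.$$
The main obstacle is the non-edge case with $j \in Q_2 \cap B_{3/4}$: one needs $(1-x^{(t)}) + 1/4 \geq \min\{x_{\{v,j\}}, 1/2\} - x_{\{v,i\}}$. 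Using $x^{(t)} \leq x_{\{v,i\}} + x_{\{v,j\}}$, this reduces to $x_{\{v,j\}} + \min\{x_{\{v,j\}}, 1/2\} \leq 5/4$, which holds by a case split at $x_{\{v,j\}} = 1/2$ together with $x_{\{v,j\}} < 3/4$.

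Finally, I would translate to $\OP{Val}$ via monotonicity of $\tilde{x}$ across layers: for $\{i,j\} \in E^{(t)}$ (so $t_{\{i,j\}} < t$), $\OP{Val}_{\{i,j\}} \geq \tilde{x}^{(t_{\{i,j\}}+1)}_{\{i,j\}} \geq x^{(t)}_{\{i,j\}}$, and for $\{i,j\} \in NE^{(t)}$ (so $t_{\{i,j\}} \geq t$), $\OP{Val}_{\{i,j\}} \geq 1 - \tilde{x}^{(t_{\{i,j\}})}_{\{i,j\}} \geq 1 - x^{(t)}_{\{i,j\}}$. Moreover, every $\{i,j\} \in \NFPrszero \cap NE^{(t)}$ satisfies $\tilde{x}^{(t_{\{i,j\}})} \leq x^{(t)} < 1$, hence lies in $\OP{NFbd}$, so $|\NFPrszero \cap \OP{NFbd}| \geq |\NFPrszero \cap NE^{(t)}|$. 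Substituting these two bounds into the Step~3 inequality yields the lemma.
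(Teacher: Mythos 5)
Your proposal is correct and follows essentially the same three-step route as the paper's proof in Section~\ref{sec-proof-avg-distance-1-4}: the convex-combination bound over the two candidate cuts (Lemma~\ref{lemma-non-forbidden-pairs-separated-avg-cost-1-4}), the pairwise comparison against $\min\{x_{\{v,j\}},1/2\}-x_{\{v,i\}}$ (Lemma~\ref{lemma-non-forbidden-pairs-separated-in-Q-1-4-intermediate-objective}), and the translation to $\OP{Val}$ and $\OP{NFbd}$ via layer monotonicity. Your two local shortcuts --- discarding the extreme pairs from $\OP{Cut}_2$ because their $d$-values are negative, rather than tracking $N(q)$ and using $x_{\{v,q\}}\ge 1/4$ on the annulus as the paper does, and the unified reduction to $x_{\{v,j\}}+\min\{x_{\{v,j\}},1/2\}\le 5/4$ in place of the paper's singleton/ball and $B_{1/4}$/annulus case split --- are both valid and slightly cleaner than the paper's versions.
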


\medskip

For the ease of notation define 
$$B_{1/4} \; := \; \OP{Ball}^{(x)}_{<1/4}(v, P), \quad B_{1/2} \; := \; \OP{Ball}^{(x)}_{<1/2}(v, P), \quad B_{3/4} \; := \; \OP{Ball}^{(x)}_{<3/4}(v, P), $$
and $Q'_2 := Q_2 \cap B_{3/4}$.
To prove Lemma~\ref{lemma-non-forbidden-pairs-separated-orig-objective-L0-proof}, first we bound the cardinality of $\NFPrszero(Q_1, Q_2)$ in terms of the average distance of the pairs it contains.
The following lemma is the updated version of Lemma~\ref{lemma-non-forbidden-pairs-separated-avg-cost-1-3} for the procedure {\sc One-Half-Refine-Cut}.
\begin{lemma}
\label{lemma-non-forbidden-pairs-separated-avg-cost-1-4}
$$
\sum_{ \substack{ \{i,j\} \in \NFPrszero(Q_1, Q_2), \\[2pt] i \in Q_1, \; j \in Q'_2 } } \left( \; \min\left\{ \; x_{\{v,j\}}, \; \frac{1}{2} \;\right\} \; - \; x_{\{v,i\}} \; \right) \;\; \ge \;\;
\frac{1}{4} \; \cdot \; \left| \left\{ \; \substack{ \{i,j\} \in \NFPrszero(Q_1, Q_2), \\[2pt] i \in Q_1, \; j \in Q'_2 } \; \right\} \right|.
$$
\end{lemma}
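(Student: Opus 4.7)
The plan is to follow the template of Lemma~\ref{lemma-non-forbidden-pairs-separated-avg-cost-1-3} almost verbatim, rescaling $(1/3, 1/6, 1/2) \to (1/2, 1/4, 3/4)$ throughout, with one genuinely new subargument to handle the fact that not all pairs across the ball-cut lie in $\NFPrszero$.

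First I would define the discrepancy
\[
d(p,q) \;:=\; \bigl|\min\{x_{\{v,p\}}, 1/2\} - \min\{x_{\{v,q\}}, 1/2\}\bigr| - 1/4
\]
on elements of $B_{3/4}$. Since $Q_1 \subseteq B_{1/2}$ by construction of {\sc One-Half-Refine-Cut}, the inequality to prove is equivalent to $\sum d(i,j) \ge 0$ over non-extreme cut pairs with $i \in Q_1$, $j \in Q'_2$. The procedure returns one of
\[
\OP{Cut}_1 \;=\; (\{v\},\, B_{3/4}\setminus\{v\}), \qquad \OP{Cut}_2 \;=\; (B_{1/2},\, B_{3/4}\setminus B_{1/2}),
\]
and setting $k := |B_{1/2}|$, $m := |B_{3/4} \setminus B_{1/2}|$, a direct computation gives
\[
S_1 \;:=\; \sum_{\{p,q\}\in\NFPrszero(\OP{Cut}_1)} d(p,q) \;=\; \sum_{q \in B_{1/2}} x_{\{v,q\}} \;+\; \frac{m-k+1}{4}.
\]
All pairs in $\OP{Cut}_1$ satisfy $x_{\{v,q\}} < 3/4 < 1$ and so automatically lie in $\NFPrszero$; moreover, the inequality $S_1 \ge 0$ is exactly Condition~\eqref{ieq-algo-3-ratio-4-cutting-condition}, so the procedure returns $\OP{Cut}_1$ precisely when $S_1 \ge 0$, in which case the claim is immediate.

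The new ingredient, absent from the $1/3$-cut proof, concerns $\OP{Cut}_2$: a pair $(p,q)$ with $p \in B_{1/2}$, $q \in B_{3/4} \setminus B_{1/2}$ can have $x_{\{p,q\}}$ as large as $5/4$ and may fail to lie in $\NFPrszero$. The key observation is that any such excluded pair must satisfy $1 \le x_{\{p,q\}} \le x_{\{v,p\}} + x_{\{v,q\}} < x_{\{v,p\}} + 3/4$, i.e.\ $x_{\{v,p\}} > 1/4$, and therefore $d(p,q) = 1/4 - x_{\{v,p\}} < 0$. Dropping these negative contributions only raises the sum, giving
\[
\sum_{\{p,q\}\in\NFPrszero(\OP{Cut}_2)} d(p,q) \;\ge\; S_2 \;:=\; \sum_{\substack{p \in B_{1/2}\\ q \in B_{3/4}\setminus B_{1/2}}} d(p,q) \;=\; \frac{mk}{4} \;-\; m\sum_{p \in B_{1/2}} x_{\{v,p\}}.
\]

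Finally, mirroring the original proof, the convex combination $\frac{m}{m+1} S_1 + \frac{1}{m+1} S_2$ makes the $\sum x$ terms cancel and equals $\frac{m}{4}$. Writing $W$ for the maximum of the two $\NFPrszero$-sums, we thus get $W \ge \frac{m}{4} \ge 0$. When the procedure returns $\OP{Cut}_2$, Condition~\eqref{ieq-algo-3-ratio-4-cutting-condition} fails, i.e.\ $S_1 < 0$, so the maximum must be attained by $\sum_{\NFPrszero(\OP{Cut}_2)} d$, which is consequently $\ge 0$, as required. The only substantive obstacle is the triangle-inequality observation bounding $d(p,q)$ on pairs excluded from $\NFPrszero$; the remainder is a mechanical rescaling of the proof of Lemma~\ref{lemma-non-forbidden-pairs-separated-avg-cost-1-3}.
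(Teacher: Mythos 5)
Your proof is correct, and it follows the same overall architecture as the paper's: the same discrepancy function $d(p,q)$, the same reduction to showing that the maximum of the two cut-sums is nonnegative (you also correctly make explicit why the maximum suffices, via the equivalence of $S_1\ge 0$ with Condition~(\ref{ieq-algo-3-ratio-4-cutting-condition})), and the same convex combination $\frac{m}{m+1}S_1+\frac{1}{m+1}S_2$ yielding $\frac{m}{4}$. The one place where you genuinely diverge is the treatment of the extreme (excluded) pairs in $\OP{Cut}_2$. The paper splits $B_{1/2}$ into $B_{1/4}$ and $B_{1/2}\setminus B_{1/4}$, introduces the counts $N(q)$ of non-extreme partners, carries them through the convex combination, and only at the end absorbs the leftover term $\sum_q (m-N(q))\,x_{\{v,q\}}$ using $x_{\{v,q\}}\ge 1/4$ for $q\notin B_{1/4}$. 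You instead observe directly that any excluded pair $\{p,q\}$ forces $x_{\{v,p\}}>1/4$ by the triangle inequality, hence $d(p,q)=1/4-x_{\{v,p\}}<0$, so restricting the sum to $\NFPrszero(\OP{Cut}_2)$ can only increase it; this lets you work with the unrestricted sum $S_2=\frac{mk}{4}-m\sum_{p\in B_{1/2}}x_{\{v,p\}}$ and avoid the $N(q)$ bookkeeping entirely. The two arguments rest on the same underlying fact, but your packaging is shorter and arguably cleaner; the paper's version has the minor advantage of producing the exact identity for $\sum_{\NFPrszero(\OP{Cut}_2)}d$ rather than only a lower bound, though nothing downstream uses that precision.
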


\begin{proof}
For any $p,q \in B_{3/4}$, define $d(p,q) := | \min\{ x_{\{v, p\}}, 1/2\} - \min\{ x_{\{v,q\}}, 1/2\} | - 1/4$.
Since $Q_1 \subseteq B_{1/2}$, to prove this lemma, it suffices to prove that
\begin{equation}
\sum_{\{p,q\} \in \NFPrszero(Q_1,Q'_2)} d(p,q) \; \ge \; 0.
\label{ieq-proof-pre-clustering-avg-frlp-1-2-1-0}
\end{equation}
From the design of the procedure {\sc One-Half-Refine-Cut}, we have
$$(Q_1, Q'_2) \; \in \; 
\left\{ \; \begin{matrix}
\OP{Cut}_1 \; = \; \left( \; \{v\}, \; B_{3/4} \setminus \{v\} \; \right), \\[6pt]
\OP{Cut}_2 \; = \; \left( \; B_{1/2}, \; B_{3/4} \setminus B_{1/2} \; \right)
\end{matrix} \; \right\}.
$$
Hence, to prove~(\ref{ieq-proof-pre-clustering-avg-frlp-1-2-1-0}), it suffices to prove that
\begin{equation}
W \; := \; \max_{1\le i\le 2} \left\{ \; \sum_{ \{p,q\} \in \NFPrszero(\OP{Cut}_i) } d(p,q) \; \right\} \; \ge \; 0.
\label{ieq-proof-pre-clustering-avg-frlp-1-4-1}
\end{equation}
In the following we prove~(\ref{ieq-proof-pre-clustering-avg-frlp-1-4-1}).

\smallskip

Let $k := |B_{1/4}|$, $\ell := |B_{1/2} \setminus B_{1/4}|$, and $m := |B_{3/4} \setminus B_{1/2}|$.
For $\OP{Cut}_1$, any $q \in B_{3/4} \setminus \{v\}$ always forms a \nonforbidden pair with $v$.
Hence, we have
\begin{align}
\sum_{ \{p,q\} \in \NFPrszero(\OP{Cut}_1) } d(p,q) \; 
= \; & \sum_{ q \in B_{1/2} } x_{\{v,q\}} \; + \; \frac{1}{2} \cdot |B_{3/4} \setminus B_{1/2}| \; - \; \frac{1}{4} \cdot \left( |B_{3/4}| - 1 \right) \notag \\[2pt]
= \; & \sum_{ q \in B_{1/2} } x_{\{v,q\}} \; + \; \frac{1}{4} \cdot |B_{3/4}| \; - \; \frac{1}{2} \cdot |B_{1/2}| \; + \; \frac{1}{4}
\label{ieq-proof-pre-clustering-avg-frlp-algo-condition-L0} \\[2pt]
= \; & \sum_{ q \in B_{1/2} } x_{\{v,q\}} \; + \; \frac{1}{4} \cdot ( m-k-\ell +1).
\label{ieq-proof-pre-clustering-avg-frlp-2-L0}
\end{align}
Note that {the nonnegativity of}~(\ref{ieq-proof-pre-clustering-avg-frlp-algo-condition-L0}) is exactly the condition tested by the procedure {\sc One-Half-Refine-Cut}.

\smallskip

For $\OP{Cut}_2$, observe that any $p \in B_{1/4}$ and $q \in B_{3/4}$ always forms a \nonforbidden pair.
For any $p \in B_{1/2} \setminus B_{1/4}$, let $N(p)$ denote the number of elements in $B_{3/4} \setminus B_{1/2}$ that forms a \nonforbidden pair with $p$.
It follows that
\begin{align}
\sum_{ \{p,q\} \in \NFPrszero(\OP{Cut}_2) } d(p,q) \; 
= \;\; &  \frac{1}{4} \cdot | \NFPrszero(\OP{Cut}_2) | \; - \; m \cdot \sum_{ q \in B_{1/4} } x_{\{v,q\}} \; - \hspace{-1pt} \sum_{q \in B_{1/2} \setminus B_{1/4}} \hspace{-4pt} N(q) \cdot x_{\{v,q\}} .
\label{ieq-proof-pre-clustering-avg-frlp-3-L0}
\end{align}
From the definition of $W$ in~(\ref{ieq-proof-pre-clustering-avg-frlp-1-4-1}) with~(\ref{ieq-proof-pre-clustering-avg-frlp-2-L0}) and~(\ref{ieq-proof-pre-clustering-avg-frlp-3-L0}), we obtain
\begin{align}
W \; 
& \ge \;\; \frac{m}{m+1} \cdot \sum_{ \{p,q\} \in \NFPrszero(\OP{Cut}_1) } d(p,q) \; + \; \frac{1}{m+1} \cdot \sum_{ \{p,q\} \in \NFPrszero(\OP{Cut}_2) } d(p,q) \notag \\[4pt]
& = \;\; \frac{m}{m+1} \cdot \left( \; \sum_{ q \in B_{1/2} } x_{\{v,q\}} \; + \; \frac{1}{4} \cdot ( m-k-\ell +1) \; \right) \notag \\[2pt]
& \hspace{1.2cm} + \; \frac{1}{m+1} \cdot \left( \; \frac{1}{4} \cdot | \NFPrszero(\OP{Cut}_2) | \; - \; m \cdot \sum_{ q \in B_{1/4} } x_{\{v,q\}} \; - \hspace{-1pt} \sum_{q \in B_{1/2} \setminus B_{1/4}} N(q) \cdot x_{\{v,q\}} \; \right). \notag
\end{align}
Further plugging in $| \NFPrszero(\OP{Cut}_2) | \; = \; m\cdot k + \sum_{q \in B_{1/2} \setminus B_{1/4}} N(q)$, we obtain
\begin{align}
W \; 
& \ge \;\; \frac{1}{m+1} \cdot \left( \; \sum_{q \in B_{1/2} \setminus B_{1/4}} \hspace{-4pt} ( m - N(q)) \cdot x_{\{v,q\}} \; + \; \frac{1}{4} \cdot m( m - \ell + 1 ) \; + \; \frac{1}{4} \cdot \hspace{-4pt} \sum_{q \in B_{1/2} \setminus B_{1/4}} \hspace{-10pt} N(q) \; \right) \notag \\[2pt]
& \ge \;\; \frac{m}{4(m+1)}\cdot (m+1)
\; \ge \; 0, \notag
\end{align}
where in the second last inequality we use the fact that $x_{\{v,q\}} \ge 1/4$ for any $q \in B_{1/2} \setminus B_{1/4}$.
\end{proof}

\smallskip

Recall that $t$ is the layer at which the procedure {\sc One-Half-Refine-Cut} is called and the pair $(Q_1, Q_2)$ with $v \in Q_1$ is separated.
Also recall that $E^{(t)}$ and $NE^{(t)}$ denote the set of edge pairs and the set of non-edge pairs at the $t$-th layer.

\smallskip

We have the following lemma.

\begin{lemma} \label{lemma-non-forbidden-pairs-separated-in-Q-1-4-intermediate-objective}
\begin{align*}
& \sum_{ \substack{ \\[2pt] \{i,j\} \in \NFPrszero(Q_1, Q_2), \\[2pt] \{i,j\} \in E^{(t)} } } \hspace{-16pt} x_{\{i,j\}} \; + \hspace{-4pt} \sum_{ \substack{ \\[2pt] \{i,j\} \in \NFPrszero(Q_1, Q_2), \\[2pt] \{i,j\} \in NE^{(t)} } } \hspace{-10pt} \left(1-x_{\{i,j\}} \right) \; + \; \frac{1}{4}\cdot \left| \left\{ \; \substack{ \{i,j\} \in \NFPrszero(Q_1,Q'_2), \\[2pt] \{i,j\} \in NE^{(t)} } \; \right\} \right| \notag \\[6pt]
& \hspace{1.8cm} \ge \;\; 
\sum_{ \substack{ \{i,j\} \in \NFPrszero(Q_1, Q_2), \\[2pt] i \in Q_1, \; j \in Q'_2 } } \hspace{-4pt} \left( \; \min\left\{ \; x_{\{v,j\}}, \; \frac{1}{2} \;\right\} \; - \; x_{\{v,i\}} \; \right)
\; +  \sum_{ \substack{ \{i,j\} \in \NFPrszero(Q_1, Q_2), \\[2pt] i \in Q_1, \; j \in Q_2 \setminus Q'_2, \\[2pt] \{i,j\} \in E^{(t)} } } \hspace{-8pt} x_{\{i,j\}}.
\end{align*}
\end{lemma}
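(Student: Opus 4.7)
The plan is to prove the inequality pair by pair, in direct analogy with the proof of Lemma~\ref{lemma-non-forbidden-pairs-separated-in-Q-1-3} but with the threshold parameters raised from $(1/3,\,1/2,\,1/6)$ to $(1/2,\,3/4,\,1/4)$. Fix any $\{i,j\} \in \NFPrszero(Q_1, Q_2)$ with $i \in Q_1$. I will verify case by case --- according to whether $\{i,j\} \in E^{(t)}$ or $\{i,j\} \in NE^{(t)}$, and whether $j \in Q'_2$ or $j \in Q_2 \setminus Q'_2$ --- that the contribution of $\{i,j\}$ to the LHS dominates its contribution to the RHS; summing over all such pairs will then give the claimed inequality.

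When $\{i,j\} \in E^{(t)}$, the triangle inequality yields $x_{\{i,j\}} \ge x_{\{v,j\}} - x_{\{v,i\}} \ge \min\{x_{\{v,j\}},\,1/2\} - x_{\{v,i\}}$, which handles both subcases (for $j \in Q_2 \setminus Q'_2$ both sides contribute $x_{\{i,j\}}$; for $j \in Q'_2$ the LHS contributes $x_{\{i,j\}}$ and the RHS contributes $\min\{x_{\{v,j\}},\,1/2\} - x_{\{v,i\}}$). When $\{i,j\} \in NE^{(t)}$ with $j \in Q_2 \setminus Q'_2$, the LHS contains the nonnegative quantity $1 - x_{\{i,j\}}$ while the RHS contributes nothing, so there is nothing to check.

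The only nontrivial case is $\{i,j\} \in NE^{(t)}$ with $j \in Q'_2$, where the LHS contribution is $(1 - x_{\{i,j\}}) + \frac{1}{4}$ and the RHS contribution is $\min\{x_{\{v,j\}},\,1/2\} - x_{\{v,i\}}$. Rewriting, the required inequality is $\frac{5}{4} + x_{\{v,i\}} - x_{\{i,j\}} \ge \min\{x_{\{v,j\}},\,1/2\}$, which, via the triangle inequality $x_{\{i,j\}} - x_{\{v,i\}} \le x_{\{v,j\}}$, is implied by $\frac{5}{4} - x_{\{v,j\}} \ge \min\{x_{\{v,j\}},\,1/2\}$. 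If $x_{\{v,j\}} \le 1/2$ this is immediate from $\frac{5}{4} - \frac{1}{2} = \frac{3}{4} \ge \frac{1}{2} \ge x_{\{v,j\}}$; if $x_{\{v,j\}} > 1/2$, then using the strict bound $x_{\{v,j\}} < 3/4$ coming from $j \in Q'_2 \subseteq B_{3/4}$, we get $\frac{5}{4} - x_{\{v,j\}} > \frac{1}{2} = \min\{x_{\{v,j\}},\,1/2\}$. I anticipate no real obstacle: the entire argument is a per-pair bookkeeping parallel to the $1/3$-version, and the role of the $B_{3/4}$ radius is precisely to supply the slack that makes this final case close; note in particular that the naive triangle-inequality bound $x_{\{i,j\}} \le x_{\{v,i\}} + x_{\{v,j\}} < 5/4$ is not capped at $1$ in the above computation, which is fine because the inequality is rearranged so that the bound is applied only to $x_{\{i,j\}} - x_{\{v,i\}}$.
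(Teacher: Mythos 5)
Your proposal is correct and follows essentially the same per-pair comparison as the paper's proof, with the same top-level case split into edge pairs, non-edge pairs with $j \in Q'_2$, and non-edge pairs with $j \in Q_2 \setminus Q'_2$. The only difference is cosmetic: in the non-edge, $j \in Q'_2$ case the paper further distinguishes whether $Q_1$ is a singleton or $Q_1 = B_{1/2}$ and where $i$ lies, whereas your single application of $x_{\{i,j\}} \le x_{\{v,i\}} + x_{\{v,j\}}$ combined with $x_{\{v,j\}} < 3/4$ closes all of those subcases at once.
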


\begin{proof}
To prove this lemma, we compare both sides of the inequality for each 
$\{i,j\} \in \NFPrszero(Q_1, Q_2)$ with $i \in Q_1$.
\begin{enumerate}
	\item
		If $\{i,j\}$ is an edge pair in $E^{(t)}$, then {using the triangle inequality, we have $x_{\{i,j\}} \ge x_{\{v,j\}} - x_{\{v,i\}}$} and hence $x_{\{i,j\}} \ge \min \left\{ x_{\{v,j\}}, \; \frac{1}{2} \right\} - x_{\{v,i\}} $.
		
	\item
		If $\{i,j\}$ is a non-edge pair in $NE^{(t)}$ with $j \in Q'_2$, then further consider the following subcases.
		
		\begin{enumerate}
			\item
				If $Q_1$ is a singleton-cluster, then it follows that $x_{\{i,j\}} \le 3/4$ and 
				$$\left( \; 1-x_{\{i,j\}} \; \right) \; + \; \frac{1}{4} \;\; \ge \;\; \frac{1}{2} \;\; \ge \;\; \min\left\{ \; x_{\{v,j\}}, \; \frac{1}{2} \;\right\} \; - \; x_{\{v,i\}}.$$
				
			\item
				If $Q_1 = B_{1/2}$, then $j \in B_{3/4} \setminus B_{1/2}$. 
				
				\smallskip
				
				\begin{enumerate}
					\item
				If $i \in B_{1/4}$, then {the} triangle inequality implies that
				$$1 - x_{\{i,j\}} \; \ge \; 1 - x_{\{v,i\}} - x_{\{v,j\}} \; \ge \; \frac{1}{4} - x_{\{v,i\}}.$$
				On the other hand, $\min\{ x_{\{v,j\}}, \frac{1}{2} \} - x_{\{v,i\}} = \frac{1}{2} - x_{\{v,i\}}$.
				Hence, 
				$$\left( \; 1-x_{\{i,j\}} \; \right) \; + \; \frac{1}{4} \;\; \ge \;\; \min\left\{ \; x_{\{v,j\}}, \; \frac{1}{2} \;\right\} \; - \; x_{\{v,i\}}.$$
				
					\item
						If $i \in B_{1/2} \setminus B_{1/4}$, then
						$${\left( \; 1-x_{\{i,j\}} \; \right) \; + \; \frac{1}{4} \;\; \ge \;\;} \frac{1}{4} \;\; = \;\; \frac{1}{2} \; - \; \frac{1}{4} \;\; \ge \;\; \min\left\{ \; x_{\{v,j\}}, \frac{1}{2} \; \right\} \; - \; x_{\{v,i\}}.$$
				\end{enumerate}				
		\end{enumerate}
\end{enumerate}
We have compared all $\{i,j\}$ in the above case arguments.
This proves the lemma.
\end{proof}

\medskip

In the following we prove Lemma~\ref{lemma-non-forbidden-pairs-separated-orig-objective-L0-proof}.

\begin{proof}[Proof of Lemma~\ref{lemma-non-forbidden-pairs-separated-orig-objective-L0-proof}]
We have that
\begin{equation*}
\sum_{\{i,j\} \in \NFPrszero(Q_1, Q_2)} \OP{Val}_{\{i,j\}} \; \ge \; \sum_{ \substack{ \\[2pt] \{i,j\} \in \NFPrszero(Q_1, Q_2), \\[2pt] \{i,j\} \in E^{(t)} } } \hspace{-16pt} x_{\{i,j\}} \; + \hspace{-4pt} \sum_{ \substack{ \\[2pt] \{i,j\} \in \NFPrszero(Q_1, Q_2), \\[2pt] \{i,j\} \in NE^{(t)} } } \hspace{-10pt} \left(1-x_{\{i,j\}} \right)
\end{equation*}
by the definition of $E^{(t)}$, $NE^{(t)}$, and the non-decreasing property of $\tilde{x}_{\{i,j\}}$ over the layers.
Combining the above statement with Lemma~\ref{lemma-non-forbidden-pairs-separated-in-Q-1-4-intermediate-objective}, we obtain that
\begin{align}
& \sum_{ \substack{ \\[2pt] \{i,j\} \in \NFPrszero(Q_1, Q_2) } } \hspace{-16pt} \OP{Val}_{\{i,j\}} 
\; + \;\; 
\frac{1}{4} \cdot \left| \left\{ \substack{ \{i,j\} \in \NFPrszero(Q_1, Q'_2), \\[2pt] \{i,j\} \in NE^{(t)} } \right\} \right|  \notag \\[4pt]
& \hspace{0.8cm} \ge \;\; 
\sum_{ \substack{ \{i,j\} \in \NFPrszero(Q_1, Q_2), \\[2pt] i \in Q_1, \; j \in Q'_2 } } \hspace{-4pt} \left( \; \min\left\{ \; x_{\{v,j\}}, \; \frac{1}{2} \;\right\} \; - \; x_{\{v,i\}} \; \right)
\; +  \sum_{ \substack{ \{i,j\} \in \NFPrszero(Q_1, Q_2), \\[2pt] i \in Q_1, \; j \in Q_2 \setminus Q'_2, \\[2pt] \{i,j\} \in E^{(t)} } } \hspace{-8pt} x_{\{i,j\}}  \notag \\[4pt]
& \hspace{0.8cm} \ge \;\; 
\sum_{ \substack{ \{i,j\} \in \NFPrszero(Q_1, Q_2), \\[2pt] i \in Q_1, \; j \in Q'_2 } } \hspace{-4pt} \left( \; \min\left\{ \; x_{\{v,j\}}, \; \frac{1}{2} \;\right\} \; - \; x_{\{v,i\}} \; \right)
\; + \;\; 
\frac{1}{4} \cdot \left| \left\{ \substack{ \{i,j\} \in \NFPrszero(Q_1, Q_2), \\[2pt] i \in Q_1, \; j \in Q_2 \setminus Q'_2, \\[2pt] \{i,j\} \in E^{(t)} } \right\} \right|,
\label{ieq-lemma-non-forbidden-pairs-separated-orig-objective-L0-proof-2}
\end{align}
where in the last inequality we use the fact that $x_{\{i,j\}} \ge 1/4$ for any $\{i,j\} \in \NFPrszero(Q_1, Q_2)$ with $i \in Q_1, \; j \in Q_2 \setminus Q'_2$ by the design of the procedure {\sc One-Half-Refine-Cut}.

\smallskip

Adding
$\frac{1}{4}\cdot \left| \left\{ \; \substack{ \{i,j\} \in \NFPrszero(Q_1, Q_2), \\[2pt] i \in Q_1, \; j \in Q_2 \setminus Q'_2, \\[2pt] \{i,j\} \in NE^{(t)} } \; \right\}^{\vphantom{\text{\large TT}}} \right|$ {to} 
both sides of~(\ref{ieq-lemma-non-forbidden-pairs-separated-orig-objective-L0-proof-2}) and combining it with Lemma~\ref{lemma-non-forbidden-pairs-separated-avg-cost-1-4}, we obtain that
$$
\sum_{ \substack{ \\[2pt] \{i,j\} \in \NFPrszero(Q_1, Q_2) } } \hspace{-16pt} \OP{Val}_{\{i,j\}} \; + \;\; \frac{1}{4} \cdot \left| \left\{ \substack{ \{i,j\} \in \NFPrszero(Q_1, Q_2), \\[2pt] \{i,j\} \in NE^{(t)} } \right\} \right|
\;\; \ge \;\; 
\frac{1}{4} \cdot \left| \NFPrszero(Q_1, Q_2) \right|.
$$
The statement of this lemma follows from the above inequality and the fact that 
$$\left| \left\{ \substack{ \{i,j\} \in \NFPrszero(Q_1, Q_2), \\[2pt] \{i,j\} \in \OP{NFbd} } \right\} \right| \;\; \ge \;\; \left| \left\{ \substack{ \{i,j\} \in \NFPrszero(Q_1, Q_2), \\[2pt] \{i,j\} \in NE^{(t)} } \right\} \right|.$$
\end{proof}
\end{appendix}

\addcontentsline{toc}{section}{References}

\bibliographystyle{alpha}

\bibliography{lit.bib}

\newcommand{\etalchar}[1]{$^{#1}$}
\begin{thebibliography}{CdJdVL21}

\bibitem[ABF{\etalchar{+}}99]{DBLP:journals/siamcomp/AgarwalaBFPT99}
Richa Agarwala, Vineet Bafna, Martin Farach, Mike Paterson, and Mikkel Thorup.
\newblock On the approximability of numerical taxonomy (fitting distances by
  tree metrics).
\newblock {\em {SIAM} J. Comput.}, 28(3):1073--1085, 1999.
\newblock (Announced at SODA'96).

\bibitem[AC11]{DBLP:journals/siamcomp/AilonC11}
Nir Ailon and Moses Charikar.
\newblock Fitting tree metrics: Hierarchical clustering and phylogeny.
\newblock {\em {SIAM} J. Comput.}, 40(5):1275--1291, 2011.
\newblock (Announced at FOCS'05).

\bibitem[ACN08]{DBLP:journals/jacm/AilonCN08}
Nir Ailon, Moses Charikar, and Alantha Newman.
\newblock Aggregating inconsistent information: Ranking and clustering.
\newblock {\em J. {ACM}}, 55(5):23:1--23:27, 2008.
\newblock (Announced at STOC'05).

\bibitem[AHK{\etalchar{+}}09]{DBLP:conf/wsdm/AgrawalHKMT09}
Rakesh Agrawal, Alan Halverson, Krishnaram Kenthapadi, Nina Mishra, and
  Panayiotis Tsaparas.
\newblock Generating labels from clicks.
\newblock In {\em Proceedings of the 2nd ACM International Conference on Web
  Search and Data Mining}, pages 172--181, 2009.

\bibitem[ARS09]{DBLP:conf/icde/ArasuRS09}
Arvind Arasu, Christopher R{\'{e}}, and Dan Suciu.
\newblock Large-scale deduplication with constraints using dedupalog.
\newblock In {\em Proceedings of the 25th International Conference on Data
  Engineering, (ICDE)}, pages 952--963, 2009.

\bibitem[AW22]{DBLP:conf/innovations/Assadi022}
Sepehr Assadi and Chen Wang.
\newblock Sublinear time and space algorithms for correlation clustering via
  sparse-dense decompositions.
\newblock In {\em Proceedings of the 13th Conference on Innovations in
  Theoretical Computer Science (ITCS)}, volume 215 of {\em LIPIcs}, pages
  10:1--10:20, 2022.

\bibitem[BBC04]{DBLP:journals/ml/BansalBC04}
Nikhil Bansal, Avrim Blum, and Shuchi Chawla.
\newblock Correlation clustering.
\newblock {\em Mach. Learn.}, 56(1-3):89--113, 2004.
\newblock (Announced at FOCS'02).

\bibitem[BCMT22]{DBLP:conf/focs/BehnezhadCMT22}
Soheil Behnezhad, Moses Charikar, Weiyun Ma, and Li{-}Yang Tan.
\newblock Almost 3-approximate correlation clustering in constant rounds.
\newblock In {\em Proceedings of 63rd Annual IEEE Symposium on Foundations of
  Computer Science (FOCS)}, pages 720--731, 2022.

\bibitem[BCMT23]{DBLP:conf/soda/BehnezhadCMT23}
Soheil Behnezhad, Moses Charikar, Weiyun Ma, and Li{-}Yang Tan.
\newblock Single-pass streaming algorithms for correlation clustering.
\newblock In {\em Proceedings of the 2023 ACM-SIAM Symposium on Discrete
  Algorithms (SODA)}, pages 819--849, 2023.

\bibitem[BDH{\etalchar{+}}19]{DBLP:conf/focs/BehnezhadDHSS19}
Soheil Behnezhad, Mahsa Derakhshan, MohammadTaghi Hajiaghayi, Cliff Stein, and
  Madhu Sudan.
\newblock Fully dynamic maximal independent set with polylogarithmic update
  time.
\newblock In {\em 60th {IEEE} Annual Symposium on Foundations of Computer
  Science (FOCS)}, pages 382--405, 2019.

\bibitem[BGU13]{DBLP:journals/kais/BonchiGU13}
Francesco Bonchi, Aristides Gionis, and Antti Ukkonen.
\newblock Overlapping correlation clustering.
\newblock {\em Knowl. Inf. Syst.}, 35(1):1--32, 2013.

\bibitem[CCAL{\etalchar{+}}24]{10.1145/3618260.3649749}
Nairen Cao, Vincent Cohen-Addad, Euiwoong Lee, Shi Li, Alantha Newman, and
  Lukas Vogl.
\newblock Understanding the cluster linear program for correlation clustering.
\newblock In {\em Proceedings of the 56th Annual ACM Symposium on Theory of
  Computing (STOC)}, page 1605–1616, 2024.

\bibitem[CCAL{\etalchar{+}}25]{CCLLLNTVYZ25}
Nairen Cao, Vincent Cohen-Addad, Euiwoong Lee, Shi Li, David~Rasmussen Lolck,
  Alantha Newman, Mikkel Thorup, Lukas Vogl, Shuyi Yan, and Hanwen Zhang.
\newblock Solving the correlation cluster lp in nearly linear time.
\newblock To appear in the 57th Annual ACM Symposium on Theory of Computing
  (STOC), 2025.

\bibitem[CdJdVL21]{DBLP:conf/icml/Cohen-AddadVL21}
Vincent Cohen{-}Addad, R{\'{e}}mi de~Joannis~de Verclos, and Guillaume Lagarde.
\newblock Improving ultrametrics embeddings through coresets.
\newblock In {\em Proceedings of the 38th International Conference on Machine
  Learning (ICML)}, volume 139, pages 2060--2068, 2021.

\bibitem[CDK{\etalchar{+}}24]{DBLP:journals/jacm/CohenAddadDKPT24}
Vincent Cohen{-}Addad, Debarati Das, Evangelos Kipouridis, Nikos Parotsidis,
  and Mikkel Thorup.
\newblock Fitting distances by tree metrics minimizing the total error within a
  constant factor.
\newblock {\em J. {ACM}}, 71(2):10:1--10:41, 2024.
\newblock (Announced at FOCS'21).

\bibitem[CFLdM25]{DBLP:journals/siamcomp/CohenAddadFLM25}
Vincent Cohen{-}Addad, Chenglin Fan, Euiwoong Lee, and Arnaud de~Mesmay.
\newblock Fitting metrics and ultrametrics with minimum disagreements.
\newblock {\em {SIAM} J. Comput.}, 54(1):92--133, 2025.
\newblock (Announced at FOCS'22).

\bibitem[CG24]{DBLP:conf/soda/CharikarG24}
Moses Charikar and Ruiquan Gao.
\newblock Improved approximations for ultrametric violation distance.
\newblock In {\em Proceedings of the {ACM-SIAM} Symposium on Discrete
  Algorithms (SODA)}, pages 1704--1737, 2024.

\bibitem[CGW05]{DBLP:journals/jcss/CharikarGW05}
Moses Charikar, Venkatesan Guruswami, and Anthony Wirth.
\newblock Clustering with qualitative information.
\newblock {\em J. Comput. Syst. Sci.}, 71(3):360--383, 2005.
\newblock (Announced at FOCS'03).

\bibitem[CHS24]{DBLP:conf/soda/CaoHS24}
Nairen Cao, Shang{-}En Huang, and Hsin{-}Hao Su.
\newblock Breaking 3-factor approximation for correlation clustering in
  polylogarithmic rounds.
\newblock In {\em Proceedings of the {ACM-SIAM} Symposium on Discrete
  Algorithms (SODA)}, pages 4124--4154, 2024.

\bibitem[CKL20]{DBLP:conf/icml/Cohen-AddadSL20}
Vincent Cohen{-}Addad, {Karthik {C. S.}}, and Guillaume Lagarde.
\newblock On efficient low distortion ultrametric embedding.
\newblock In {\em Proceedings of the 37th International Conference on Machine
  Learning (ICML)}, volume 119, pages 2078--2088, 2020.

\bibitem[CKL{\etalchar{+}}24]{DBLP:conf/soda/CambusKLPU24}
M{\'{e}}lanie Cambus, Fabian Kuhn, Etna Lindy, Shreyas Pai, and Jara Uitto.
\newblock A {(3} + {\(\varepsilon\)})-approximate correlation clustering
  algorithm in dynamic streams.
\newblock In {\em Proceedings of the {ACM-SIAM} Symposium on Discrete
  Algorithms (SODA)}, pages 2861--2880, 2024.

\bibitem[CKP08]{DBLP:conf/www/ChakrabartiKP08}
Deepayan Chakrabarti, Ravi Kumar, and Kunal Punera.
\newblock A graph-theoretic approach to webpage segmentation.
\newblock In {\em Proceedings of the 17th International Conference on World
  Wide Web (WWW)}, pages 377--386, 2008.

\bibitem[CLLN23]{DBLP:conf/focs/Cohen-AddadL0N23}
Vincent Cohen{-}Addad, Euiwoong Lee, Shi Li, and Alantha Newman.
\newblock Handling correlated rounding error via preclustering: {A}
  1.73-approximation for correlation clustering.
\newblock In {\em 64th {IEEE} Annual Symposium on Foundations of Computer
  Science (FOCS)}, pages 1082--1104, 2023.

\bibitem[CLM{\etalchar{+}}21]{DBLP:conf/icml/Cohen-AddadLMNP21}
Vincent Cohen{-}Addad, Silvio Lattanzi, Slobodan Mitrovic, Ashkan
  Norouzi{-}Fard, Nikos Parotsidis, and Jakub Tarnawski.
\newblock Correlation clustering in constant many parallel rounds.
\newblock In {\em Proceedings of the 38th International Conference on Machine
  Learning (ICML)}, volume 139, pages 2069--2078, 2021.

\bibitem[CLN22]{DBLP:conf/focs/Cohen-AddadLN22}
Vincent Cohen{-}Addad, Euiwoong Lee, and Alantha Newman.
\newblock Correlation clustering with sherali-adams.
\newblock In {\em 63rd {IEEE} Annual Symposium on Foundations of Computer
  Science (FOCS)}, pages 651--661, 2022.

\bibitem[CLP{\etalchar{+}}24]{DBLP:conf/stoc/Cohen-AddadLPTY24}
Vincent Cohen{-}Addad, David~Rasmussen Lolck, Marcin Pilipczuk, Mikkel Thorup,
  Shuyi Yan, and Hanwen Zhang.
\newblock Combinatorial correlation clustering.
\newblock In {\em Proceedings of the 56th Annual {ACM} Symposium on Theory of
  Computing (STOC)}, pages 1617--1628, 2024.

\bibitem[CMSY15]{DBLP:conf/stoc/ChawlaMSY15}
Shuchi Chawla, Konstantin Makarychev, Tselil Schramm, and Grigory Yaroslavtsev.
\newblock Near optimal {LP} rounding algorithm for correlation clustering on
  complete and complete k-partite graphs.
\newblock In {\em Proceedings of the 47th Annual {ACM} on Symposium on Theory
  of Computing (STOC)}, pages 219--228, 2015.

\bibitem[CSE67]{9140552a-517f-3ea0-a658-de90bd35ae63}
L.~L. Cavalli-Sforza and A.~W.~F. Edwards.
\newblock Phylogenetic analysis: Models and estimation procedures.
\newblock {\em Evolution}, 21(3):550--570, 1967.

\bibitem[CSX12]{DBLP:conf/nips/ChenSX12}
Yudong Chen, Sujay Sanghavi, and Huan Xu.
\newblock Clustering sparse graphs.
\newblock In {\em Proceedings of the 26th Advances in Neural Information
  Processing Systems (NIPS)}, pages 2213--2221, 2012.

\bibitem[Day87]{Day87}
William~H.E. Day.
\newblock Computational complexity of inferring phylogenies from dissimilarity
  matrices.
\newblock {\em Bulletin of Mathematical Biology}, 49:461--467, 1987.

\bibitem[Dha04]{DBLP:conf/approx/Dhamdhere04}
Kedar Dhamdhere.
\newblock Approximating additive distortion of embeddings into line metrics.
\newblock In {\em 7th International Workshop on Approximation Algorithms for
  Combinatorial Optimization Problems (APPROX-RANDOM)}, volume 3122, pages
  96--104, 2004.

\bibitem[Far72]{Farris72}
James~S. Farris.
\newblock Estimating phylogenetic trees from distance matrices.
\newblock {\em The American Naturalist}, 106(951):645--668, 1972.

\bibitem[FGR{\etalchar{+}}20]{DBLP:conf/swat/FanGRSB20}
Chenglin Fan, Anna~C. Gilbert, Benjamin Raichel, Rishi Sonthalia, and
  Gregory~Van Buskirk.
\newblock Generalized metric repair on graphs.
\newblock In {\em 17th Scandinavian Symposium and Workshops on Algorithm Theory
  (SWAT)}, volume 162, pages 25:1--25:22, 2020.

\bibitem[FKW95]{DBLP:journals/algorithmica/FarachKW95}
Martin Farach, Sampath Kannan, and Tandy~J. Warnow.
\newblock A robust model for finding optimal evolutionary trees.
\newblock {\em Algorithmica}, 13(1/2):155--179, 1995.
\newblock (Announced at STOC'93).

\bibitem[FRB22]{DBLP:journals/algorithmica/FanRB22}
Chenglin Fan, Benjamin Raichel, and Gregory~Van Buskirk.
\newblock Metric violation distance: Hardness and approximation.
\newblock {\em Algorithmica}, 84(5):1441--1465, 2022.
\newblock (Announced at SODA'18).

\bibitem[GJ17]{DBLP:conf/allerton/GilbertJ17}
Anna~C. Gilbert and Lalit Jain.
\newblock If it ain't broke, don't fix it: Sparse metric repair.
\newblock In {\em 55th Annual Allerton Conference on Communication, Control,
  and Computing, (Allerton)}, pages 612--619, 2017.

\bibitem[GVY96]{DBLP:journals/siamcomp/GargVY96}
Naveen Garg, Vijay~V. Vazirani, and Mihalis Yannakakis.
\newblock Approximate max-flow min-(multi)cut theorems and their applications.
\newblock {\em {SIAM} J. Comput.}, 25(2):235--251, 1996.
\newblock (Announced at STOC'93).

\bibitem[HKM05]{DBLP:conf/approx/HarbKM05}
Boulos Harb, Sampath Kannan, and Andrew McGregor.
\newblock Approximating the best-fit tree under l\({}_{\mbox{p}}\) norms.
\newblock In {\em 8th International Workshop on Approximation Algorithms for
  Combinatorial Optimization Problems (APPROX-RANDOM)}, volume 3624, pages
  123--133, 2005.

\bibitem[KCMN08]{DBLP:journals/tkde/KalashnikovCMN08}
Dmitri~V. Kalashnikov, Zhaoqi Chen, Sharad Mehrotra, and Rabia Nuray{-}Turan.
\newblock Web people search via connection analysis.
\newblock {\em {IEEE} Trans. Knowl. Data Eng.}, 20(11):1550--1565, 2008.

\bibitem[MC23]{DBLP:conf/nips/MakarychevC23}
Konstantin Makarychev and Sayak Chakrabarty.
\newblock Single-pass pivot algorithm for correlation clustering. keep it
  simple!
\newblock In {\em 36th Advances in Neural Information Processing Systems
  (NIPS)}, 2023.

\bibitem[MWZ99]{DBLP:journals/jco/MaWZ99}
Bin Ma, Lusheng Wang, and Louxin Zhang.
\newblock Fitting distances by tree metrics with increment error.
\newblock {\em J. Comb. Optim.}, 3(2-3):213--225, 1999.

\bibitem[SS62]{SS62}
Peter~H.A. Sneath and Robert~R. Sokal.
\newblock Numerical taxonomy.
\newblock {\em Nature}, 193:855--860, 1962.

\bibitem[SS63]{SS63}
Peter~H.A. Sneath and Robert~R. Sokal.
\newblock {\em Numerical Taxonomy: The Principles and Practice of Numerical
  Classification}.
\newblock W H Freeman \& Co, 1963.

\bibitem[WSSB77]{WSSB77}
M.S. Waterman, T.F. Smith, M.~Singh, and W.A. Beyer.
\newblock Additive evolutionary trees.
\newblock {\em Journal of Theoretical Biology}, 64:199--213, 1977.

\end{thebibliography}
\end{document}